\newtheorem{Thm}{Theorem}
\newtheorem{Fact}{Fact}
\newtheorem{Lem}{Lemma}
\newtheorem{Cor}{Corollary}
\newtheorem{Claim}{Claim}
\DeclareMathOperator*{\argmin}{argmin}
\newcommand*{\herm}{^{\mkern-1.5mu\mathsf{H}}}
\newcommand*{\tran}{^{\mkern-1.5mu\mathsf{T}}}
\begin{document}

\title{\huge Dynamic Transmit Covariance Design in MIMO Fading Systems  With Unknown Channel Distributions and Inaccurate Channel State Information}

%\author{Hao Yu,~\IEEEmembership{Member,~IEEE,}
%        and~Michael~J.~Neely,~\IEEEmembership{Senior~Member,~IEEE}% <-this % stops a space
%\IEEEcompsocitemizethanks{\IEEEcompsocthanksitem Hao Yu and Michael J. Neely are with the Department of Electrical Engineering, University of Southern California, Los Angeles, CA 90089.}
%% note need leading \protect in front of \\ to get a newline within \thanks as
%% \\ is fragile and will error, could use \hfil\break instead.
%\thanks{This work was presented in part at IEEE International Conference on Computer Communications (INFOCOM), San Francisco, CA, USA, April, 2016 \cite{YuNeelyINFOCOM16}. This work is supported in part by the NSF grant CCF-0747525.}}

\author{Hao~Yu,~\IEEEmembership{Student Member,~IEEE,} 
	Michael~J.~Neely,~\IEEEmembership{Senior Member,~IEEE,}% <-this % stops a space
\thanks{Hao Yu  and Michael J. Neely are with the Department of Electrical Engineering, University of Southern California, Los Angeles, CA, USA.}% <-this % stops a space
\thanks{This work was presented in part at IEEE International Conference on Computer Communications (INFOCOM), San Francisco, CA, USA, April, 2016 \cite{YuNeely16INFOCOM}. This work is supported in part by the NSF grant CCF-0747525.}}

%\markboth{IEEE Transactions on Wireless Communications, to appear}%

\maketitle

\begin{abstract}
This paper considers dynamic  transmit covariance design in point-to-point MIMO fading systems with unknown channel state distributions and inaccurate channel state information  subject to both long term and short term power constraints. First, the case of instantaneous but possibly inaccurate channel state information at the transmitter (CSIT) is treated.  By extending the drift-plus-penalty technique, a dynamic transmit covariance policy is developed and is shown to approach optimality with an $O(\delta)$ gap, where $\delta$ is the inaccuracy measure of CSIT, regardless of the channel state distribution and without requiring knowledge of this distribution.  Next, the case of delayed and inaccurate channel state information is considered.  The optimal transmit covariance solution that maximizes the ergodic capacity is fundamentally different in this case, and a different online algorithm based on convex projections is developed.  The proposed algorithm for this delayed-CSIT case also has an $O(\delta)$ optimality gap, where $\delta$ is again the  inaccuracy measure of CSIT. 
\end{abstract}

\section{Introduction}\label{sec:introduction}

\IEEEPARstart{D}{uring} the past decade, the multiple-input multiple-output (MIMO) technique has been recognized as one of the most important techniques for increasing the capabilities of wireless communication systems.  In the wireless fading channel, where the channel changes over time, the problem of transmit covariance design  is to determine the transmit covariance of the transmitter to maximize the capacity subject to both long term and short term power constraints. It is often  reasonable to assume that instantaneous channel state information (CSI) is available at the receiver through training.  Most works on transmit covariance design in MIMO fading systems also assume that statistical information about the channel state, referred to as channel distribution information (CDI), is available at the transmitter. Under the assumption of perfect channel state information at the receiver (CSIR) and perfect channel distribution information at the transmitter (CDIT), prior work on transmit covariance design in point-to-point MIMO fading systems can be grouped into two categories: 
\begin{itemize}
\item Instantaneous channel state information at the transmitter: In the ideal case of perfect\footnote{In this paper, CSIT is said to be ``perfect" if it is both instantaneous (i.e., has no delay) and accurate.} CSIT, optimal transmit covariance design for MIMO links with both long term and short term power constraints is a water-filling solution \cite{Telatar99MIMOCapacity}. Computation of water-levels involves a one-dimensional integral equation for  fading channels with independent and identically distributed (i.i.d.) Rayleigh entries or a multi-dimensional integral equation for general fading channels \cite{Jayaweera03IT}.  The involved multi-dimensional integration equation is in general intractable and can only be approximately solved with numerical algorithms with huge complexity. MIMO fading systems with dynamic CSIT is considered in \cite{Vu07JSAC}.

\item No CSIT:  If CSIT is unavailable, the optimal transmit covariance design is in general still open. If the channel matrix has i.i.d. Rayleigh entries, then the optimal transmit covariance  is known to be the identity transmit covariance scaled to satisfy the power constraint \cite{Telatar99MIMOCapacity}. The optimal transmit covariance in MIMO fading channels with correlated Rayleigh entries is obtained in \cite{Jafar01ICC, Jorswieck04TWC}. The transmit covariance design in MIMO fading channels is further considered in \cite{Veeravalli05IT} under a more general channel correlation model.
\end{itemize}

These prior works rely on accurate CDIT and/or on restrictive channel distribution assumptions. 
It can be difficult to accurately estimate the CDI,  especially when there are complicated correlations between entries in the channel matrix. Solutions that base decisions on CDIT can be suboptimal due to mismatches. Work \cite{Polomar03IT} considers MIMO fading channels without CDIT and aims to find the transmit covariance to maximize the worst channel capacity using a game theoretical approach rather than solve the original ergodic capacity maximization problem. In contrast, the current paper proposes algorithms that do not require prior knowledge of the channel distribution,  yet perform arbitrarily close to the optimal value of the ergodic capacity maximization that can be achieved by having CDI knowledge.  

In time-division duplex (TDD) systems with symmetric wireless channels, the CSI can be measured directly at the transmitter using the unlink channel. However, in frequency-division duplex  (FDD) scenarios and other scenarios without channel symmetry, the CSI must be measured at the receiver, quantized, and reported back to the transmitter with a time delay \cite{book_FundamentalWireless}.  

Depending on the measurement delay in TDD systems or the overall channel acquisition delay in FDD systems, the CSIT can be instantaneous or delayed.  In general, the CSIT can also be inaccurate due to the measurement, quantization or feedback error.   This paper first considers the instantaneous (but possibly inaccurate) CSIT case and develops an algorithm that does not require CDIT.  This algorithm can achieve a utility within  $O(\delta)$ of the best utility that can be achieved with CDIT and perfect CSIT, where $\delta$ is the inaccuracy measure of CSIT. This further implies that accurate instantaneous CSIT (with $\delta= 0$) is almost as good as having both CDIT and accurate instantaneous CSIT.

Next, the case of delayed (but possibly inaccurate) CSIT is considered and a fundamentally different algorithm is developed for that case.  The latter algorithm again does not use CDIT, but 
achieves a utility within $O(\delta)$ of the best utility that can be achieved even with CDIT, where  $\delta$ is the inaccuracy measure of CSIT. This further implies that delayed but accurate CSIT (with $\delta= 0$) is almost as good as having CDIT.

\subsection{Related work and our contributions}

In the instantaneous (and possibly inaccurate) CSIT case, the proposed dynamic transmit covariance design extends the general drift-plus-penalty algorithm for stochastic network optimization  \cite{PhD_Thesis_Neely,book_Neely10} to deal with inaccurate observations of system states. In this MIMO context, the current paper shows the algorithm provides strong sample path convergence time guarantees.  The dynamic of the drift-plus-penalty algorithm is similar to that of the stochastic dual subgradient algorithm, although the optimality analysis and performance bounds are different.   The stochastic dual subgradient algorithm has been applied to optimization in wireless fading channels without CDI, e.g., downlink power scheduling in single antenna cellular systems \cite{Shroff06TWC},  power allocation in single antenna broadcast OFDM channels \cite{Rebeiro10TSP},  scheduling and resource allocation in random access channels \cite{HuRibeiro11TWC}, transmit covariance design in multi-carrier MIMO networks \cite{Liu09Mobihoc}.  

In the delayed (and possibly inaccurate) CSIT case, the situation is similar to the scenario of online convex optimization \cite{Zinkevich03ICML} except that we are unable to observe true history reward functions due to channel error. The proposed dynamic power allocation policy can be viewed as an online algorithm with inaccurate history information. The current paper analyzes the performance loss due to CSIT inaccuracy and provides strong sample path convergence time guarantees of this algorithm. The analysis in this MIMO context can be extended to more general online convex optimization with inaccurate history information.  Online optimization has been applied in power allocation in wireless fading channels without CDIT and with delayed and accurate CSIT, e.g., suboptimal online power allocation in single antenna single user channels \cite{Buchbinder09INFOCOM}, suboptimal online power allocation in single antenna multiple user channels \cite{Buchbinder10INFOCOM}.  Online  transmit covariance design in MIMO systems with inaccurate CSIT is also considered in recent works  \cite{Stiakogiannakis15arXiv, Mertikopoulos16TSP, Mertikopoulos16JSAC}. The online algorithms in \cite{Stiakogiannakis15arXiv, Mertikopoulos16TSP, Mertikopoulos16JSAC} follow either a matrix exponential learning scheme or an online projected gradient scheme. However, all of these works assume that the imperfect CSIT is unbiased, i.e., expected CSIT error conditional on observed previous CSIT is zero. This assumption of imperfect CSIT is suitable when modeling the CSIT measurement error or feedback error but cannot capture the CSI quantization error. In contrast, the current paper only requires that CSIT error is bounded. 

\section{Signal model and problem formulations}

\subsection{Signal model} \label{sec:signal-model}

Consider a point-to-point MIMO block fading channel with $N_{T}$ transmit antennas  and $N_{R}$ receive antennas. In a block fading channel model, the channel matrix remains constant at each block and changes from block to block in an independent and identically distributed (i.i.d.) manner. Throughout this paper, each block is also called a slot and is assigned an index $t \in \{0, 1, 2, \ldots\}$.  At each slot $t$, the received signal \cite{Telatar99MIMOCapacity} is described by
\begin{align*}
\mathbf{y}(t) = \mathbf{H}(t) \mathbf{x}(t) + \mathbf{z}(t)
\end{align*}
where $t \in \{0, 1, 2, \ldots\}$ is the time index, $\mathbf{z}(t) \in \mathbb{C}^{N_{R}}$ is the  additive noise vector, $\mathbf{x}(t) \in \mathbb{C}^{N_{T}}$ is the transmitted signal vector, $\mathbf{H}(t) \in \mathbb{C}^{N_{R}\times N_{T}}$ is the channel matrix, and $\mathbf{y}(t) \in \mathbb{C}^{N_{R}}$ is the received signal vector.  Assume that noise vectors $\mathbf{z}(t)$ are i.i.d. normalized circularly symmetric complex Gaussian random vectors with $\mathbb{E}[\mathbf{z}(t) \mathbf{z}\herm(t)] = \mathbf{I}_{N_{R}}$, where $\mathbf{I}_{N_{R}}$ denotes an $N_{R}\times N_{R}$ identity matrix.\footnote{If the size of the identity matrix is clear, we often simply write $\mathbf{I}$.} Note that channel matrices $\mathbf{H}(t)$ are i.i.d. across slot $t$ and have a fixed but arbitrary probability distribution, possibly one with 
correlations between entries of the matrix. Assume there is a constant $B>0$ such that $\Vert \mathbf{H}\Vert_{F} \leq B$ with probability one, where $\Vert \cdot\Vert_{F}$ denotes the Frobenius norm.\footnote{A bounded Frobenius norm always holds in the physical world 
because the  channel attenuates the signal. Particular models such as 
Rayleigh and Rician fading violate this assumption in order to have simpler distribution functions \cite{Bai03ComSurvey}.} Recall that the Frobenius norm of a complex $m\times n$ matrix $\mathbf{A} = (a_{ij})$ is 
\begin{equation} \label{eq:frobenius-def} 
 \Vert \mathbf{A}\Vert_{F}  = \mbox{$\sqrt{\sum_{i=1}^m \sum_{j=1}^n |a_{ij}|^2}$} = \sqrt{\text{tr}(\mathbf{A\herm}\mathbf{A})} 
\end{equation} 
where $\mathbf{A}\herm$ is the Hermitian transpose of $\mathbf{A}$ and $\text{tr}(\cdot)$ is the trace operator. 

Assume that the receiver can track $\mathbf{H}(t)$ exactly at each slot $t$ and hence has perfect CSIR.  In practice, CSIR is obtained by sending designed training sequences, also known as pilot sequences, which are commonly known to both the transmitter and the receiver, such that the channel matrix $\mathbf{H}(t)$ can be estimated at the receiver \cite{book_FundamentalWireless}. CSIT is obtained in different ways in different wireless systems. In TDD systems, the transmitter exploits channel reciprocity and use the measured uplink channel as approximated CSIT. In FDD systems, the receiver creates a quantized version of CSI, which is a function of $\mathbf{H}(t)$, and reports back to the transmitter after a certain amount of delay. In general, there are two possibilities of CSIT availabilities:
\begin{itemize}
\item {\bf Instantaneous CSIT Case}:  In TDD systems or FDD systems where the measurement, quantization and feedback delays are negligible with respect to the channel coherence time, an approximate version $\widetilde{\mathbf{H}}(t)$ for the true channel $\mathbf{H}(t)$ is known at the transmitter at each time slot $t$.
\item {\bf Delayed CSIT Case}: In FDD systems with a large CSIT acquisition delay, the transmitter only knows $\widetilde{\mathbf{H}}(t-1)$, which is an approximate version of channel $\mathbf{H}(t-1)$, and does not know $\mathbf{H}(t)$ at each time slot $t$.\footnote{In general, the dynamic transmit covariance design developed in this paper can be extended to deal with arbitrary CSIT acquisition delay as discussed in Section \ref{sec:delay-csit-extension}. For the simplicity of presentations, we assume the CSIT acquisition delay is always one slot in this paper.}  
\end{itemize}

In both cases, we assume the CSIT inaccuracy is bounded, i.e., there exists $\delta > 0$ such that $\Vert \widetilde{\mathbf{H}}(t) - \mathbf{H}(t)\Vert_{F} \leq \delta$ for all $t$. 

\subsection{Problem Formulation}
At each slot $t$, if the channel matrix is  $\mathbf{H}(t)$ and the transmit covariance is  $\mathbf{Q}(t)$, then the  MIMO capacity is given by \cite{Telatar99MIMOCapacity}:  
\[ \log\det(\mathbf{I} + \mathbf{H}(t)\mathbf{Q}(t) \mathbf{H}\herm(t)) \]
where $\det(\cdot)$ denotes the determinant operator of matrices. The (long term) average capacity\footnote{The expression $\mathbb{E}_{\mathbf{H}} \big[ \log\det(\mathbf{I} + \mathbf{H}\mathbf{Q} \mathbf{H}\herm) \big]$ is also known as the ergodic capacity. In fast fading channels where the channel coherence time is smaller than the codeword length, ergodic capacity can be attained if each codeword spans across sufficiently many channel blocks. In slow fading channels where the channel coherence time is larger than the codeword length, ergodic capacity can be attained by adapting both transmit covariances and data rates to the CSIT of each channel block (see \cite{book_Lau06} for related discussions). In slow fading channels, the ergodic capacity is essentially the long term average capacity since it is asymptotically equal to the average capacity of each channel block (by the law of large numbers).  Note that another concept ``outage capacity" is sometimes considered for slow fading channels when there is no rate adaptation and the data rate is constant regardless of channel realizations (In this case, the data rate can be larger than the block capacity for poor channel realizations such that ``outage"  occurs). In this paper, we have both transmit covariance design and rate adaptation; and hence consider ``ergodic capacity".} of the MIMO block fading channel \cite{book_WirelessCom} is given by 
$$ \mathbb{E}_{\mathbf{H}} \big[ \log\det(\mathbf{I} + \mathbf{H}\mathbf{Q} \mathbf{H}\herm) \big]$$
where $\mathbf{Q}$ can adapt to $\mathbf{H}$ when CSIT is available and is a constant matrix when CSIT is unavailable. Consider two types of power constraints at the transmitter: A long term average power constraint $\mathbb{E}_{\mathbf{H}}[ \text{tr}(\mathbf{Q})] \leq \bar{P}$ and a short term power constraint $\text{tr}(\mathbf{Q})\leq P$ enforced at each slot. The long term constraint arises from battery or energy limitations while the short term constraint is often due to hardware or regulation limitations. 

If CSIT is available, the problem is to choose $\mathbf{Q}$ as a (possibly random) function of the observed $\mathbf{H}$ to maximize the (long term) average capacity subject to both power constraints: 
\begin{align}
\max_{\mathbf{Q}(\mathbf{H})} \quad &  \mathbb{E}_{\mathbf{H}} \big[ \log\det(\mathbf{I} + \mathbf{H}\mathbf{Q}(\mathbf{H}) \mathbf{H}\herm) \big]  \label{eq:stochastic-with-csit-obj}\\
\text{s.t.} \quad  & \mathbb{E}_{\mathbf{H}}[\text{tr}(\mathbf{Q}(\mathbf{H}))] \leq \bar{P}, \label{eq:stochastic-with-csit-total-power-con}\\
			 & \mathbf{Q}(\mathbf{H}) \in \mathcal{Q}, \forall \mathbf{H}, \label{eq:stochastic-with-csit-set-con}
\end{align}
where $\mathcal{Q}$ is a set that enforces the short term power constraint: 
\begin{align} \label{eq:Q-set} 
\mathcal{Q} = \big\{\mathbf{Q}\in \mathbb{S}^{N_{T}}_{+}: \text{tr}(\mathbf{Q}) \leq P \big\}
\end{align} 
where  $\mathbb{S}^{N_{T}}_{+}$ denotes the $N_{T} \times N_{T}$ positive semidefinite matrix space.  To avoid trivialities, we assume that $P \geq \bar{P}$. In \eqref{eq:stochastic-with-csit-obj}-\eqref{eq:stochastic-with-csit-set-con}, we use notation $\mathbf{Q}(\mathbf{H})$ to emphasize that $\mathbf{Q}$ can depend on $\mathbf{H}$, i.e., adapt to channel realizations.  Under the long term power constraint, the optimal power allocation should be opportunistic, i.e., use more power over good channel realizations and less power over poor channel realizations. It is known that opportunistic power allocation provides a significant capacity gain in low SNR regimes and a marginal gain in high SNR regimes compared with fixed power allocation \cite{book_FeedbackWirelessComm}.

Without CSIT, the optimal  transmit covariance design problem is different, given as follows. 
\begin{align}
\max_{\mathbf{Q}} \quad &  \mathbb{E}_{\mathbf{H}} \big[ \log\det(\mathbf{I} + \mathbf{H}\mathbf{Q} \mathbf{H}\herm) \big]  \label{eq:stochastic-no-csit-obj}\\
\text{s.t.} \quad  & \mathbb{E}_{\mathbf{H}}[\text{tr}(\mathbf{Q})] \leq \bar{P},  \label{eq:stochastic-no-csit-total-power-con}\\
			 & \mathbf{Q} \in \mathcal{Q}, \label{eq:stochastic-no-csit-set-con}
\end{align}
where set $\mathcal{Q}$ is defined in \eqref{eq:Q-set}. Again assume $P\geq \bar{P}$. Since the instantaneous CSIT is unavailable, the transmit covariance cannot adapt  to $\mathbf{H}$.  By the convexity of this problem and Jensen's inequality, a randomized $\mathbf{Q}$ is useless. It suffices to consider a constant $\mathbf{Q}$.  Since $P\geq \bar{P}$, this implies the problem is equivalent to a problem that removes the constraint
\eqref{eq:stochastic-no-csit-total-power-con} and that changes the constraint 
\eqref{eq:stochastic-no-csit-set-con} to: 
\[ \mathbf{Q} \in \widetilde{\mathcal{Q}}= \{\mathbf{Q} \in \mathbb{S}_{+}^{N_T} : \text{tr}(\mathbf{Q}) \leq \bar{P}\} \]

The problems \eqref{eq:stochastic-with-csit-obj}-\eqref{eq:stochastic-with-csit-set-con}  and \eqref{eq:stochastic-no-csit-obj}-\eqref{eq:stochastic-no-csit-set-con} are fundamentally different and have different optimal objective function values. Most existing works \cite{Jayaweera03IT,Jafar01ICC, Jorswieck04TWC,Veeravalli05IT} on MIMO fading channels can be interpreted as solutions to either of the above two stochastic optimization under specific channel distributions. Moreover, those works require perfect channel distribution information (CDI). In this paper, the above two stochastic optimization problems are solved via dynamic algorithms that works for arbitrary channel distributions and does not require any CDI. The algorithms are different for the two cases, and use different techniques.

\section{Instantaneous CSIT case}
Consider the case of instantaneous but inaccurate CSIT where at each slot $t \in \{0, 1, 2,\ldots\}$, channel $\mathbf{H}(t)$ is unknown and only an approximate version $\widetilde{\mathbf{H}}(t)$ is known. In this case, the problem \eqref{eq:stochastic-with-csit-obj}-\eqref{eq:stochastic-with-csit-set-con} can be interpreted as a stochastic optimization problem where channel $\mathbf{H}(t)$ is the instantaneous system state and transmit covariance $\mathbf{Q}(t)$ is the control action at each slot $t$. This is similar to the scenario of stochastic optimization with i.i.d. time-varying  system states, where the decision maker chooses an action based on the observed instantaneous system state at each slot such that time average expected utility is maximized and the time average expected constraints are guaranteed. The {\it drift-plus-penalty (DPP)} technique from \cite{book_Neely10} is a mature framework to solve stochastic optimization without distribution information of system states.

This is different from the conventional stochastic optimization considered by the DPP technique because at each slot $t$, the true ``system state" $\mathbf{H}(t)$ is unavailable and only an approximate version $\widetilde{\mathbf{H}}(t)$ is known. Nevertheless, a modified version of the standard DPP algorithm is developed in Algorithm \ref{alg:with-csit}.

\begin{algorithm}
\caption{Dynamic Transmit Covariance Design with instantaneous CSIT}
\label{alg:with-csit}
Let $V>0$ be a constant parameter and $Z(0)=0$. At each time $t\in\{0,1,2,\ldots\}$, observe $\widetilde{\mathbf{H}}(t)$ and $Z(t)$.  Then do the following:
\begin{itemize}
\item Choose transmit covariance $\mathbf{Q}(t) \in \mathcal{Q}$ to solve : 
\[ \max_{\mathbf{Q}\in  \mathcal{Q}}\{V\log\det(\mathbf{I} + \widetilde{\mathbf{H}}(t) \mathbf{Q} \widetilde{\mathbf{H}}\herm(t) - Z(t) \text{tr}(\mathbf{Q})\}.  \]
\item Update $Z(t+1) = \max[0,Z(t) + \text{tr}(\mathbf{Q}(t)) - \bar{P}]$.
\end{itemize}
\end{algorithm}

In Algorithm \ref{alg:with-csit}, a \emph{virtual queue} $Z(t)$ with $Z(0)=0$ and with update $Z(t+1) = \max[0,Z(t) + \text{tr}(\mathbf{Q}(t)) - \bar{P}]$ is introduced to enforce the average power constraint \eqref{eq:stochastic-with-csit-total-power-con} and can be viewed as the ``queue backlog" of long term power constraint violations since it increases at slot $t$ if the power consumption at slot $t$ is larger than $\bar{P}$ and decreases otherwise. The next Lemma relates $Z(t)$ and the average power consumption. 

\begin{Lem} \label{lm:with-csit-constaint-and-queue-relation}
Under Algorithm \ref{alg:with-csit}, we have $$\frac{1}{t} \sum_{\tau=0}^{t-1} \text{tr}(\mathbf{Q}(\tau)) \leq \bar{P} + \frac{Z(t)}{t}, \quad \forall t>0.$$
\end{Lem}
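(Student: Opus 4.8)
The plan is to exploit the one-sided bound inherent in the projection (the maximum with zero) in the virtual queue update, then telescope. The only structural fact needed is that for any real number $a$ we have $\max[0,a] \geq a$, applied to the queue recursion.

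First I would rewrite the update rule. Since $Z(t+1) = \max[0, Z(t) + \text{tr}(\mathbf{Q}(t)) - \bar{P}]$ and the maximum of $0$ and any quantity is at least that quantity, it follows immediately that
\begin{align*}
Z(t+1) \geq Z(t) + \text{tr}(\mathbf{Q}(t)) - \bar{P}.
\end{align*}
Rearranging isolates the per-slot power excess in terms of the queue increment:
\begin{align*}
\text{tr}(\mathbf{Q}(\tau)) - \bar{P} \leq Z(\tau+1) - Z(\tau).
\end{align*}

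Next I would sum this inequality over $\tau \in \{0, 1, \ldots, t-1\}$. The right-hand side telescopes to $Z(t) - Z(0)$, and since $Z(0) = 0$ by initialization in Algorithm \ref{alg:with-csit}, the sum of increments is exactly $Z(t)$. This yields
\begin{align*}
\sum_{\tau=0}^{t-1} \text{tr}(\mathbf{Q}(\tau)) - t\bar{P} \leq Z(t).
\end{align*}
Dividing both sides by $t > 0$ and moving $\bar{P}$ to the right gives the claimed bound $\frac{1}{t}\sum_{\tau=0}^{t-1} \text{tr}(\mathbf{Q}(\tau)) \leq \bar{P} + \frac{Z(t)}{t}$.

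I do not expect any genuine obstacle here: the argument is a standard Lyapunov/virtual-queue telescoping identity, and the entire content is the observation that the clipping at zero can only make $Z(t+1)$ larger than the unclipped update, so it provides a valid lower bound on the queue increment. The lemma is really a structural statement that the time-average power consumption exceeds $\bar{P}$ only by the normalized queue backlog $Z(t)/t$, which sets up the later argument that showing $Z(t)/t \to 0$ (e.g., via a bound on $Z(t)$ growing sublinearly) certifies asymptotic feasibility of the long-term power constraint \eqref{eq:stochastic-with-csit-total-power-con}.
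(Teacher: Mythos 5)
Your proof is correct and follows the same route as the paper's: lower-bound the clipped update by its unclipped argument via $\max[0,a]\geq a$, rearrange to bound $\text{tr}(\mathbf{Q}(\tau))-\bar{P}$ by the queue increment, telescope over $\tau\in\{0,\ldots,t-1\}$ using $Z(0)=0$, and divide by $t$. No gaps; nothing further is needed.
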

\begin{IEEEproof}
Fix $t>0$. For all slots $\tau\in\{0,1,\ldots,t-1\}$, the update for $Z(\tau)$ satisfies $Z(\tau+1) = \max[0,Z(\tau) + \text{tr}(\mathbf{Q}(\tau)) - \bar{P}] \geq Z(\tau) +\text{tr}(\mathbf{Q}(\tau)) - \bar{P}$. Rearranging terms gives: $ \text{tr}(\mathbf{Q}(\tau))  \leq  \bar{P} + 
Z(\tau+1) -Z(\tau)$.  Summing over $\tau \in \{0, \ldots, t-1\}$ and dividing by factor $t$ gives: 
\begin{align*}
\frac{1}{t} \sum_{\tau=0}^{t-1} \text{tr}(\mathbf{Q}(\tau)) &\leq  \bar{P} + \frac{Z(t)-Z(0)}{t} \overset{(a)}{=} \bar{P} + \frac{Z(t)}{t}
\end{align*}
where (a) follows from $Z(t)=0$.
\end{IEEEproof}

For each slot $t \in \{0, 1, 2, \ldots\}$ define the \emph{reward} $R(t)$: 
\begin{equation} \label{eq:Rt} 
R(t) = \log\det(\mathbf{I} + \mathbf{H}(t)\mathbf{Q}(t) \mathbf{H}\herm(t)). 
\end{equation} 
Define $R^{\text{opt}}$ as the optimal average utility in \eqref{eq:stochastic-with-csit-obj}. The value $R^{\text{opt}}$ depends on the (unknown) distribution for $\mathbf{H}(t)$. Fix $\epsilon>0$ and define $V=\max\{\bar{P}^2, (P-\bar{P})^2\} /(2\epsilon)$.  If $\widetilde{\mathbf{H}}(t) = \mathbf{H}(t), \forall t$, regardless of the distribution of $\mathbf{H}(t)$, the standard DPP technique \cite{book_Neely10} ensures: 
\begin{align}
 &\frac{1}{t} \sum_{\tau=0}^{t-1} \mathbb{E}[R(\tau)] \geq R^{\text{opt}} - \epsilon, \qquad \forall t>0  \label{eq:book1}  \\
&\lim_{t\rightarrow\infty} \frac{1}{t}\sum_{\tau=0}^{t-1} \mathbb{E}[\text{tr}(\mathbf{Q}(\tau))] \leq \bar{P} \label{eq:book2} 
\end{align} 
This holds for arbitrarily small values of $\epsilon>0$, and so the algorithm comes arbitrarily close to optimality.  However, the above is true only if  $\widetilde{\mathbf{H}}(t) = \mathbf{H}(t), \forall t$.

The development and analysis of Algorithm \ref{alg:with-csit} extends the DPP technique in two aspects:
\begin{itemize}
\item At each slot $t$, the standard drift-plus-penalty technique requires accurate ``system state"  $\mathbf{H}(t)$ and cannot deal with inaccurate ``system state" $\widetilde{\mathbf{H}}(t)$.  In contrast, Algorithm \ref{alg:with-csit} works with $\widetilde{\mathbf{H}}(t)$. The next subsections show that the performance of Algorithm \ref{alg:with-csit} degrades linearly with respect to CSIT  inaccuracy measure  $\delta$. If $\delta =0$, then \eqref{eq:book1} is recovered.
\item Inequality \eqref{eq:book2} only treats
infinite horizon time average expected power.  The next subsections show that Algorithm \ref{alg:with-csit} can guarantee $\frac{1}{t}\sum_{\tau=0}^{t-1} \text{tr}(\mathbf{Q}(\tau)) \leq \bar{P} + \frac{(B+\delta)^2\max\{\bar{P}^2, (P-\bar{P})^2\} + 2\epsilon (P-\bar{P})}{2\epsilon t}$ for all $t>0$. This sample path guarantee on average power consumption is much stronger than \eqref{eq:book2}.  In fact, \eqref{eq:book2} is recovered by taking expectation and taking limit $t\rightarrow \infty$.
\end{itemize}
\subsection{Transmit covariance updates in Algorithm \ref{alg:with-csit}}

This subsection shows the  $\mathbf{Q}(t)$ selection in Algorithm \ref{alg:with-csit} 
has an (almost) closed-form solution.  
The convex program involved in the transmit covariance update of Algorithm \ref{alg:with-csit} is in the form
\begin{align}
\max_{\mathbf{Q}} \quad & \log\det(\mathbf{I} + \mathbf{H} \mathbf{Q} \mathbf{H}\herm) - \frac{Z}{V}\text{tr}(\mathbf{Q})  \label{eq:with-csit-primal-opt-obj}\\
\text{s.t.} \quad  & \text{tr}(\mathbf{Q}) \leq P  \label{eq:with-csit-primal-opt-trace}\\
			 & \mathbf{Q} \in \mathbb{S}^{N_{T}}_+ \label{eq:with-csit-primal-opt-sdp}
\end{align}
This convex program is similar to the conventional problem of transmit covariance design with a deterministic channel $\mathbf{H}$, except that objective \eqref{eq:with-csit-primal-opt-obj} has an additional penalty term $-(Z/V) \text{tr}(\mathbf{Q})$. It is well known that, without this penalty term,  the solution is to diagonalize the channel matrix and allocate power over eigen-modes according to a water-filling technique \cite{Telatar99MIMOCapacity}. The next lemma summarizes that the optimal solution to problem \eqref{eq:with-csit-primal-opt-obj}-\eqref{eq:with-csit-primal-opt-sdp} has a similar structure.

\begin{Lem}\label{lm:with-csit-transmit-covaraince-update}
Consider the SVD $\mathbf{H}\herm\mathbf{H} = \mathbf{U}\herm\mathbf{\Sigma}\mathbf{U}$, where $\mathbf{U}$ is a unitary matrix and $\boldsymbol{\Sigma}$ is a diagonal matrix with non-negative entries $\sigma_{1}, \ldots, \sigma_{N_{T}}$.  
Then the optimal solution to \eqref{eq:with-csit-primal-opt-obj}-\eqref{eq:with-csit-primal-opt-sdp}
is given by $\mathbf{Q}^* = \mathbf{U}\herm\mathbf{\Theta}^*\mathbf{U}$, where $\mathbf{\Theta}^*$ is a diagonal matrix with entries $\theta_1^*, \ldots, \theta_{N_T}^*$ given by: 
\[ \theta_i^* = \max\big[0, \frac{1}{\mu^* + Z/V} - \frac{1}{\sigma_i}\big], \quad \forall i \in \{1, \ldots, N_T\}, \]
where $\mu^{\ast}$ is chosen such that $\sum_{i=1}^{N_T} \theta_i^\ast \leq  P$, $\mu^{\ast} \geq 0$ and $\mu^{\ast} \big[\sum_{i=1}^{N_{T}} \theta_i^\ast -  P\big] = 0$.  The exact $\mu^*$ can be determined using Algorithm \ref{alg:with-csit-primal-opt-exact-alg} with complexity $O(N_T\log N_T)$.
\end{Lem}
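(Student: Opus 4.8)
The plan is to reduce the matrix program \eqref{eq:with-csit-primal-opt-obj}--\eqref{eq:with-csit-primal-opt-sdp} to a scalar water-filling problem by exploiting the eigenstructure of $\mathbf{H}\herm\mathbf{H}$, and then solve the resulting separable concave program via its KKT conditions. First I would record that the program is convex: the map $\mathbf{Q}\mapsto \log\det(\mathbf{I} + \mathbf{H}\mathbf{Q}\mathbf{H}\herm)$ is concave on $\mathbb{S}^{N_{T}}_+$, the penalty $-(Z/V)\text{tr}(\mathbf{Q})$ is linear, and the feasible set is convex with a strictly feasible point (e.g. $\mathbf{Q}=\mathbf{0}$), so Slater's condition holds and the KKT conditions are necessary and sufficient. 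Using Sylvester's determinant identity $\det(\mathbf{I}+\mathbf{A}\mathbf{B})=\det(\mathbf{I}+\mathbf{B}\mathbf{A})$ together with the substitution $\mathbf{H}\herm\mathbf{H}=\mathbf{U}\herm\mathbf{\Sigma}\mathbf{U}$, the log-det term rewrites as $\log\det(\mathbf{I}+\mathbf{\Sigma}^{1/2}\mathbf{U}\mathbf{Q}\mathbf{U}\herm\mathbf{\Sigma}^{1/2})$, so the problem depends on $\mathbf{H}$ only through $\mathbf{U}$ and $\mathbf{\Sigma}$.

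The key step, and the one I expect to be the main obstacle, is showing that an optimal $\mathbf{Q}^*$ is diagonalized by $\mathbf{U}$, i.e. $\mathbf{Q}^*=\mathbf{U}\herm\mathbf{\Theta}^*\mathbf{U}$ for some diagonal $\mathbf{\Theta}^*\succeq \mathbf{0}$. To this end I would take an arbitrary feasible $\mathbf{Q}$, set $\widetilde{\mathbf{Q}}=\mathbf{U}\mathbf{Q}\mathbf{U}\herm$, and compare it with the matrix $\widehat{\mathbf{Q}}=\mathbf{U}\herm\,\text{diag}(\widetilde{\mathbf{Q}})\,\mathbf{U}$ obtained by retaining only the diagonal entries of $\widetilde{\mathbf{Q}}$. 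Since $\mathbf{I}+\mathbf{\Sigma}^{1/2}\widetilde{\mathbf{Q}}\mathbf{\Sigma}^{1/2}$ is positive definite with $(i,i)$ entry $1+\sigma_i\widetilde{Q}_{ii}$, Hadamard's inequality gives $\det(\mathbf{I}+\mathbf{\Sigma}^{1/2}\widetilde{\mathbf{Q}}\mathbf{\Sigma}^{1/2})\leq \prod_{i=1}^{N_T}(1+\sigma_i\widetilde{Q}_{ii})$, and the right-hand side is precisely the log-det value attained by $\widehat{\mathbf{Q}}$. Because the diagonal entries of the PSD matrix $\widetilde{\mathbf{Q}}$ are nonnegative and diagonalization leaves the trace unchanged, $\widehat{\mathbf{Q}}$ is feasible with the same penalty $(Z/V)\text{tr}(\cdot)$ and an objective value at least as large. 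Hence it suffices to optimize over matrices $\mathbf{U}\herm\mathbf{\Theta}\mathbf{U}$ with $\mathbf{\Theta}=\text{diag}(\theta_1,\ldots,\theta_{N_T})$, $\theta_i\geq 0$.

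Finally I would substitute this form to reduce the program to the separable scalar problem of maximizing $\sum_{i=1}^{N_T}\log(1+\sigma_i\theta_i)-(Z/V)\sum_{i=1}^{N_T}\theta_i$ subject to $\sum_{i=1}^{N_T}\theta_i\leq P$ and $\theta_i\geq 0$. Introducing a multiplier $\mu\geq 0$ for the trace constraint, the stationarity condition $\frac{\sigma_i}{1+\sigma_i\theta_i}\leq \mu+Z/V$ (with equality whenever $\theta_i>0$) yields $\theta_i^*=\max[0,\frac{1}{\mu^*+Z/V}-\frac{1}{\sigma_i}]$, while primal feasibility $\sum_i\theta_i^*\leq P$, dual feasibility $\mu^*\geq 0$, and complementary slackness $\mu^*[\sum_i\theta_i^*-P]=0$ pin down $\mu^*$; the convention $1/\sigma_i=+\infty$ for $\sigma_i=0$ correctly forces $\theta_i^*=0$ on zero-gain modes. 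The remaining task is purely one-dimensional, namely locating the active water level $\mu^*$ by sorting the $\sigma_i$, which is exactly what Algorithm \ref{alg:with-csit-primal-opt-exact-alg} performs in $O(N_T\log N_T)$ time, and I would cite that algorithm rather than reprove its correctness here.
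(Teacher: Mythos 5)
Your proposal is correct and follows essentially the same route as the paper's proof in Appendix \ref{app:pf-lm-with-csit-transmit-covaraince-update}: the determinant identity and change of variables $\widetilde{\mathbf{Q}}=\mathbf{U}\mathbf{Q}\mathbf{U}\herm$, Hadamard's inequality to justify restricting to diagonal $\mathbf{\Theta}$, and the KKT analysis of the resulting separable scalar program yielding the shifted water-filling level $\mu^*+Z/V$. The only addition beyond the paper is your explicit remark on the $\sigma_i=0$ convention, which is a harmless clarification.
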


\begin{IEEEproof}
The proof is a simple extension of the classical proof for the optimal transmit covariance in deterministic MIMO channels, e.g. Section 3.2 in \cite{Telatar99MIMOCapacity}, to deal with the additional penalty term $-(Z/V) \text{tr}(\mathbf{Q})$. See Appendix \ref{app:pf-lm-with-csit-transmit-covaraince-update} for a complete proof.
\end{IEEEproof}

\begin{algorithm} 
\caption{ Algorithm  to solve problem \eqref{eq:with-csit-primal-opt-obj}-\eqref{eq:with-csit-primal-opt-sdp}} 
\label{alg:with-csit-primal-opt-exact-alg}
\begin{enumerate}
\item Check if $\sum_{i=1}^{N_{T}} \max\{0,\frac{1}{Z/V} - \frac{1}{\sigma_{i}}\} \leq P$ holds. If yes, let $\mu^\ast = 0$ and $\theta_{i}^{\ast} =\max\{0,\frac{1}{Z/V} - \frac{1}{\sigma_{i}}\}, \forall i\in\{1,2,\ldots,N_{T}\}$ and terminate the algorithm; else, continue to the next step.
\item Sort all $\sigma_{i}, \in\{1,2,\ldots,N_{T}\}$ in a decreasing order $\pi$ such that $\sigma_{\pi(1)} \geq \sigma_{\pi(2)} \geq \cdots \geq \sigma_{\pi(N_{T})}$. Define $S_{0}=0$.
\item For $i=1$ to $N_{T}$
\begin{itemize}
\item  Let $S_{i} = S_{i-1} + \frac{1}{\sigma_{\pi(i)}}$. Let $\mu^{\ast} = \frac{i}{S_{i} +P} - (Z/V)$.
\item  If $\mu^{\ast}\geq 0$, $\frac{1}{\mu^{\ast} + Z/V} - \frac{1}{\sigma_{\pi(i)}}>0$ and $\frac{1}{\mu^{\ast} + Z/V} - \frac{1}{\sigma_{\pi(i+1)}} \leq 0$, then terminate the loop; else, continue to the next iteration in the loop. 
\end{itemize}
\item Let $\theta_{i}^{\ast} =\max\big[0,\frac{1}{\mu^\ast+ Z/V} - \frac{1}{\sigma_{i}}\big] , \forall i\in\{1,2,\ldots,N_{T}\}$ and terminate the algorithm.
\end{enumerate}
\end{algorithm}
The complexity of Algorithm \ref{alg:with-csit-primal-opt-exact-alg} is dominated by the sorting of all $\sigma_{i}$ in step (2). Recall that the water-filling solution of power allocation in multiple parallel channels can also be found by an exact algorithm (see Section 6 in \cite{Palomar03JSAC}), which is similar to Algorithm \ref{alg:with-csit-primal-opt-exact-alg}. The main difference is that Algorithm \ref{alg:with-csit-primal-opt-exact-alg} has a first step to verify if $\mu^{\ast} = 0$. This is because unlike the power allocation in multiple parallel channels, where the optimal solution always uses full power, the optimal solution to problem \eqref{eq:with-csit-primal-opt-obj}-\eqref{eq:with-csit-primal-opt-sdp} may not use full power for large $Z$ due to the penalty term $-(Z/V) \text{tr}(\mathbf{Q})$ in objective \eqref{eq:with-csit-primal-opt-obj}.

\subsection{Performance of Algorithm \ref{alg:with-csit}} 
Define a Lyapunov function $L(t) = \frac{1}{2} Z^2(t)$ and its corresponding Lyapunov drift $\Delta(t) =  L(t+1) - L(t)$. The expression $-\Delta(t)+ VR(t)$ is called the DPP expression.   The analysis of the standard drift-plus-penalty (DPP) algorithm with accurate ``system states" relies on an upper bound of the DPP expression in terms of $R^{\text{opt}}$ \cite{book_Neely10}.  The performance analysis of Algorithm \ref{alg:with-csit}, which can be viewed as a DPP algorithm based on inaccurate ``system states", requires a new bound of the DPP expression in Lemma \ref{lm:with-csit-dpp-bound} and a new deterministic bound of virtual queue $Z(t)$ in Lemma  \ref{lm:with-csit-deterministic-queue-bound}.   
\begin{Lem}\label{lm:with-csit-dpp-bound}
Under Algorithm \ref{alg:with-csit}, we have
\begin{align*}
&-\mathbb{E}[\Delta(t)] +V\mathbb{E}[R(t)] \\
\geq& VR^{\text{opt}} - \frac{1}{2}\max\{\bar{P}^2, (P-\bar{P})^2\}  - 2VP\sqrt{N_T} (2B+\delta)\delta,
\end{align*}
where $B, \delta,N_T, P$ and $\bar{P}$ are defined in Section \ref{sec:signal-model}; and $R^{\text{opt}}$ is the optimal average utility in problem \eqref{eq:stochastic-with-csit-obj}-\eqref{eq:stochastic-with-csit-set-con}.
\end{Lem}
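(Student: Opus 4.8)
The plan is to follow the drift-plus-penalty template but to absorb all mismatch between the surrogate channel $\widetilde{\mathbf{H}}(t)$ used by the algorithm and the true channel $\mathbf{H}(t)$ into a single channel-perturbation estimate that is applied exactly twice. First I would establish the standard one-slot drift bound. Squaring the update rule and using $(\max[0,a])^2\le a^2$ yields $\Delta(t)\le \tfrac12\big(\text{tr}(\mathbf{Q}(t))-\bar P\big)^2 + Z(t)\big(\text{tr}(\mathbf{Q}(t))-\bar P\big)$, and since $\mathbf{Q}(t)\in\mathcal{Q}$ forces $0\le\text{tr}(\mathbf{Q}(t))\le P$ the quadratic term is at most $\tfrac12\max\{\bar P^2,(P-\bar P)^2\}$. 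Thus
\begin{align*}
-\Delta(t)+VR(t)\ge -\tfrac12\max\{\bar P^2,(P-\bar P)^2\}+VR(t)-Z(t)\big(\text{tr}(\mathbf{Q}(t))-\bar P\big).
\end{align*}

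Next I would use the defining property of Algorithm \ref{alg:with-csit}: $\mathbf{Q}(t)$ maximizes $V\log\det(\mathbf{I}+\widetilde{\mathbf{H}}(t)\mathbf{Q}\widetilde{\mathbf{H}}\herm(t))-Z(t)\text{tr}(\mathbf{Q})$ over $\mathbf{Q}\in\mathcal{Q}$. Let $\mathbf{Q}^{\ast}(\cdot)$ denote an optimal stationary policy for problem \eqref{eq:stochastic-with-csit-obj}--\eqref{eq:stochastic-with-csit-set-con}, so that $\mathbf{Q}^{\ast}(\mathbf{H}(t))\in\mathcal{Q}$, $\mathbb{E}[\text{tr}(\mathbf{Q}^{\ast}(\mathbf{H}))]\le\bar P$, and $\mathbb{E}[\log\det(\mathbf{I}+\mathbf{H}\mathbf{Q}^{\ast}(\mathbf{H})\mathbf{H}\herm)]=R^{\text{opt}}$. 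Feeding the admissible competitor $\mathbf{Q}^{\ast}(\mathbf{H}(t))$ into the per-slot maximization gives, with $\widetilde R(t):=\log\det(\mathbf{I}+\widetilde{\mathbf{H}}(t)\mathbf{Q}(t)\widetilde{\mathbf{H}}\herm(t))$,
\begin{align*}
V\widetilde R(t)-Z(t)\text{tr}(\mathbf{Q}(t))\ge V\log\det\!\big(\mathbf{I}+\widetilde{\mathbf{H}}(t)\mathbf{Q}^{\ast}(\mathbf{H}(t))\widetilde{\mathbf{H}}\herm(t)\big)-Z(t)\text{tr}\big(\mathbf{Q}^{\ast}(\mathbf{H}(t))\big).
\end{align*}

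The crux, and the step I expect to be the main obstacle, is a Lipschitz-type transfer bound between the two channels: for every $\mathbf{Q}\in\mathcal{Q}$,
\begin{align*}
\big|\log\det(\mathbf{I}+\mathbf{H}(t)\mathbf{Q}\mathbf{H}\herm(t))-\log\det(\mathbf{I}+\widetilde{\mathbf{H}}(t)\mathbf{Q}\widetilde{\mathbf{H}}\herm(t))\big|\le P\sqrt{N_T}\,(2B+\delta)\,\delta.
\end{align*}
I would prove this by invoking concavity of $\mathbf{X}\mapsto\log\det(\mathbf{I}+\mathbf{X})$ on positive semidefinite $\mathbf{X}$, which gives a first-order bound in which the relevant inverse $(\mathbf{I}+\cdot)^{-1}$ has spectral norm at most one, then expanding $\mathbf{H}\mathbf{Q}\mathbf{H}\herm-\widetilde{\mathbf{H}}\mathbf{Q}\widetilde{\mathbf{H}}\herm=\mathbf{E}\mathbf{Q}\mathbf{H}\herm+\widetilde{\mathbf{H}}\mathbf{Q}\mathbf{E}\herm$ with $\mathbf{E}=\mathbf{H}(t)-\widetilde{\mathbf{H}}(t)$, $\Vert\mathbf{E}\Vert_F\le\delta$, and finally applying trace and Cauchy--Schwarz inequalities together with $\text{tr}(\mathbf{Q})\le P$, $\Vert\mathbf{H}(t)\Vert_F\le B$, and $\Vert\widetilde{\mathbf{H}}(t)\Vert_F\le B+\delta$; the two cross terms contribute the factor $B+(B+\delta)=2B+\delta$. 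Applying this estimate once with $\mathbf{Q}=\mathbf{Q}(t)$ (to replace $\widetilde R(t)$ by $R(t)$) and once with $\mathbf{Q}=\mathbf{Q}^{\ast}(\mathbf{H}(t))$ (to replace the surrogate-channel term by the true-channel term) accounts for the full $-2VP\sqrt{N_T}(2B+\delta)\delta$ loss.

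Finally I would chain the three displays to obtain a deterministic lower bound whose right-hand side contains $V\log\det(\mathbf{I}+\mathbf{H}(t)\mathbf{Q}^{\ast}(\mathbf{H}(t))\mathbf{H}\herm(t))-Z(t)\big(\text{tr}(\mathbf{Q}^{\ast}(\mathbf{H}(t)))-\bar P\big)$, and then take expectations. Since $\mathbf{H}(t)$ is i.i.d. and independent of the history that determines $Z(t)$, the cross term factorizes as $\mathbb{E}[Z(t)]\,\mathbb{E}[\text{tr}(\mathbf{Q}^{\ast}(\mathbf{H}))-\bar P]\le 0$ (using $Z(t)\ge0$ and the feasibility $\mathbb{E}[\text{tr}(\mathbf{Q}^{\ast}(\mathbf{H}))]\le\bar P$), so it can be dropped from the lower bound, while $\mathbb{E}[\log\det(\mathbf{I}+\mathbf{H}(t)\mathbf{Q}^{\ast}(\mathbf{H}(t))\mathbf{H}\herm(t))]=R^{\text{opt}}$. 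This yields exactly the claimed inequality. The only genuinely new work beyond the classical drift-plus-penalty analysis is the perturbation bound, so carefully tracking its constant (in particular where the dimensional factor $\sqrt{N_T}$ and the $(2B+\delta)$ enter) is where I would spend most of the effort.
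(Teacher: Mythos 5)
Your proposal is correct and follows essentially the same route as the paper's proof: the standard one-slot drift bound, the per-slot optimality of $\mathbf{Q}(t)$ against the competitor $\mathbf{Q}^\ast(\mathbf{H}(t))$ under the surrogate channel, a first-order concavity (perturbation) estimate applied twice to transfer between $\widetilde{\mathbf{H}}(t)$ and $\mathbf{H}(t)$, and independence of $\mathbf{H}(t)$ from $Z(t)$ to drop the cross term. The only divergence is inside the perturbation estimate --- the paper writes $\mathbf{Q}=\mathbf{L}\herm\mathbf{L}$ and uses concavity in $\mathbf{T}=\mathbf{H}\herm\mathbf{H}$ with the gradient's Frobenius norm bounded by $P\sqrt{N_T}$, while you use concavity in $\mathbf{H}\mathbf{Q}\mathbf{H}\herm$ with a cross-term expansion; your spectral-norm-at-most-one argument actually yields the dimension-free constant $P(2B+\delta)\delta$ (a Frobenius-norm version would give $\sqrt{N_R}$ rather than $\sqrt{N_T}$), and either way the claimed bound with the factor $\sqrt{N_T}\geq 1$ still follows.
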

\begin{IEEEproof}
See Appendix \ref{app:pf-lm-with-csit-dpp-bound}.
\end{IEEEproof}

\begin{Lem} \label{lm:with-csit-deterministic-queue-bound}
Under Algorithm \ref{alg:with-csit}, we have  $Z(t) \leq V(B+\delta)^{2} + (P-\bar{P}), \forall t>0,$ where $B, \delta, P$ and $\bar{P}$ are defined in Section \ref{sec:signal-model}. 
\end{Lem}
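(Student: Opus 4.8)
The plan is to prove this deterministic bound by induction on $t$, where the decisive ingredient is a \emph{threshold property}: once $Z(t)$ grows past $V(B+\delta)^2$, the penalty term $-Z(t)\,\text{tr}(\mathbf{Q})$ in the objective of Algorithm \ref{alg:with-csit} dominates and forces the selected covariance to be $\mathbf{Q}(t)=\mathbf{0}$, after which the queue can only shrink. This self-limiting behavior is what keeps $Z(t)$ bounded for all time.

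First I would establish the threshold. Since $\|\mathbf{H}(t)\|_F \le B$ and $\|\widetilde{\mathbf{H}}(t)-\mathbf{H}(t)\|_F \le \delta$, the triangle inequality gives $\|\widetilde{\mathbf{H}}(t)\|_F \le B+\delta$. The eigenvalues $\sigma_1,\ldots,\sigma_{N_T}$ of $\widetilde{\mathbf{H}}\herm(t)\widetilde{\mathbf{H}}(t)$ are nonnegative and, by the definition in \eqref{eq:frobenius-def}, satisfy $\sum_i \sigma_i = \text{tr}(\widetilde{\mathbf{H}}\herm(t)\widetilde{\mathbf{H}}(t)) = \|\widetilde{\mathbf{H}}(t)\|_F^2 \le (B+\delta)^2$, so each $\sigma_i \le (B+\delta)^2$. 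Now I invoke Lemma \ref{lm:with-csit-transmit-covaraince-update}, whose optimal eigenvalues are $\theta_i^* = \max[0,\frac{1}{\mu^*+Z/V}-\frac{1}{\sigma_i}]$ with $\mu^*\ge 0$. Whenever $Z(t) \ge V(B+\delta)^2$ we have $\mu^* + Z(t)/V \ge (B+\delta)^2 \ge \sigma_i$ for every $i$, hence $\theta_i^*=0$ for all $i$ and therefore $\text{tr}(\mathbf{Q}(t)) = \text{tr}(\mathbf{\Theta}^*) = 0$.

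With the threshold in hand the induction is routine. The base case $Z(0)=0 \le V(B+\delta)^2 + (P-\bar{P})$ holds because $P\ge\bar{P}$. For the inductive step I assume the bound at slot $t$ and split into two cases. If $Z(t) < V(B+\delta)^2$, then using the short-term constraint $\text{tr}(\mathbf{Q}(t))\le P$ (from $\mathbf{Q}(t)\in\mathcal{Q}$), the queue update yields $Z(t+1) \le Z(t) + \text{tr}(\mathbf{Q}(t)) - \bar{P} \le Z(t) + (P-\bar{P}) < V(B+\delta)^2 + (P-\bar{P})$. If instead $Z(t) \ge V(B+\delta)^2$, the threshold property gives $\text{tr}(\mathbf{Q}(t))=0$, so $Z(t+1) = \max[0,Z(t)-\bar{P}] \le Z(t)$, which is bounded by the inductive hypothesis. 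Both cases preserve the claimed bound, completing the induction.

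The hard part will be the threshold step, namely correctly extracting from Lemma \ref{lm:with-csit-transmit-covaraince-update} that $\mathbf{Q}(t)$ vanishes once $Z(t)$ crosses $V(B+\delta)^2$; everything else is a standard two-case drift argument. The whole argument hinges on the eigenvalue bound $\sigma_i \le (B+\delta)^2$ together with the nonnegativity of $\mu^*$. One detail I would check carefully is the boundary value $Z(t)=V(B+\delta)^2$ in the case where all channel energy sits in a single mode with $\sigma_i=(B+\delta)^2$; but since $\mu^*\ge 0$ the inequality $\mu^*+Z/V \ge \sigma_i$ still holds (weakly), so $\theta_i^*=0$ remains valid there and the case split may use either strict or non-strict comparison at the threshold.
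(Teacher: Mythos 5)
Your proposal is correct and matches the paper's proof essentially step for step: the same threshold property ($Z(t)\ge V(B+\delta)^2$ forces $\mathbf{Q}(t)=\mathbf{0}$, via the eigenvalue bound $\sigma_i\le(B+\delta)^2$ and the water-filling form in Lemma \ref{lm:with-csit-transmit-covaraince-update}), followed by the same two-case argument, which you merely package as an explicit induction. One cosmetic nit: the intermediate inequality $Z(t+1)\le Z(t)+\text{tr}(\mathbf{Q}(t))-\bar{P}$ can fail when the $\max[0,\cdot]$ clips to zero, but your final bound $Z(t+1)\le Z(t)+(P-\bar{P})$ remains valid because that right-hand side is nonnegative, so nothing breaks.
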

\begin{IEEEproof}
We first show that if $Z(t) \geq V(B+\delta)^2$, then Algorithm \ref{alg:with-csit} chooses $\mathbf{Q}(t) = \mathbf{0}$.  Consider $Z(t)\geq V(B+\delta)^2$. Let SVD $\widetilde{\mathbf{H}}\herm(t)\widetilde{\mathbf{H}}(t) = \mathbf{U}\herm\boldsymbol{\Sigma}\mathbf{U}$, where diagonal matrix $\boldsymbol{\Sigma}$ has non-negative diagonal entries $\sigma_{1}, \ldots, \sigma_{N_{T}}$. Note that $\forall i\in\{1,2,\ldots, N_{T}\}$, $\sigma_i \overset{(a)}{\leq} \text{tr}(\widetilde{\mathbf{H}}\herm(t)\widetilde{\mathbf{H}}(t)) \overset{(b)}{=} \Vert \widetilde{\mathbf{H}}(t)\Vert_F^{2} \leq (\Vert \mathbf{H}(t)\Vert_F + \Vert \widetilde{\mathbf{H}}(t) - \mathbf{H}(t)\Vert_F )^2 \leq (B+\delta)^{2}$ where (a) follows from $\text{tr}(\widetilde{\mathbf{H}}\herm(t)\widetilde{\mathbf{H}}(t)) = \sum_{i=1}^{N_{T}} \sigma_{i}$; and (b) follows from the definition of Frobenius norm. By Lemma \ref{lm:with-csit-transmit-covaraince-update}, $\mathbf{Q}(t) = \mathbf{U}\herm\boldsymbol{\Theta}^{\ast} \mathbf{U}$, where $\boldsymbol{\Theta}^{\ast}$ is diagonal with entries $\theta_{1}^{\ast}, \ldots, \theta_{N_{T}}^{\ast}$ given by $\theta_i^{\ast} = \max\big[0,\frac{1}{\mu^{\ast} + Z(t)/V} - \frac{1}{\sigma_{i}}\big]$, where $\mu^{\ast}\geq 0$.  Since $\sigma_{i} \leq (B+\delta)^{2}, \forall i\in\{1,2,\ldots, N_{T}\}$, it follows that if $Z(t) \geq V(B+\delta)^{2}$, then $\frac{1}{\mu + Z(t)/V} - \frac{1}{\sigma_{i}}\leq 0$ for all $\mu\geq 0$ and hence $\theta_{i}^{\ast} = 0, \forall i\in\{1,2,\ldots,N_{T}\}$. This implies Algorithm \ref{alg:with-csit} chooses $\mathbf{Q}(t)= 0$ by Lemma \ref{lm:with-csit-transmit-covaraince-update}, which further implies that  $Z(t+1)\leq Z(t)$ by the update equation of $Z(t+1)$. 

On the other hand,  if $Z(t) \leq V (B+\delta)^{2}$, then $Z(t+1)$ is at most $V(B+\delta)^{2} + (P-\bar{P})$ by the update equation of $Z(t+1)$ and the short term power constraint $\text{tr}(\mathbf{Q}(t))\leq P$.
\end{IEEEproof}

The next theorem summarizes the performance of Algorithm \ref{alg:with-csit} and follows directly from Lemma \ref{lm:with-csit-dpp-bound} and Lemma \ref{lm:with-csit-deterministic-queue-bound}.

\begin{Thm} \label{thm:with-csit-performance} Fix $\epsilon>0$ and choose $V = \frac{\max\{\bar{P}^2, (P-\bar{P})^2\} }{2\epsilon}$ in Algorithm \ref{alg:with-csit}, then for all $t>0$: 
\begin{align*}
&\frac{1}{t}\sum_{\tau=0}^{t-1} \mathbb{E}[R(\tau)] \geq R^{\text{opt}} - \epsilon -\phi(\delta), \\
&\frac{1}{t}\sum_{\tau=0}^{t-1} \text{tr}(\mathbf{Q}(\tau)) \leq \bar{P} + \frac{(B+\delta)^2\max\{\bar{P}^2, (P-\bar{P})^2\}  + 2\epsilon (P-\bar{P})}{2\epsilon t}, 
\end{align*}
where $\phi(\delta) =2P\sqrt{N_T} (2B+\delta)\delta$ satisfying $\phi(\delta)\rightarrow 0$ as $\delta\rightarrow 0$, i.e., $\phi(\delta)\in O(\delta)$; and $B, \delta, N_T, P$ and $\bar{P}$ are defined in Section \ref{sec:signal-model}. In particular, the average expected utility is within $\epsilon+\phi(\delta)$ of $R^{\text{opt}}$ and the sample path time average power is within $\epsilon$ of its required constraint $\bar{P}$ whenever $t \geq \Omega(\frac{1}{\epsilon^2})$. 
\end{Thm}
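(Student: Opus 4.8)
The plan is to derive the two displayed inequalities separately, in each case by directly combining the preparatory lemmas and then substituting the prescribed value of $V$. The statement itself signals that no new analytical machinery is needed beyond Lemma \ref{lm:with-csit-dpp-bound} and Lemma \ref{lm:with-csit-deterministic-queue-bound}; the only work is the bookkeeping of a telescoping sum and the matching of algebraic constants.

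For the utility bound, I would start from the drift-plus-penalty inequality of Lemma \ref{lm:with-csit-dpp-bound}, rearranged for each slot $\tau$ as
\[ V\mathbb{E}[R(\tau)] \geq \mathbb{E}[\Delta(\tau)] + VR^{\text{opt}} - \tfrac{1}{2}\max\{\bar{P}^2,(P-\bar{P})^2\} - 2VP\sqrt{N_T}(2B+\delta)\delta. \]
Summing over $\tau \in \{0,\ldots,t-1\}$ and using $\Delta(\tau)=L(\tau+1)-L(\tau)$, the drift terms telescope to $\mathbb{E}[L(t)]-\mathbb{E}[L(0)] = \tfrac{1}{2}\mathbb{E}[Z^2(t)] \geq 0$, where the equality uses $Z(0)=0$ and the inequality uses $L(t)\geq 0$. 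Dividing the resulting inequality by $tV$ yields
\[ \frac{1}{t}\sum_{\tau=0}^{t-1}\mathbb{E}[R(\tau)] \geq R^{\text{opt}} - \frac{\max\{\bar{P}^2,(P-\bar{P})^2\}}{2V} - 2P\sqrt{N_T}(2B+\delta)\delta. \]
Substituting $V = \max\{\bar{P}^2,(P-\bar{P})^2\}/(2\epsilon)$ collapses the middle term to exactly $\epsilon$ and identifies the last term with $\phi(\delta)$, giving the first claim for all $t>0$.

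For the power bound, I would invoke Lemma \ref{lm:with-csit-constaint-and-queue-relation}, which gives $\frac{1}{t}\sum_{\tau=0}^{t-1}\text{tr}(\mathbf{Q}(\tau)) \leq \bar{P} + Z(t)/t$, and then control $Z(t)$ by the deterministic estimate $Z(t)\leq V(B+\delta)^2 + (P-\bar{P})$ of Lemma \ref{lm:with-csit-deterministic-queue-bound}. Substituting the chosen $V$ into $V(B+\delta)^2 + (P-\bar{P})$ and placing everything over the common denominator $2\epsilon$ reproduces the second displayed inequality verbatim. The $\Omega(1/\epsilon^2)$ convergence-time remark then follows by asking when the excess term in the power bound is itself at most $\epsilon$: this requires $t \geq \frac{(B+\delta)^2\max\{\bar{P}^2,(P-\bar{P})^2\}}{2\epsilon^2} + \frac{P-\bar{P}}{\epsilon}$, whose dominant dependence on $\epsilon$ is $1/\epsilon^2$.

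There is no serious obstacle here: the theorem is a corollary of the two lemmas, and the only care needed is (i) correctly telescoping the drift, recognizing that the initial term vanishes because $Z(0)=0$ while the terminal term $\tfrac{1}{2}\mathbb{E}[Z^2(t)]$ is discarded in the favorable direction, and (ii) tracking constants cleanly through the substitution of $V$ so that the middle term becomes precisely $\epsilon$ and the numerator of the power bound matches the stated expression. All of the genuine difficulty — in particular quantifying the $O(\delta)$ penalty through the $(2B+\delta)\delta$ factor and establishing the deterministic queue bound that depends on $(B+\delta)^2$ — has already been discharged inside Lemma \ref{lm:with-csit-dpp-bound} and Lemma \ref{lm:with-csit-deterministic-queue-bound}.
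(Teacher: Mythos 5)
Your proposal is correct and follows essentially the same route as the paper's own proof: the first inequality by summing the drift-plus-penalty bound of Lemma \ref{lm:with-csit-dpp-bound}, telescoping the drift, discarding the nonnegative terminal term $\tfrac{1}{2}\mathbb{E}[Z^2(t)]$ (using $Z(0)=0$), and substituting $V$; the second by combining Lemma \ref{lm:with-csit-constaint-and-queue-relation} with the deterministic queue bound of Lemma \ref{lm:with-csit-deterministic-queue-bound}. The sign bookkeeping on the telescoped drift and the algebra after substituting $V$ both check out, so no gaps.
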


\begin{IEEEproof}~\newline
{\bf Proof of the first inequality:} Fix $t>0$.  For all slots $\tau\in\{0,1,\ldots,t-1\}$, Lemma \ref{lm:with-csit-dpp-bound} guarantees that  $\mathbb{E}[R(\tau)] \geq R^{\text{opt}} + \frac{1}{V} \mathbb{E}[\Delta(\tau)] - \frac{1}{2V}\max\{\bar{P}^2, (P-\bar{P})^2\}   - 2P\sqrt{N_T} (2B+\delta)\delta$.

Summing over $\tau \in \{0, \ldots, t-1\}$ and dividing by $t$ gives: 
\begin{align*}
&\frac{1}{t}\sum_{\tau=0}^{t-1} \mathbb{E}[R(\tau)]\\
 \geq& R^{\text{opt}} + \frac{1}{Vt} \sum_{\tau=0}^{t-1}\mathbb{E}[\Delta(\tau)]  -  \frac{1}{2V}\max\{\bar{P}^2, (P-\bar{P})^2\}   \\ &- 2P\sqrt{N_T} (2B+\delta)\delta\\
\overset{(a)}{=}&R^{\text{opt}} + \frac{1}{2Vt} \big([\mathbb{E}[Z^2(t)] -\mathbb{E}[Z^2(0)]\big)  -  \frac{1}{2V}\max\{\bar{P}^2, (P-\bar{P})^2\}  \\&- 2P\sqrt{N_T} (2B+\delta)\delta\\
\overset{(b)}{\geq} &R^{\text{opt}} -  \frac{1}{2V}\max\{\bar{P}^2, (P-\bar{P})^2\}  - 2P\sqrt{N_T} (2B+\delta)\delta \\
\overset{(c)}{=} &R^{\text{opt}} - \epsilon - 2P\sqrt{N_T} (2B+\delta)\delta
\end{align*}
where (a) follows from the definition that $\Delta(t)  = \frac{1}{2} Z^2(t+1) - \frac{1}{2}Z^2(t)$ and by simplifying the telescoping sum $\sum_{\tau=0}^{t-1}\mathbb{E}[\Delta(\tau)]$; (b) follows from $Z(0)=0$ and $Z(t)\geq 0$; and (c) follows by  substituting $V = \frac{1}{2\epsilon}\max\{\bar{P}^2, (P-\bar{P})^2\} $. \newline
{\bf Proof of the second inequality}:  Fix $t>0$. By Lemma \ref{lm:with-csit-constaint-and-queue-relation}, we have 
\begin{align*}
\frac{1}{t} \sum_{\tau=0}^{t-1} \text{tr}(\mathbf{Q}(\tau)) &\leq  \bar{P} + \frac{Z(t)}{t} \\
&\overset{(a)}{=}\bar{P} + \frac{(B+\delta)^2\max\{\bar{P}^2, (P-\bar{P})^2\}  + 2\epsilon (P-\bar{P})}{2\epsilon t} 
\end{align*}
where (a) follows from Lemma  \ref{lm:with-csit-deterministic-queue-bound} and $V = \frac{1}{2\epsilon}\max\{\bar{P}^2, (P-\bar{P})^2\} $.
\end{IEEEproof} 

Theorem \ref{thm:with-csit-performance} provides a sample path guarantee on average power, which is much stronger than the guarantee in \eqref{eq:book2}. It also shows that convergence time to reach an $\epsilon+O(\delta)$ approximate solution is $O(\frac{1}{\epsilon^2})$.

\subsection{Discussion} \label{sec:with-csit-discussion}

 It is shown that $Z(t)$ in the DPP algorithm is ``attracted" to an optimal Lagrangian dual multiplier of an unknown deterministic convex program \cite{HuangNeely13TAC}. In fact, if we have a good guess of this Lagrangian multiplier and initialize $Z(0)$ close to it, then Algorithm \ref{alg:with-csit} has faster convergence. In addition, the performance bounds derived in Theorem \ref{thm:with-csit-performance} are not tightest possible. The proof of Lemma \ref{lm:with-csit-dpp-bound} involves many relaxations to derive bounds that are simple but can still enable Theorem \ref{thm:with-csit-performance} to show the effect of missing CDIT can be made arbitrarily small by choosing the algorithm parameter $V$ properly and the performance degradation of CSIT inaccuracy scales linearly with respect to $\delta$.  In fact, tighter but more complicated bounds are possible by refining the proof of Lemma \ref{lm:with-csit-dpp-bound}.

A heuristic approach to solve problem \eqref{eq:stochastic-with-csit-obj}-\eqref{eq:stochastic-with-csit-set-con} without channel distribution information is to sample the channel for a large number of realizations and use the empirical distribution as an approximate distribution to solve problem \eqref{eq:stochastic-with-csit-obj}-\eqref{eq:stochastic-with-csit-set-con} directly. This approach has three drawbacks:
\begin{itemize}
\item  For a scalar channel, the empirical distribution based on $O(\frac{1}{\epsilon^2})$ realizations is an $\epsilon$ approximation to the true channel distribution with high probability by the Dvoretzky-Kiefer-Wolfowitz inequality \cite{book_ApproximationStatistics}. However, for an $N_R \times N_T$ MIMO channel, the multi-dimensional empirical distribution requires $O( \frac{N^2_T N^2_R}{\epsilon^2})$ samples to achieve an $\epsilon$ approximation of the true channel distribution \cite{Devroye77EmpiricalDistribution}.  Thus, this approach does not scale well with the number of antennas. 
\item  Even if the empirical distribution is accurate, the complexity of solving problem \eqref{eq:stochastic-with-csit-obj}-\eqref{eq:stochastic-with-csit-set-con} based on the empirical distribution is huge if the channel is from a continuous distribution.  This is known as the curse of dimensionality for stochastic optimization due to the large sample size. In contrast, the complexity of Algorithm \ref{alg:with-csit} is independent of the sample space.
\item This approach is an offline method such that a large number of slots are wasted during the channel sampling process. In contrast, Algorithm \ref{alg:with-csit} is an online method with performance guarantees for all slots.
\end{itemize}

Note that even if we assume the distribution of $\mathbf{H}(t)$ is known and $\mathbf{Q}^\ast(\mathbf{H})$ can be computed by solving problem \eqref{eq:stochastic-with-csit-obj}-\eqref{eq:stochastic-with-csit-set-con}, the optimal  policy $\mathbf{Q}^\ast(\mathbf{H})$ in general cannot achieve $R^{\text{opt}}$ and can violate the long term power constraints when only the approximate versions $\widetilde{\mathbf{H}}(t)$ are known.  For example, consider a MIMO fading system with two possible channel realizations $\mathbf{H}_1$ and $\mathbf{H}_2$ with equal probabilities. Suppose the average power constraint is $\bar{P}=5$ and the optimal policy $\mathbf{Q}^\ast(\mathbf{H})$ satisfies $\text{tr}(\mathbf{Q}^\ast(\mathbf{H}_1)) = 8$ and $\text{tr}(\mathbf{Q}^\ast(\mathbf{H}_2)) = 2$. However, if $\widetilde{\mathbf{H}}_1\neq \mathbf{H}_1$ and $\widetilde{\mathbf{H}}_2 \neq \mathbf{H}_2$, it can be hard to decide the transmit covariance based on $\widetilde{\mathbf{H}}_1$ or $\widetilde{\mathbf{H}}_2$ since the associations between $\widetilde{\mathbf{H}}_1$ and $\mathbf{H}_1$ (or between $\widetilde{\mathbf{H}}_2$ and $\mathbf{H}_2$) are unknown. In an extreme case when $\widetilde{\mathbf{H}}_1 = \widetilde{\mathbf{H}}_2 = \mathbf{H}_1$, if the transmitter uses $\mathbf{Q}^\ast(\widetilde{\mathbf{H}}(t))$ at each slot $t$, the average power constraint is violated and hence the transmit covariance scheme is infeasible.  In contrast, Algorithm \ref{alg:with-csit} can attain the performance in Theorem \ref{thm:with-csit-performance} with inaccurate instantaneous CSIT and no CDIT. 

\section{Delayed CSIT case}

Consider the case of delayed and inaccurate CSIT.  At the beginning of each slot $t\in\{0,1,2,\ldots\}$, channel $\mathbf{H}(t)$ is unknown and only quantized channels of previous slots $\widetilde{\mathbf{H}}(\tau), \tau\in\{0,1,\ldots, t-1\}$ are known.   This is similar to the scenario of online optimization where the decision maker selects $x(t)\in\mathcal{X}$ at each slot $t$ to maximize an unknown reward function $f_t(x)$ based on the information of previous reward functions $f_\tau(x(\tau)), \tau\in\{0,1,\ldots, t-1\}$. The goal is to minimize average regret $\frac{1}{t}\max_{x\in \mathcal{X}}\big[\sum_{\tau=0}^{t-1} f_{\tau}(x)\big] - \frac{1}{t}\sum_{\tau=0}^{t-1} f_{\tau}(x({\tau}))$. The best possible average regret of online convex optimization with general convex reward functions is $O(\frac{1}{\sqrt{t}})$ \cite{Zinkevich03ICML,Hazan07ML}.

The situation in the current paper is different from conventional online optimization because at each slot $t$, the rewards of previous slots, i.e., $R(\tau) =\log\det(\mathbf{I} + \mathbf{H}(\tau) \mathbf{Q}(\tau)\mathbf{H}\herm(\tau)), \tau\in\{0,1,\ldots,t-1\}$, are still unknown due to the fact that the reported channels $\widetilde{\mathbf{H}}(\tau)$ are approximate versions. Nevertheless, an online algorithm without using CDIT is developed in Algorithm \ref{alg:no-csit}.

\begin{algorithm}
\caption{Dynamic Transmit Covariance Design with Delayed CSIT}
\label{alg:no-csit}
Let $\gamma>0$ be a constant parameter and $\mathbf{Q}(0)\in \mathcal{Q}$ be arbitrary. At each time $t\in\{1,2,\ldots\}$, observe $\widetilde{\mathbf{H}}(t-1)$ and do the following:
\begin{itemize}
\item Let $ \widetilde{\mathbf{D}}(t-1) = \widetilde{\mathbf{H}}\herm(t-1) (\mathbf{I}_{N_{R}} +  \widetilde{\mathbf{H}}(t-1)\mathbf{Q}(t-1) \widetilde{\mathbf{H}}\herm(t-1))^{-1}  \widetilde{\mathbf{H}}(t-1)$. Choose transmit covariance $\mathbf{Q}(t) = \mathcal{P}_{\widetilde{\mathcal{Q}}} \big[ \mathbf{Q}(t-1) + \gamma \widetilde{\mathbf{D}}(t-1) \big]$, where $\mathcal{P}_{\widetilde{\mathcal{Q}}}[\cdot]$ is the projection onto convex set $\widetilde{\mathcal{Q}}= \{\mathbf{Q} \in \mathbb{S}_{+}^{N_T} : \text{tr}(\mathbf{Q}) \leq \bar{P}\} $.
\end{itemize}
\end{algorithm}

Define $\mathbf{Q}^{\ast}\in \widetilde{\mathcal{Q}}$ as an optimal solution to problem \eqref{eq:stochastic-no-csit-obj}-\eqref{eq:stochastic-no-csit-set-con}, which depends on the (unknown) distribution for $\mathbf{H}(t)$. Define \[R^{\text{opt}}(t) = \log\det(\mathbf{I} + \mathbf{H}(t)\mathbf{Q}^{\ast}\mathbf{H}\herm(t))\] as the utility at slot $t$ attained by $\mathbf{Q}^{\ast}$. 

If the channel feedback is \emph{accurate}, i.e., $\widetilde{\mathbf{H}}(t-1) = \mathbf{H}(t-1), \forall t\in\{1,2,\ldots\}$, then $\widetilde{\mathbf{D}}(t-1)$ is the gradient of $R(t-1)$ at point $Q(t-1)$. Fix $\epsilon>0$ and take $\gamma = \epsilon$. The results in \cite{Zinkevich03ICML} ensure that, regardless of the distribution of $\mathbf{H}(t)$:
\begin{align}
&\frac{1}{t}\sum_{\tau=0}^{t-1} R(\tau) \geq \frac{1}{t}\sum_{\tau=0}^{t-1}R^{\text{opt}}(\tau) - \frac{2\bar{P}^{2}}{\epsilon t} - \frac{N_{R}B^{4}}{2}\epsilon , \forall t> 0 \label{eq:online-1}\\
&\text{tr}(\mathbf{Q}(\tau)) \leq \bar{P}, \forall \tau\in\{0,1,\ldots,t-1\} \label{eq:online-2}
\end{align}

The next subsections show that  the performance of Algorithm \ref{alg:no-csit} with \emph{inaccurate} channels  degrades linearly with respect to channel inaccuracy $\delta$.  If $\delta = 0$, then \eqref{eq:online-1} and \eqref{eq:online-2} are recovered. 
\subsection{Transmit Covariance Updates in Algorithm \ref{alg:no-csit}}
This subsection shows the  $\mathbf{Q}(t)$ selection in Algorithm \ref{alg:no-csit} has an (almost) closed-form solution. 

The projection operator involved  in Algorithm \ref{alg:no-csit} by definition is 
\begin{align}
\min \quad & \frac{1}{2} \Vert \mathbf{Q} - \mathbf{X} \Vert^2_F  \label{eq:no-csit-primal-opt-obj}\\
\text{s.t.} \quad  & \text{tr}(\mathbf{Q}) \leq \bar{P}  \label{eq:no-csit-primal-opt-trace}\\
			 & \mathbf{Q} \in \mathbb{S}^{N_{T}}_+ \label{eq:no-csit-primal-opt-sdp}
\end{align}
where $\mathbf{X} = \mathbf{Q}(t-1) +\gamma \widetilde{\mathbf{D}}(t-1)$ is a Hermitian matrix at each slot $t$. 

Without constraint $\text{tr}(\mathbf{Q})\leq \bar{P}$, the projection of Hermitian matrix $\mathbf{X}$ onto the positive semidefinite cone $\mathbb{S}^{n}_+ $ is simply taking the eigenvalue expansion  of $\mathbf{X}$ and dropping terms associated with negative eigenvalues (see Section 8.1.1. in \cite{book_ConvexOptimization}).  Work \cite{Boyd05LeastSqure} considered the projection onto the intersection of the positive semidefinite cone  $\mathbb{S}^{n}_+ $  and an affine subspace given by $\{\mathbf{Q}: \text{tr}(\mathbf{A}_{i}\mathbf{Q}) = b_{i}, i\in\{1,2,\ldots,p\}, \text{tr}(\mathbf{B}_{j}\mathbf{Q}) \leq d_{j}, j\in\{1,2,\ldots, m\}\}$ and developed the dual-based iterative numerical algorithm to calculate the projection. Problem \eqref{eq:no-csit-primal-opt-obj}-\eqref{eq:no-csit-primal-opt-sdp} is a special case, where the affine subspace is given by $\text{tr}(\mathbf{Q}) \leq \bar{P}$, of the projection considered in \cite{Boyd05LeastSqure}. Instead of solving problem \eqref{eq:no-csit-primal-opt-obj}-\eqref{eq:no-csit-primal-opt-sdp} using numerical algorithms, the next lemma summarizes that problem \eqref{eq:no-csit-primal-opt-obj}-\eqref{eq:no-csit-primal-opt-sdp} has an (almost) closed-form solution.

\begin{Lem}\label{lm:no-csit-transmit-covaraince-update}
Consider SVD $\mathbf{X} = \mathbf{U}\herm\boldsymbol{\Sigma}\mathbf{U}$, where $\mathbf{U}$ is a unitary matrix and $\boldsymbol{\Sigma}$ is a diagonal matrix with entries $\sigma_{1}, \ldots, \sigma_{N_{T}}$.  Then the optimal solution to problem \eqref{eq:no-csit-primal-opt-obj}-\eqref{eq:no-csit-primal-opt-sdp} is given by $\mathbf{Q}^{\ast} = \mathbf{U}\herm\mathbf{\Theta}^{\ast} \mathbf{U}$, where $\mathbf{\Theta}^{\ast}$ is a diagonal matrix with entries $\theta_{1}^\ast, \ldots, \theta_{N_{T}}^{\ast}$ given by, \[\theta_i^\ast = \max[0,\sigma_{i} - \mu^{\ast}], \forall i\in\{1,2,\ldots,N_{T}\},\] where $\mu^{\ast}$ is chosen such that $\sum_{i=1}^{N_T} \theta_i^\ast \leq  \bar{P}$, $\mu^{\ast} \geq 0$ and $\mu^{\ast} \big[\sum_{i=1}^{N_{T}} \theta_i^\ast -  \bar{P}\big] = 0$.  The exact $\mu^\ast$ can be determined using Algorithm \ref{alg:no-csit-primal-opt-exact-alg} with complexity $O(N_T \log N_T)$.
\end{Lem}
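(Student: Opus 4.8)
The plan is to reduce the matrix projection \eqref{eq:no-csit-primal-opt-obj}-\eqref{eq:no-csit-primal-opt-sdp} to a scalar soft-thresholding problem by showing the minimizer shares an eigenbasis with $\mathbf{X}$, and then to solve that scalar problem with KKT conditions, paralleling the proof of Lemma \ref{lm:with-csit-transmit-covaraince-update} but with the log-det term replaced by a squared Frobenius distance. Note first that $\mathbf{X} = \mathbf{Q}(t-1) + \gamma\widetilde{\mathbf{D}}(t-1)$ is Hermitian (both summands are), so it admits a unitary diagonalization $\mathbf{X} = \mathbf{U}\herm\boldsymbol{\Sigma}\mathbf{U}$ with real, but possibly negative, eigenvalues $\sigma_1,\ldots,\sigma_{N_T}$. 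The problem is strictly convex and coercive in $\mathbf{Q}$, so it has a unique optimal solution.

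First I would establish that this optimal $\mathbf{Q}^\ast$ is simultaneously diagonalizable with $\mathbf{X}$. Expanding the objective gives $\frac{1}{2}\Vert \mathbf{Q} - \mathbf{X}\Vert_F^2 = \frac{1}{2}\text{tr}(\mathbf{Q}^2) - \text{tr}(\mathbf{Q}\mathbf{X}) + \frac{1}{2}\text{tr}(\mathbf{X}^2)$, so once the eigenvalues of $\mathbf{Q}$ are fixed only the cross term $-\text{tr}(\mathbf{Q}\mathbf{X})$ depends on the eigenvectors of $\mathbf{Q}$. By von Neumann's trace inequality for Hermitian matrices, $\text{tr}(\mathbf{Q}\mathbf{X}) \leq \sum_{i} \lambda_i^{\downarrow}(\mathbf{Q})\,\sigma_i^{\downarrow}$, with equality when $\mathbf{Q}$ and $\mathbf{X}$ are diagonalized by the same unitary $\mathbf{U}$ with eigenvalues arranged in the same order. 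Crucially, the feasible set $\widetilde{\mathcal{Q}}$ depends on $\mathbf{Q}$ only through its eigenvalues (positive semidefiniteness requires them nonnegative, and $\text{tr}(\mathbf{Q})\leq \bar{P}$ bounds their sum), so re-aligning eigenvectors preserves feasibility while not increasing the objective. Hence the unique optimizer has the form $\mathbf{Q} = \mathbf{U}\herm\boldsymbol{\Theta}\mathbf{U}$ with $\boldsymbol{\Theta}$ diagonal and $\boldsymbol{\Theta}\succeq \mathbf{0}$.

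Next I would reduce to the scalar program $\min \frac{1}{2}\sum_{i=1}^{N_T}(\theta_i - \sigma_i)^2$ subject to $\theta_i \geq 0$ and $\sum_{i=1}^{N_T}\theta_i \leq \bar{P}$, which is strictly convex and satisfies Slater's condition (e.g. $\boldsymbol{\theta}=\mathbf{0}$ is strictly feasible when $\bar P>0$), so KKT conditions are necessary and sufficient. Introducing a multiplier $\mu \geq 0$ for the trace constraint and $\nu_i \geq 0$ for each $\theta_i \geq 0$, stationarity gives $\theta_i - \sigma_i + \mu - \nu_i = 0$. The standard complementary-slackness case split then yields $\theta_i^\ast = \max[0,\sigma_i - \mu^\ast]$: if $\theta_i^\ast>0$ then $\nu_i = 0$ so $\theta_i^\ast = \sigma_i - \mu^\ast$, and if $\theta_i^\ast = 0$ then $\nu_i\geq 0$ forces $\sigma_i - \mu^\ast \leq 0$. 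The multiplier $\mu^\ast$ is then pinned down by primal feasibility $\sum_i \theta_i^\ast \leq \bar{P}$, dual feasibility $\mu^\ast \geq 0$, and complementary slackness $\mu^\ast\big[\sum_i \theta_i^\ast - \bar{P}\big]=0$, exactly as stated. Finally I would verify Algorithm \ref{alg:no-csit-primal-opt-exact-alg}: the map $\mu \mapsto \sum_i \max[0,\sigma_i-\mu]$ is continuous, nonincreasing, and piecewise linear with breakpoints at the $\sigma_i$, so after sorting the $\sigma_i$ one sweep identifies the correct active set and the corresponding $\mu^\ast$, giving the $O(N_T\log N_T)$ cost dominated by the sort.

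The main obstacle is the eigenvector-alignment step via von Neumann's inequality; it is the only non-mechanical part, and everything downstream is routine scalar convex optimization. Compared with Lemma \ref{lm:with-csit-transmit-covaraince-update}, the soft-thresholding form $\max[0,\sigma_i-\mu^\ast]$ replaces the water-filling expression precisely because the quadratic penalty has no reciprocal structure, and the $\max[0,\cdot]$ truncation is also what handles the negative eigenvalues that $\mathbf{X}$, as a general Hermitian matrix, may possess.
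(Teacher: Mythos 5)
Your proposal is correct, and its overall architecture (reduce to a diagonal problem in the eigenbasis of $\mathbf{X}$, solve the resulting scalar program by KKT, then justify the sort-and-sweep algorithm) matches the paper's. The one step you argue genuinely differently is the eigenbasis-alignment. You expand $\frac{1}{2}\Vert\mathbf{Q}-\mathbf{X}\Vert_F^2$ and invoke von Neumann's trace inequality on the cross term $-\text{tr}(\mathbf{Q}\mathbf{X})$, together with the observation that the feasible set is spectral, to conclude the optimizer is co-diagonalized with $\mathbf{X}$ with similarly ordered eigenvalues. The paper instead proceeds in two elementary contradiction steps: it first changes variables to $\boldsymbol{\Theta}=\mathbf{U}\mathbf{Q}\mathbf{U}\herm$ using unitary invariance of the Frobenius norm, and then shows the rotated optimum must be diagonal because replacing a non-diagonal feasible $\boldsymbol{\Theta}$ by its diagonal part preserves the trace and positive semidefiniteness while strictly decreasing $\Vert\boldsymbol{\Theta}-\boldsymbol{\Sigma}\Vert_F$ (since $\boldsymbol{\Sigma}$ is diagonal, off-diagonal entries only add to the distance). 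Your route is slicker and immediately delivers the correct eigenvalue ordering, but it imports a nontrivial external result; the paper's route uses nothing beyond unitary invariance and a direct entrywise comparison, which is why it needs no analogue of von Neumann. Two cosmetic points: your Slater witness $\boldsymbol{\theta}=\mathbf{0}$ is not strictly feasible for the constraints $\theta_i\geq 0$, but since all constraints of the scalar program are affine, KKT is necessary and sufficient anyway (or take $\theta_i=\bar{P}/(2N_T)$); and the paper reaches the same scalar program and the same soft-thresholding solution $\theta_i^\ast=\max[0,\sigma_i-\mu^\ast]$ that you derive, with the same $O(N_T\log N_T)$ accounting dominated by the sort.
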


\begin{IEEEproof}
See  Appendix \ref{app:pf-lm-no-csit-transmit-covaraince-update} for details.
\end{IEEEproof}

\begin{algorithm} 
\caption{Algorithm to solve problem \eqref{eq:no-csit-primal-opt-obj}-\eqref{eq:no-csit-primal-opt-sdp}} 
\label{alg:no-csit-primal-opt-exact-alg}
\begin{enumerate}
\item Check if $\sum_{i=1}^{N_{T}} \max[0,\sigma_{i}] \leq \bar{P}$ holds. If yes, let $\mu^{\ast} = 0$ and $\theta_{i}^{\ast} =\max[0,\sigma_{i}], \forall i\in\{1,2,\ldots,N_{T}\}$ and terminate the algorithm; else, continue to the next step.
\item Sort all $\sigma_{i}, \in\{1,2,\ldots,N_{T}\}$ in a decreasing order $\pi$ such that $\sigma_{\pi(1)} \geq \sigma_{\pi(2)} \geq \cdots \geq \sigma_{\pi(N_{T})}$. Define $S_{0}=0$.
\item For $i=1$ to $N_{T}$
\begin{itemize}
\item  Let $S_{i} = S_{i-1} + \sigma_{i}$. Let $\mu^{\ast} = \frac{S_{i} -\bar{P} }{i}$. 
\item  If $\mu^{\ast}\geq 0$, $\sigma_{\pi(i)} - \mu^{\ast}>0$ and $\sigma_{\pi(i+1)} - \mu^{\ast} \leq 0$, then terminate the loop; else, continue to the next iteration in the loop. 
\end{itemize}
\item Let $\theta_{i}^{\ast} =\max[0,\sigma_{i} - \mu^{\ast}] , \forall i\in\{1,2,\ldots,N_{T}\}$ and terminate the algorithm.
\end{enumerate}
\end{algorithm}

\subsection{Performance of Algorithm \ref{alg:no-csit}}

Define $\mathbf{D}(t-1) = \mathbf{H}\herm(t-1) (\mathbf{I}_{N_{R}} + \mathbf{H}(t-1)\mathbf{Q}(t-1) \mathbf{H}\herm(t-1))^{-1}  \mathbf{H}(t-1)$, which is the gradient of $R(t-1)$ at point $\mathbf{Q}(t-1)$ and is unknown to the transmitter due to the unavailability of $\mathbf{H}(t-1)$. The next lemma relates $\widetilde{\mathbf{D}}(t-1)$ and $\mathbf{D}(t-1)$.

\begin{Lem}\label{lm:gradient-approx-bound}For all slots $t\in\{1,2,\ldots\}$, we have
\begin{enumerate}
\item $\Vert \mathbf{D}(t-1)\Vert_{F} \leq \sqrt{N_{R}} B^{2}$.
\item $\Vert \mathbf{D}(t-1) -  \widetilde{\mathbf{D}}(t-1) \Vert_{F} \leq  \psi(\delta)$,
where $\psi(\delta) = \big(\sqrt{N_{R}} B +  \sqrt{N_{R}} (B + \delta)  + (B + \delta)^{2} N_{R} \bar{P} ( 2B + \delta)  \big)\delta$ satisfying $\psi(\delta) \rightarrow 0$ as $\delta\rightarrow 0$, i.e., $\psi(\delta) \in O(\delta)$.
\item $\Vert \widetilde{\mathbf{D}}(t-1)\Vert_{F} \leq \psi(\delta)  +\sqrt{N_{R}} B^{2} $
\end{enumerate}
where $B, \delta, N_R,N_T, P$ and $\bar{P}$ are defined in Section \ref{sec:signal-model}
\end{Lem}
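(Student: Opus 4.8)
The plan is to prove the three parts in order, since part (1) is a direct norm estimate, part (2) is the substantive perturbation bound, and part (3) is an immediate consequence of the first two via the triangle inequality. Throughout, I would abbreviate $\mathbf{H} = \mathbf{H}(t-1)$, $\widetilde{\mathbf{H}} = \widetilde{\mathbf{H}}(t-1)$, $\mathbf{Q} = \mathbf{Q}(t-1)$, and set $\mathbf{M} = (\mathbf{I}_{N_R} + \mathbf{H}\mathbf{Q}\mathbf{H}\herm)^{-1}$ and $\widetilde{\mathbf{M}} = (\mathbf{I}_{N_R} + \widetilde{\mathbf{H}}\mathbf{Q}\widetilde{\mathbf{H}}\herm)^{-1}$, so that $\mathbf{D}(t-1) = \mathbf{H}\herm\mathbf{M}\mathbf{H}$ and $\widetilde{\mathbf{D}}(t-1) = \widetilde{\mathbf{H}}\herm\widetilde{\mathbf{M}}\widetilde{\mathbf{H}}$. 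I would first record three workhorse facts. Since $\mathbf{Q}$ is positive semidefinite, $\mathbf{H}\mathbf{Q}\mathbf{H}\herm$ and $\widetilde{\mathbf{H}}\mathbf{Q}\widetilde{\mathbf{H}}\herm$ are positive semidefinite, so $\mathbf{M}$ and $\widetilde{\mathbf{M}}$ are positive definite with all eigenvalues in $(0,1]$, hence $\Vert\mathbf{M}\Vert_F \leq \sqrt{N_R}$ and $\Vert\widetilde{\mathbf{M}}\Vert_F \leq \sqrt{N_R}$. Since $\mathbf{Q}\in\widetilde{\mathcal{Q}}$ is positive semidefinite with $\text{tr}(\mathbf{Q})\leq\bar{P}$, we have $\Vert\mathbf{Q}\Vert_F \leq \text{tr}(\mathbf{Q}) \leq \bar{P}$. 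Finally $\Vert\mathbf{H}\Vert_F \leq B$ and $\Vert\widetilde{\mathbf{H}}\Vert_F \leq \Vert\mathbf{H}\Vert_F + \Vert\widetilde{\mathbf{H}}-\mathbf{H}\Vert_F \leq B+\delta$. Every estimate below then follows from submultiplicativity of the Frobenius norm, $\Vert\mathbf{A}\mathbf{B}\mathbf{C}\Vert_F \leq \Vert\mathbf{A}\Vert_F\Vert\mathbf{B}\Vert_F\Vert\mathbf{C}\Vert_F$. For part (1) this gives at once $\Vert\mathbf{D}(t-1)\Vert_F \leq \Vert\mathbf{H}\herm\Vert_F\Vert\mathbf{M}\Vert_F\Vert\mathbf{H}\Vert_F \leq B\cdot\sqrt{N_R}\cdot B = \sqrt{N_R}B^2$.

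For part (2), the key is the telescoping identity
\[
\mathbf{D}(t-1) - \widetilde{\mathbf{D}}(t-1) = \mathbf{H}\herm\mathbf{M}(\mathbf{H} - \widetilde{\mathbf{H}}) + \mathbf{H}\herm(\mathbf{M} - \widetilde{\mathbf{M}})\widetilde{\mathbf{H}} + (\mathbf{H}\herm - \widetilde{\mathbf{H}}\herm)\widetilde{\mathbf{M}}\widetilde{\mathbf{H}},
\]
obtained by inserting and cancelling the intermediate matrices $\mathbf{H}\herm\mathbf{M}\widetilde{\mathbf{H}}$ and $\mathbf{H}\herm\widetilde{\mathbf{M}}\widetilde{\mathbf{H}}$. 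Applying the triangle inequality and Frobenius submultiplicativity term by term, and using $\Vert\mathbf{H}-\widetilde{\mathbf{H}}\Vert_F = \Vert\mathbf{H}\herm-\widetilde{\mathbf{H}}\herm\Vert_F \leq \delta$, the first term is at most $B\sqrt{N_R}\delta$ and the third is at most $\delta\sqrt{N_R}(B+\delta)$; these are precisely the first two summands of $\psi(\delta)$.

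The middle term is the crux and reduces to bounding $\Vert\mathbf{M} - \widetilde{\mathbf{M}}\Vert_F$, a difference of inverses that cannot be handled by naive norm manipulation. Here I would invoke the resolvent identity $\mathbf{A}^{-1}-\mathbf{B}^{-1} = \mathbf{A}^{-1}(\mathbf{B}-\mathbf{A})\mathbf{B}^{-1}$ with $\mathbf{A}=\mathbf{I}_{N_R}+\mathbf{H}\mathbf{Q}\mathbf{H}\herm$ and $\mathbf{B}=\mathbf{I}_{N_R}+\widetilde{\mathbf{H}}\mathbf{Q}\widetilde{\mathbf{H}}\herm$, which gives $\mathbf{M}-\widetilde{\mathbf{M}} = \mathbf{M}(\widetilde{\mathbf{H}}\mathbf{Q}\widetilde{\mathbf{H}}\herm - \mathbf{H}\mathbf{Q}\mathbf{H}\herm)\widetilde{\mathbf{M}}$. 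A second telescoping $\widetilde{\mathbf{H}}\mathbf{Q}\widetilde{\mathbf{H}}\herm - \mathbf{H}\mathbf{Q}\mathbf{H}\herm = (\widetilde{\mathbf{H}}-\mathbf{H})\mathbf{Q}\widetilde{\mathbf{H}}\herm + \mathbf{H}\mathbf{Q}(\widetilde{\mathbf{H}}\herm - \mathbf{H}\herm)$, bounded with the workhorse facts, yields $\Vert\widetilde{\mathbf{H}}\mathbf{Q}\widetilde{\mathbf{H}}\herm - \mathbf{H}\mathbf{Q}\mathbf{H}\herm\Vert_F \leq \delta\bar{P}(B+\delta) + B\bar{P}\delta = \bar{P}(2B+\delta)\delta$, and then $\Vert\mathbf{M}-\widetilde{\mathbf{M}}\Vert_F \leq \Vert\mathbf{M}\Vert_F\Vert\widetilde{\mathbf{M}}\Vert_F\bar{P}(2B+\delta)\delta \leq N_R\bar{P}(2B+\delta)\delta$. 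Substituting back and loosely bounding $\Vert\mathbf{H}\herm\Vert_F \leq B \leq B+\delta$ so as to match the stated expression, the middle term is at most $(B+\delta)^2 N_R\bar{P}(2B+\delta)\delta$, the third summand of $\psi(\delta)$; summing the three bounds proves part (2). Part (3) is then immediate: $\Vert\widetilde{\mathbf{D}}(t-1)\Vert_F \leq \Vert\widetilde{\mathbf{D}}(t-1)-\mathbf{D}(t-1)\Vert_F + \Vert\mathbf{D}(t-1)\Vert_F \leq \psi(\delta) + \sqrt{N_R}B^2$ by parts (2) and (1).

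The main obstacle is the estimate of $\Vert\mathbf{M}-\widetilde{\mathbf{M}}\Vert_F$: because $\mathbf{M}$ and $\widetilde{\mathbf{M}}$ are inverses of two different (perturbed) matrices, the resolvent identity is essential to convert the difference of inverses into a product featuring the difference $\widetilde{\mathbf{H}}\mathbf{Q}\widetilde{\mathbf{H}}\herm - \mathbf{H}\mathbf{Q}\mathbf{H}\herm$, which is itself controlled only after a further telescoping. A secondary point requiring care is the consistent bookkeeping of which norm is used at each factor; working entirely with the Frobenius norm and the bounds $\Vert\mathbf{M}\Vert_F,\Vert\widetilde{\mathbf{M}}\Vert_F \leq \sqrt{N_R}$ is exactly what produces the $\sqrt{N_R}$ and $N_R$ dependence appearing in $\psi(\delta)$.
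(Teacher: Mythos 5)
Your proof is correct, and parts (1) and (3) as well as the overall three-term telescoping of $\mathbf{D}-\widetilde{\mathbf{D}}$ match the paper's argument (the paper splits so that the fixed resolvent $\mathbf{M}$ sits in the first two terms and the difference $\mathbf{M}-\widetilde{\mathbf{M}}$ is sandwiched between two copies of $\widetilde{\mathbf{H}}$, giving $(B+\delta)^2$ directly, whereas your split gives $B(B+\delta)$ and you loosen it to $(B+\delta)^2$ to match $\psi(\delta)$ --- a cosmetic difference). The genuinely different step is how you control $\Vert \mathbf{M}-\widetilde{\mathbf{M}}\Vert_F$: you use the elementary resolvent identity $\mathbf{A}^{-1}-\mathbf{B}^{-1}=\mathbf{A}^{-1}(\mathbf{B}-\mathbf{A})\mathbf{B}^{-1}$ followed by submultiplicativity and the bound $\Vert(\mathbf{I}+\mathbf{X})^{-1}\Vert_F\le\sqrt{N_R}$ on each factor, yielding $N_R\bar{P}(2B+\delta)\delta$. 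The paper instead invokes a mean-value theorem for complex matrix-valued functions (Lemma 6 of its reference on matrix calculus), computes the Jacobian of $\mathbf{X}\mapsto(\mathbf{I}+\mathbf{X})^{-1}$ via $d\mathbf{X}^{-1}=-\mathbf{X}^{-1}(d\mathbf{X})\mathbf{X}^{-1}$ and a Kronecker-product identity, and bounds its spectral norm by $N_R$ to obtain the same Lipschitz constant. Your route arrives at the identical bound with strictly less machinery --- no Jacobians, vectorization, or spectral-norm-of-Kronecker-product estimates --- and is arguably preferable for a self-contained exposition; the paper's mean-value approach is more general in that it applies to matrix functions that are not differences of explicit inverses. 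Both proofs then reduce $\Vert\widetilde{\mathbf{H}}\mathbf{Q}\widetilde{\mathbf{H}}\herm-\mathbf{H}\mathbf{Q}\mathbf{H}\herm\Vert_F$ to $\bar{P}(2B+\delta)\delta$ by the same second telescoping, so the final constant $\psi(\delta)$ is recovered exactly.
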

\begin{IEEEproof}
See Appendix \ref{app:pf-lm-gradient-approx-bound} for details.
\end{IEEEproof}

The next theorem summarizes the performance of Algorithm \ref{alg:no-csit}.

\begin{Thm}\label{thm:no-csit-performance}
Fix $\epsilon>0$ and define $\gamma = \epsilon$. Under Algorithm \ref{alg:no-csit}, we have\footnote{In our conference version \cite{YuNeely16INFOCOM}, the first inequality of this theorem is mistakenly given by $\frac{1}{t} \sum_{\tau=0}^{t-1} R(\tau) \geq \frac{1}{t}\sum_{\tau=0}^{t-1} R^{\text{opt}}(\tau) - \frac{\bar{P}}{\epsilon t} - \frac{(\psi(\delta)+\sqrt{N_{R}}B^2)^{2}}{2}\epsilon - 2\psi(\delta)\bar{P}$.} for all $t>0$:
\begin{align*}
\frac{1}{t} \sum_{\tau=0}^{t-1} R(\tau) \geq& \frac{1}{t}\sum_{\tau=0}^{t-1} R^{\text{opt}}(\tau) - \frac{2\bar{P}^{2}}{\epsilon t} - \frac{(\psi(\delta)+\sqrt{N_{R}}B^2)^{2}}{2}\epsilon  \\ &- 2\psi(\delta)\bar{P}\\
\text{tr}(\mathbf{Q}(\tau)) \leq& \bar{P}, \forall \tau\in\{0,1,\ldots,t-1\}
\end{align*}
where $\psi(\delta)$ is the constant defined in Lemma \ref{lm:gradient-approx-bound} and $B, \delta,N_R, P$ and $\bar{P}$ are defined in Section \ref{sec:signal-model}. In particular, the sample path time average utility is within $O(\epsilon) + 2\psi(\delta)\bar{P}$ of the optimal average utility for problem \eqref{eq:stochastic-no-csit-obj}-\eqref{eq:stochastic-no-csit-set-con}  whenever $t \geq \frac{1}{\epsilon^2}$.
\end{Thm}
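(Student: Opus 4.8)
The second inequality is immediate: by construction each iterate $\mathbf{Q}(\tau)$ for $\tau\geq 1$ is a projection onto $\widetilde{\mathcal{Q}}$, and the initial point is taken in $\widetilde{\mathcal{Q}}$, so $\text{tr}(\mathbf{Q}(\tau))\leq \bar{P}$ throughout. The substance is the first (regret) inequality. The plan is to run the standard online projected-gradient analysis, but against the \emph{true} reward functions $R(\tau)$ whose gradients $\mathbf{D}(\tau)$ are unavailable, and then to account for the mismatch between $\mathbf{D}(\tau)$ and the approximate gradients $\widetilde{\mathbf{D}}(\tau)$ that the algorithm actually uses, invoking Lemma \ref{lm:gradient-approx-bound} to control that mismatch. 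The whole argument is deterministic and per-sample-path, requiring no distributional knowledge.

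First I would exploit concavity. For each fixed realization, $\mathbf{Q}\mapsto \log\det(\mathbf{I}+\mathbf{H}(\tau)\mathbf{Q}\mathbf{H}\herm(\tau))$ is concave with gradient $\mathbf{D}(\tau)$ at $\mathbf{Q}(\tau)$, so the first-order condition gives, writing $\langle \mathbf{A},\mathbf{B}\rangle=\text{tr}(\mathbf{A}\mathbf{B})$ for Hermitian matrices,
\begin{equation*}
R^{\text{opt}}(\tau) - R(\tau) \leq \langle \mathbf{D}(\tau), \mathbf{Q}^\ast - \mathbf{Q}(\tau)\rangle.
\end{equation*}
I would then split the right side as $\langle \widetilde{\mathbf{D}}(\tau), \mathbf{Q}^\ast - \mathbf{Q}(\tau)\rangle + \langle \mathbf{D}(\tau) - \widetilde{\mathbf{D}}(\tau), \mathbf{Q}^\ast - \mathbf{Q}(\tau)\rangle$ and handle the two pieces separately.

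For the first piece the usual drift argument applies verbatim, with the inexact gradient playing the role of the step direction. Since $\mathbf{Q}^\ast\in\widetilde{\mathcal{Q}}$ and projection onto a convex set is non-expansive,
\begin{align*}
\Vert \mathbf{Q}(\tau+1) - \mathbf{Q}^\ast\Vert_F^2 &\leq \Vert \mathbf{Q}(\tau) + \gamma\widetilde{\mathbf{D}}(\tau) - \mathbf{Q}^\ast\Vert_F^2 \\
&= \Vert \mathbf{Q}(\tau) - \mathbf{Q}^\ast\Vert_F^2 - 2\gamma\langle \widetilde{\mathbf{D}}(\tau), \mathbf{Q}^\ast - \mathbf{Q}(\tau)\rangle + \gamma^2\Vert \widetilde{\mathbf{D}}(\tau)\Vert_F^2.
\end{align*}
Rearranging, summing over $\tau\in\{0,\ldots,t-1\}$ so that the squared-distance terms telescope, and dropping the final nonnegative distance gives $\sum_{\tau=0}^{t-1}\langle \widetilde{\mathbf{D}}(\tau), \mathbf{Q}^\ast - \mathbf{Q}(\tau)\rangle \leq \frac{1}{2\gamma}\Vert \mathbf{Q}(0) - \mathbf{Q}^\ast\Vert_F^2 + \frac{\gamma}{2}\sum_{\tau=0}^{t-1}\Vert \widetilde{\mathbf{D}}(\tau)\Vert_F^2$. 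I would then use that any positive semidefinite matrix with trace at most $\bar{P}$ has Frobenius norm at most its trace, hence at most $\bar{P}$, giving $\Vert \mathbf{Q}(0) - \mathbf{Q}^\ast\Vert_F \leq 2\bar{P}$, and apply part (3) of Lemma \ref{lm:gradient-approx-bound} to bound $\Vert \widetilde{\mathbf{D}}(\tau)\Vert_F^2 \leq (\psi(\delta)+\sqrt{N_R}B^2)^2$. With $\gamma=\epsilon$ and after dividing by $t$, these produce exactly the $\frac{2\bar{P}^2}{\epsilon t}$ and $\frac{(\psi(\delta)+\sqrt{N_R}B^2)^2}{2}\epsilon$ terms.

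For the error piece, Cauchy--Schwarz together with part (2) of Lemma \ref{lm:gradient-approx-bound} and the same diameter bound $\Vert \mathbf{Q}^\ast - \mathbf{Q}(\tau)\Vert_F \leq 2\bar{P}$ yields $\langle \mathbf{D}(\tau) - \widetilde{\mathbf{D}}(\tau), \mathbf{Q}^\ast - \mathbf{Q}(\tau)\rangle \leq 2\psi(\delta)\bar{P}$, accounting for the remaining $2\psi(\delta)\bar{P}$ term. Summing the concavity inequality over $\tau$, dividing by $t$, and combining the three bounds gives the claim. The only genuinely new ingredient beyond textbook online gradient ascent is this error decomposition, so I expect the main care to be in attributing the inexact-gradient slack correctly: the algorithm never observes the true gradients $\mathbf{D}(\tau)$, yet the regret is measured against the true rewards $R(\tau)$, and it is the separation into an $\widetilde{\mathbf{D}}$-regret (controlled by telescoping) plus an $O(\delta)$ gradient-error term (controlled by Lemma \ref{lm:gradient-approx-bound} and the bounded feasible-set diameter) that makes the $O(\delta)$ optimality gap appear.
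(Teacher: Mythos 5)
Your proposal is correct and follows essentially the same route as the paper's proof: the non-expansiveness of the projection onto $\widetilde{\mathcal{Q}}$, the telescoping of $\Vert\mathbf{Q}(\tau)-\mathbf{Q}^\ast\Vert_F^2$, the concavity/first-order bound with the true gradient $\mathbf{D}(\tau)$, and the splitting of the inner product into an $\widetilde{\mathbf{D}}$-regret term plus a gradient-error term controlled by Lemma \ref{lm:gradient-approx-bound} and the diameter bound $\Vert\mathbf{Q}^\ast-\mathbf{Q}(\tau)\Vert_F\leq 2\bar{P}$. The only difference is cosmetic ordering (you apply concavity before decomposing the gradient, the paper does it after rearranging the projection inequality), and all constants match.
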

\begin{IEEEproof}
The second inequality trivially follows from the fact that $\mathbf{Q}(t)\in\widetilde{\mathcal{Q}}, \forall t\in\{0,1,\ldots\}$. It remains to prove the first inequality. This proof extends the regret analysis of conventional online convex optimization \cite{Zinkevich03ICML} by considering inexact gradient $\widetilde{\mathbf{D}}(t-1)$. 

For all slots $\tau\in\{1,2,\ldots\}$, the transmit covariance update in Algorithm \ref{alg:no-csit} satisfies:
\begin{align*}
&\Vert \mathbf{Q}(\tau) - \mathbf{Q}^\ast \Vert_{F}^{2}\\ 
=& \Vert  \mathcal{P}_{\widetilde{\mathcal{Q}}} \big[ \mathbf{Q}(\tau-1) + \gamma \widetilde{\mathbf{D}}(\tau-1)\big] - \mathbf{Q}^{\ast} \Vert_{F}^{2}\\
\overset{(a)}{\leq}& \Vert  \mathbf{Q}(\tau-1) +\gamma \widetilde{\mathbf{D}}(\tau-1)- \mathbf{Q}^{\ast}\Vert_{F}^{2} \\
=& \Vert \mathbf{Q}(\tau-1)- \mathbf{Q}^{\ast}\Vert_{F}^{2} + 2\gamma \text{tr} \big(\widetilde{\mathbf{D}}\herm(\tau-1) (\mathbf{Q}(\tau-1) - \mathbf{Q}^{\ast})\big) \\ &+  \gamma^2 \Vert \widetilde{\mathbf{D}}(\tau-1)\Vert_{F}^{2}\\
= &\Vert \mathbf{Q}(\tau-1)- \mathbf{Q}^{\ast}\Vert_{F}^{2} + 2\gamma \text{tr}\big(\mathbf{D}\herm(\tau-1) (\mathbf{Q}(\tau-1) - \mathbf{Q}^{\ast})\big) \\ &+ 2\gamma \text{tr}\big(( \widetilde{\mathbf{D}}(t-1) - \mathbf{D}(\tau-1))\herm (\mathbf{Q}(\tau-1) - \mathbf{Q}^{\ast})\big) \\ &+ \gamma^2 \Vert \widetilde{\mathbf{D}}(\tau-1)\Vert_{F}^{2},
\end{align*}
where (a) follows from the non-expansive property of projections onto convex sets. Define $\Delta(t) = \Vert \mathbf{Q}(t+1) - \mathbf{Q}^{\ast}\Vert_{F}^{2} - \Vert\mathbf{Q}(t) - \mathbf{Q}^{\ast}\Vert_{F}^{2}$. Rearranging terms in the last equation and dividing by factor $2\gamma$ implies \begin{align}
&\text{tr}\big(\mathbf{D}\herm(\tau-1) (\mathbf{Q}(\tau-1) - \mathbf{Q}^{\ast}) \big) \nonumber \\
\geq &\frac{1}{2\gamma}\Delta(\tau-1) - \frac{\gamma}{2} \Vert \widetilde{\mathbf{D}}(\tau-1)\Vert_{F}^{2} \nonumber \\ &- \text{tr}\big(( \widetilde{\mathbf{D}}(\tau-1) - \mathbf{D}(\tau-1))\herm (\mathbf{Q}(\tau-1) - \mathbf{Q}^{\ast})\big)   \label{eq:pf-thm-nocsit-performance-eq1}
\end{align} 
Define $f(\mathbf{Q}) = \log\det(\mathbf{I} + \mathbf{H}(\tau-1) \mathbf{Q} \mathbf{H}\herm(\tau-1))$. By Fact \ref{fact:derivative-log-det} in Appendix \ref{sec:linear-algebra}, $f(\cdot)$ is concave over $\widetilde{\mathcal{Q}}$ and $\mathbf{D}(t-1) = \nabla_{\mathbf{Q}} f(\mathbf{Q}(t-1))$. Note that $\mathbf{Q}^{\ast}\in \widetilde{\mathcal{Q}}$. By Fact \ref{fact:convex-function-first-order-condition} in Appendix \ref{sec:linear-algebra}, we have 
\begin{align}
f(\mathbf{Q}(\tau-1)) - f(\mathbf{Q}^{\ast}) \geq \text{tr}(\mathbf{D}\herm(\tau-1) (\mathbf{Q}(\tau-1) - \mathbf{Q}^{\ast})) \label{eq:pf-thm-nocsit-performance-eq2}
\end{align} 
Note that $f(\mathbf{Q}(\tau-1)) = R(\tau-1)$ and $f(\mathbf{Q}^{\ast}) = R^{\text{opt}}(\tau-1)$. Combining \eqref{eq:pf-thm-nocsit-performance-eq1} and \eqref{eq:pf-thm-nocsit-performance-eq2} yields
\begin{align}
&R(\tau-1) - R^{\text{opt}}(\tau-1) \nonumber \\
\geq &\frac{1}{2\gamma}\Delta(\tau-1) - \frac{\gamma}{2} \Vert \widetilde{\mathbf{D}}(\tau-1)\Vert_{F}^{2}  \nonumber \\ &- \text{tr}\big(( \widetilde{\mathbf{D}}(\tau-1) - \mathbf{D}(\tau-1))\herm (\mathbf{Q}(\tau-1) - \mathbf{Q}^{\ast})\big)  \nonumber\\
\overset{(a)}{\geq}&\frac{1}{2\gamma}\Delta(\tau-1) - \frac{\gamma}{2} \Vert \widetilde{\mathbf{D}}(\tau-1)\Vert_{F}^{2}  \nonumber \\&- \Vert \widetilde{\mathbf{D}}(\tau-1) -  \mathbf{D}(\tau-1) \Vert_{F} \Vert \mathbf{Q}(\tau-1) - \mathbf{Q}^{\ast}\Vert_{F} \nonumber\\
\overset{(b)}{\geq}&\frac{1}{2\gamma}\Delta(\tau-1) - \frac{\gamma}{2} (\psi(\delta)+\sqrt{N_{R}} B^{2})^{2} - 2\psi(\delta)\bar{P} \nonumber
\end{align}
where (a) follows from Fact \ref{fact:1} in Appendix \ref{sec:linear-algebra} and (b) follows from Lemma \ref{lm:gradient-approx-bound} and the fact that  $\Vert \mathbf{Q}(\tau-1) -\mathbf{Q}^\ast\Vert_{F} \leq \Vert \mathbf{Q}(\tau-1)\Vert_{F} + \Vert\mathbf{Q}^\ast\Vert_{F} \leq \text{tr}(\mathbf{Q}(\tau-1)) + \text{tr}(\mathbf{Q}^{\ast}) \leq 2\bar{P}$, which is implied by Fact \ref{fact:1},  Fact \ref{fact:frobenius-trace-inequality} in Appendix \ref{sec:linear-algebra} and the fact that $\mathbf{Q}(\tau-1), \mathbf{Q}^{\ast}\in \widetilde{\mathcal{Q}}$. Replacing $\tau -1$ with $\tau$  yields for all $\tau\in\{0,1,\ldots\}$
\begin{align}
&R(\tau)- R^{\text{opt}}(\tau) \nonumber \\
\geq& \frac{1}{2\gamma}\Delta(\tau) - \frac{\gamma}{2} (\psi(\delta)+\sqrt{N_{R}} B^{2})^{2} - 2\psi(\delta)\bar{P} \label{eq:pf-thm-nocsit-performance-eq3}
\end{align}

Fix $t>0$. Summing over $\tau\in\{0,1,\ldots,t-1\}$, dividing by factor $t$ and simplifying telescope sum $\sum_{\tau=0}^{t-1} \Delta(\tau)$ gives 
\begin{align*}
&\frac{1}{t}\sum_{\tau=0}^{t-1} R(\tau) -  \frac{1}{t}\sum_{\tau=0}^{t-1} R^{\text{opt}}(\tau) )\\ 
\geq & \frac{1}{2\gamma t}(\Vert \mathbf{Q}(t) - \mathbf{Q}^{\ast}\Vert^2_{F} - \Vert\mathbf{Q}(0) - \mathbf{Q}^{\ast}\Vert^{2}_{F}) \\ &- \frac{\gamma}{2} (\psi(\delta)+\sqrt{N_{R}} B^{2})^{2} - 2\psi(\delta)\bar{P} \\
\overset{(a)}{\geq} & -\frac{2\bar{P}^{2}}{\gamma t} - \frac{\gamma}{2} (\psi(\delta) +\sqrt{N_{R}} B^{2})^{2} - 2\psi(\delta)\bar{P}
\end{align*}
where (a) follows from $\Vert \mathbf{Q}(0) - \mathbf{Q}^{\ast}\Vert_{F}\leq \Vert \mathbf{Q}(0)\Vert_{F} +\Vert \mathbf{Q}^{\ast} \Vert_{F} \leq \text{tr}(\mathbf{Q}(0)) + \text{tr}(\mathbf{Q}^{\ast}) \leq 2\bar{P}$ and $\Vert\mathbf{Q}(t) - \mathbf{Q}^{\ast}\Vert^{2}_{F}\geq 0$.
\end{IEEEproof}

Theorem \ref{thm:no-csit-performance} proves a sample path guarantee on the utility. It shows that the convergence time to reach an $O(\epsilon)+2\psi(\delta)\bar{P}$ approximate solution is $1/\epsilon^{2}$. Note that if $\delta =0$, then equations \eqref{eq:online-1} and \eqref{eq:online-2} are recovered by Theorem \ref{thm:no-csit-performance}. Theorem \ref{thm:no-csit-performance} also isolates the effect of missing CDIT and CSIT inaccuracy. The error term $O(\epsilon)$ is corresponding to the effect of missing CDIT and can be made arbitrarily small by choosing a small $\gamma$ and running the algorithm for more than $\frac{1}{\epsilon^2}$ iterations. The observation is that the effect of missing CDIT vanishes as Algorithm \ref{alg:no-csit} runs for a sufficiently long time and hence delayed but accurate CSIT is almost as good as CDIT. The other error term $2\psi(\delta)\bar{P}$ is corresponding to the effect of CSIT inaccuracy and does not vanish. The performance degradation due to channel inaccuracy scales linearly with respect to the channel error since $\psi(\delta)\in O(\delta)$. Intuitively, this is reasonable since any algorithm based on inaccurate CSIT is actually optimizing another different MIMO system. 

\subsection{Extensions}\label{sec:delay-csit-extension}
\subsubsection{$T$-Slot Delayed and Inaccurate CSIT}
Thus far, we have assumed that CSIT is always delayed by one slot. In fact, if CSIT is delayed by $T$ slots, we can modify the update of transmit covariances in Algorithm \ref{alg:no-csit} as $\mathbf{Q}(t) = \mathcal{P}_{\widetilde{\mathcal{Q}}}[\mathbf{Q}(t-T) + \gamma \widetilde{\mathbf{D}}(t-T)]$. A $T$-slot version of Theorem  \ref{thm:no-csit-performance} can be similarly proven.

\subsubsection{Algorithm \ref{alg:no-csit} with Time Varying $\gamma$}  Algorithm \ref{alg:no-csit} can be extended to have time varying step size $\gamma(t) = \frac{1}{\sqrt{t}}$ at slot $t$.  The next lemma shows that such an algorithm can approach an $\epsilon+2\psi(\delta)\bar{P}$ approximate solution with  $O(1/\epsilon^{2})$ iterations.

\begin{Lem}
Fix $\epsilon>0$. If we modify Algorithm \ref{alg:no-csit} by using $\gamma(t) = \frac{1}{\sqrt{t}}$ as the step size $\gamma$ at each slot $t$, then for all $t > 0$:
\begin{align*}
\frac{1}{t}\sum_{\tau=0}^{t-1} R(\tau) \geq& \frac{1}{t}\sum_{\tau=0}^{t-1} R^{\text{opt}}(\tau) - \frac{2\bar{P}^{2}}{\sqrt{t}}  - \frac{1}{\sqrt{t}}(\psi(\delta) +\sqrt{N_{R}} B^{2})^{2}  \\ &- 2\psi(\delta)\bar{P},\\
\frac{1}{t} \sum_{\tau=0}^{t-1} \text{tr}(\mathbf{Q}(\tau)) \leq& \bar{P},
\end{align*} 
where $B, \delta,N_R, P$ and $\bar{P}$ are defined in Section \ref{sec:signal-model}
\end{Lem}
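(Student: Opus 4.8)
The plan is to rerun the regret argument from the proof of Theorem \ref{thm:no-csit-performance}, but to keep track of the fact that the step size used to produce $\mathbf{Q}(\tau+1)$ from $\mathbf{Q}(\tau)$ is now the slot-dependent value $\gamma(\tau+1) = \frac{1}{\sqrt{\tau+1}}$ rather than a fixed $\gamma$. The derivation leading to \eqref{eq:pf-thm-nocsit-performance-eq3} never uses that the step size is constant across slots: it invokes only the non-expansiveness of $\mathcal{P}_{\widetilde{\mathcal{Q}}}[\cdot]$, the concavity of $f$, and the gradient bounds of Lemma \ref{lm:gradient-approx-bound}, all applied at a single update. Hence the same argument, carried out at slot $\tau+1$ with step $\gamma(\tau+1)$, yields for every $\tau \in \{0,1,\ldots\}$
\begin{align*}
R(\tau) - R^{\text{opt}}(\tau) \geq& \frac{1}{2\gamma(\tau+1)}\big(\Vert\mathbf{Q}(\tau+1) - \mathbf{Q}^{\ast}\Vert_{F}^2 - \Vert\mathbf{Q}(\tau) - \mathbf{Q}^{\ast}\Vert_{F}^2\big) \\
&- \frac{\gamma(\tau+1)}{2}(\psi(\delta)+\sqrt{N_{R}}B^2)^2 - 2\psi(\delta)\bar{P}.
\end{align*}

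First I would sum this over $\tau\in\{0,\ldots,t-1\}$ and divide by $t$. The last two terms are handled directly: the constant term contributes $-2\psi(\delta)\bar{P}$, while $\sum_{\tau=0}^{t-1}\gamma(\tau+1) = \sum_{k=1}^{t}\frac{1}{\sqrt{k}} \leq 2\sqrt{t}$ makes the middle term contribute at least $-\frac{1}{\sqrt{t}}(\psi(\delta)+\sqrt{N_{R}}B^2)^2$ after dividing by $t$. The main obstacle is the first term: writing $a_\tau = \Vert\mathbf{Q}(\tau)-\mathbf{Q}^{\ast}\Vert_{F}^2$ and $c_\tau = \frac{1}{2\gamma(\tau+1)} = \frac{\sqrt{\tau+1}}{2}$, the weighted differences $\sum_{\tau=0}^{t-1} c_\tau(a_{\tau+1}-a_\tau)$ no longer telescope, because in the fixed-step case the constant $\frac{1}{2\gamma}$ could be pulled out, whereas here the weight $c_\tau$ varies with $\tau$.

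To handle this I would apply summation by parts, obtaining
\[
\sum_{\tau=0}^{t-1} c_\tau(a_{\tau+1}-a_\tau) = c_{t-1}a_t - c_0 a_0 - \sum_{j=1}^{t-1}(c_j - c_{j-1})a_j.
\]
Since $c_\tau$ is increasing in $\tau$ and every $a_\tau \geq 0$, I drop the nonnegative term $c_{t-1}a_t$; and since $\mathbf{Q}(\tau),\mathbf{Q}^{\ast}\in\widetilde{\mathcal{Q}}$ forces $a_\tau = \Vert\mathbf{Q}(\tau)-\mathbf{Q}^{\ast}\Vert_{F}^2 \leq 4\bar{P}^2$ (exactly as in the proof of Theorem \ref{thm:no-csit-performance}, via Fact \ref{fact:1} and Fact \ref{fact:frobenius-trace-inequality}, taking $\mathbf{Q}(0)\in\widetilde{\mathcal{Q}}$), each difference $c_j-c_{j-1}\geq 0$ can be paired with the uniform bound on $a_j$. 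The telescoping identity $c_0 + \sum_{j=1}^{t-1}(c_j-c_{j-1}) = c_{t-1}$ then collapses the whole expression to the clean lower bound $-4\bar{P}^2\,c_{t-1} = -2\sqrt{t}\,\bar{P}^2$, which contributes $-\frac{2\bar{P}^2}{\sqrt{t}}$ after dividing by $t$.

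Collecting the three contributions produces exactly the claimed first inequality. The second inequality is immediate and inherited unchanged: each projection step forces $\mathbf{Q}(\tau)\in\widetilde{\mathcal{Q}}$, so $\text{tr}(\mathbf{Q}(\tau))\leq\bar{P}$ for every $\tau$, whence the time average is at most $\bar{P}$. The only genuinely new ingredient relative to Theorem \ref{thm:no-csit-performance} is the summation-by-parts bookkeeping forced by the time-varying weights, together with the monotonicity of $c_\tau$ and the uniform bound $a_\tau\leq 4\bar{P}^2$; every other estimate is reused verbatim.
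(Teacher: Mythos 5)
Your proposal is correct and follows essentially the same route as the paper's own proof: the same per-slot inequality with step $\gamma(\tau+1)$, the same Abel/summation-by-parts treatment of the weighted telescoping sum $\sum_\tau \sqrt{\tau+1}\,\Delta(\tau)$ using the uniform bound $\Vert\mathbf{Q}(\tau)-\mathbf{Q}^{\ast}\Vert_F^2\leq 4\bar{P}^2$ and monotonicity of the weights, and the same bound $\sum_{k=1}^t k^{-1/2}\leq 2\sqrt{t}$ for the step-size term. The only differences are notational (you fold the factor $1/2$ into the weights $c_\tau$).
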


\begin{IEEEproof}
The second inequality again follows from the fact that $\mathbf{Q}(t)\in\widetilde{\mathcal{Q}}, \forall t\in\{0,1,\ldots\}$. It remains to prove the first inequality.  With $\gamma(t) = \frac{1}{\sqrt{t}}$, equation \eqref{eq:pf-thm-nocsit-performance-eq3} in the proof of Theorem \ref{thm:no-csit-performance} becomes $R(\tau) - R^{\text{opt}}(\tau)\geq \frac{1}{2\gamma(\tau+1)}\Delta(\tau)  - \frac{\gamma(\tau+1)}{2}(\psi(\delta) + \sqrt{N_{R}}B^{2})^{2} -2 \psi(\delta) \bar{P}$  for all $\tau\in \{0,1,\ldots\}$.
Fix $t>0$.  Summing over $\tau\in\{0,1,\ldots, t-1\}$ and dividing by factor $t$ yields that for all $t>0$:
\begin{align*}
&\frac{1}{t}\sum_{\tau=0}^{t-1}R(\tau) -\frac{1}{t}\sum_{\tau=0}^{t-1} R^{\text{opt}}(\tau)\\
\geq&  \frac{1}{2t}\sum_{\tau=0}^{t-1}\sqrt{\tau+1} \Delta(\tau) - \frac{1}{t} \left(\sum_{\tau=0}^{t-1}\frac{1}{2\sqrt{\tau+1}}\right)(\psi(\delta)+\sqrt{N_{R}} B^{2})^{2} \\&- 2\psi(\delta)\bar{P} \\
\overset{(a)}{\geq} &  -\frac{2\bar{P}^{2}}{\sqrt {t}} - \frac{1}{\sqrt{t}} (\psi(\delta) +\sqrt{N_{R}} B^{2})^{2} - 2\psi(\delta)\bar{P}
\end{align*}
where (a) follows because $\sum_{\tau=0}^{t-1}\sqrt{\tau+1} \Delta(\tau) 
= \sqrt{t} \Vert \mathbf{Q}(t) - \mathbf{Q}^{\ast}\Vert^{2}_{F} - \Vert \mathbf{Q}(0) - \mathbf{Q}^{\ast}\Vert^{2}_{F} +  \sum_{\tau=0}^{t-2} (\sqrt{\tau+1} - \sqrt{\tau+2}) \Vert \mathbf{Q}(\tau+1) - \mathbf{Q}^{\ast}\Vert^{2}_{F} 
\geq - \Vert \mathbf{Q}(0) - \mathbf{Q}^{\ast}\Vert^{2}_{F} + 4\bar{P}^{2} \sum_{\tau=0}^{t-2} (\sqrt{\tau+1} - \sqrt{\tau+2}) \geq -4\bar{P}^{2}\sqrt{t}$ and $\sum_{\tau=0}^{t-1}\frac{1}{2\sqrt{\tau+1}} \leq \sqrt{t}$.
\end{IEEEproof}

An advantage of time varying step sizes is the performance automatically gets improved as the algorithm runs and there is no need to restart the algorithm with a different constant step size if a better performance is demanded.

\section{Rate Adaptation }

To achieve the capacity characterized by either problem \eqref{eq:stochastic-with-csit-obj}-\eqref{eq:stochastic-with-csit-set-con} or problem \eqref{eq:stochastic-no-csit-obj}-\eqref{eq:stochastic-no-csit-set-con}, we also need to consider the rate allocation associated with the transmit covariance, namely, we need to decide how much data is delivered at each slot.  If the accurate instantaneous CSIT is available, the transmitter can simply deliver $\log\det (\mathbf{I} + \mathbf{H}(t) \mathbf{Q}(t) \mathbf{H}\herm(t))$ amount of data at slot $t$ once $\mathbf{Q}(t)$ is decided.  However, in the cases when instantaneous CSIT is inaccurate or only delayed CSIT is available, the transmitter does not know the associated instantaneous channel capacity without knowing $\mathbf{H}(t)$.  These cases belong to the representative communication scenarios where channels are unknown to the transmitter and \emph{rateless codes} are usually used as a solution. To send $N$ bits of source data,  the rateless code keeps sending encoded information bits without knowing instantaneous  channel capacity such that the receiver can decode all $N$ bits as long as the accumulated channel capacity for sufficiently many slots is larger than $N$. Many practical rateless codes for scalar or MIMO fading channels have been designed in \cite{Erez12IT,Fan10TWC,LiTse16ISIT}.

This section provides an information theoretical rate adaptation policy based on rateless codes that can be combined with the dynamic power allocation algorithms developed in this paper.  

The rate adaptation scheme is as follows: Let $N$ be a large number.  Encode $N$ bits of source data with a capacity achieving code for a channel with capacity no less than $N$ bits per slot. At slot $0$, deliver the above encoded data with transmit covariance $\mathbf{Q}(0)$ given by Algorithm \ref{alg:with-csit} or Algorithm \ref{alg:no-csit}.  The receiver knows channel $\mathbf{H}(0)$,  calculates the channel capacity $R(0) = \log\det(\mathbf{I} + \mathbf{H}(0) \mathbf{Q}(0) \mathbf{H}\herm (0))$; and reports back the scalar $R(0)$ to the transmitter.  At slot $1$, the transmitter removes the first $R(0)$ bits from the $N$ bits of source data,  encodes the remaining $N-R(0)$ bits with a capacity achieving code for a channel with capacity no less than $N-R(0)$ bits per slot; and delivers the encoded data with transmit covariance $\mathbf{Q}(1)$ given by Algorithm \ref{alg:with-csit} or Algorithm \ref{alg:no-csit}.  The receiver knows channel $\mathbf{H}(1)$,  calculates the channel capacity $R(1) = \log\det(\mathbf{I} + \mathbf{H}(1) \mathbf{Q}(1) \mathbf{H}\herm (1))$; and reports back the scalar $R(1)$ to the transmitter. Repeat the above process until slot $T-1$ such that $\sum_{t=0}^{T-1} R(t) >N$. 

For the decoding,  the receiver can decode all the $N$ bits in a reverse order using the idea of successive decoding \cite{book_FundamentalWireless}.  At slot $T-1$, since $N-\sum_{t=1}^{T-2} R(t) < R(T-1)$, that is, $N-\sum_{t=0}^{T-2} R(t) < R(T-1)$ bits of source data are delivered over a channel with capacity $R(T-1)$ bits per slot, the receiver can decode all delivered data ($N-\sum_{t=0}^{T-2} R(t)$ bits) with zero error.  Note that $N-\sum_{t=0}^{T-3} R(t) = R(T-2) + N-\sum_{t=0}^{T-2} R(t)$ bits are delivered at slot $T-2$ over a channel with capacity $R(T-2)$ bits per slot.  The receiver subtracts the $N-\sum_{t=0}^{T-2} R(t)$ bits that are already decoded such that only $R(T-2)$ bits remain to be decoded. Thus, the $R(T-2)$ bits can be successfully decoded.  Repeat this process until all $N$ bits are decoded. 

Using the above rate adaptation and decoding strategy,  $N$ bits are delivered and decoded within $T-1$ slots during which the sum capacity is $\sum_{t=0}^{T-1} R(t)$ bits. When $N$ is large enough, the rate loss $\sum_{t=0}^{T-1} R(t) -N$ is negligible.  This rate adaptation scheme does not require $\mathbf{H}(t)$ and only requires to report back the scalar $R(t-1)$ to the transmitter at each slot $t$.

\section{Simulations}
\subsection{A simple MIMO system with two channel realizations}

Consider a $2\times 2$ MIMO system with two equally likely channel realizations: 
\begin{align*}
\mathbf{H}_1 = \left[\begin{array}{cc}  1.3131e^{j1.9590\pi}  & 2.3880e^{j0.7104\pi} \\ 2.5567e^{j1.5259\pi}  &2.8380e^{j0.3845\pi} \end{array}\right],\\
\mathbf{H}_2 = \left[\begin{array}{cc} 1.4781e^{j0.9674\pi}  &1.5291e^{j0.1396\pi}\\ 0.0601e^{j0.9849\pi} & 0.1842 e^{j1.9126\pi} \end{array}\right].
\end{align*}
This simple scenario is considered as a test case because, when there are only two possible channels with known 
channel probabilities, it is easy to find an optimal baseline algorithm by solving problem \eqref{eq:stochastic-with-csit-obj}-\eqref{eq:stochastic-with-csit-set-con} or problem \eqref{eq:stochastic-no-csit-obj}-\eqref{eq:stochastic-no-csit-set-con} directly.  The goal is to show that the proposed algorithms (which do not 
have channel distribution information) come close to this baseline.  The proposed algorithms can be
implemented just as easily in cases when there are an infinite number of possible channel state matrices, 
rather than just two.  However, in that case it is difficult to find an optimal baseline algorithm since problem \eqref{eq:stochastic-with-csit-obj}-\eqref{eq:stochastic-with-csit-set-con} or problem \eqref{eq:stochastic-no-csit-obj}-\eqref{eq:stochastic-no-csit-set-con} are difficult to solve.\footnote{As discussed in Section \ref{sec:with-csit-discussion}, this is known as the curse of dimensionality for stochastic optimization due to the large sample size.} 

%\footnote{This is known as the curse of dimensionality for stochastic optimization due to the large sample size. That is, even with CDIT, problem \eqref{eq:stochastic-with-csit-obj}-\eqref{eq:stochastic-with-csit-set-con} and problem \eqref{eq:stochastic-no-csit-obj}-\eqref{eq:stochastic-no-csit-set-con} can be numerically hard to solve when the sample size of $\mathbf{H}$ is large. In contrast, the dynamic algorithms proposed in this paper can deal with problems even with an infinite number of samples and the performance guarantees are independent of the sample size.} 

The power constraints are $\bar{P} =2$ and $P = 3$.  If CSIT has error,  $\mathbf{H}_1$ and $\mathbf{H}_2$ are observed as $\widetilde{\mathbf{H}}_1$ and $\widetilde{\mathbf{H}}_2$, respectively. Consider two CSIT error cases.  CSIT Error Case 1:  $\widetilde{\mathbf{H}}_1 = \left[\begin{array}{cc}  1.3131e^{j2\pi}  & 2.3880e^{j0.75\pi} \\ 2.5567e^{j1.5\pi}  &2.8380e^{j0.5\pi} \end{array}\right]$ and $\widetilde{\mathbf{H}}_2 = \left[\begin{array}{cc} 1.4781e^{j1\pi}  &1.5291e^{j0.25\pi}\\ 0.0601e^{j1\pi} & 0.1842 e^{j2\pi} \end{array}\right]$, where the magnitudes are accurate but the phases are rounded to the nearest $\pi/4$ phase; CSIT Error Case 2:  $\widetilde{\mathbf{H}}_1 = \left[\begin{array}{cc}  1.3e^{j2\pi}  & 2.4e^{j0.5\pi} \\ 2.6e^{j1.5\pi}  &2.8e^{j0.5\pi} \end{array}\right]$ and $\widetilde{\mathbf{H}}_2 = \left[\begin{array}{cc} 1.5e^{j1\pi}  &1.5e^{j0\pi}\\ 0 & 0.2 e^{j2\pi} \end{array}\right]$, where the magnitudes are rounded to the first digit after the decimal point and the phases are rounded to the nearest $\pi/2$ phase.

In the instantaneous CSIT case, consider Baseline 1 where the optimal solution $\mathbf{Q}^\ast(\mathbf{H})$ to problem \eqref{eq:stochastic-with-csit-obj}-\eqref{eq:stochastic-with-csit-set-con} is calculated by assuming the knowledge that $\mathbf{H}_1$ and $\mathbf{H}_2$ appear with equal probabilities and $\mathbf{Q}(t) = \mathbf{Q}^\ast(\mathbf{H}(t))$ is used at each slot $t$. Figure \ref{fig:withcsit-simple} compares the performance of Algorithm \ref{alg:with-csit} (with $V=100$) under various CSIT accuracy conditions and Baseline 1. It can be seen that Algorithm \ref{alg:with-csit} has a performance close to that attained by the optimal solution to problem \eqref{eq:stochastic-with-csit-obj}-\eqref{eq:stochastic-with-csit-set-con} requiring channel distribution information. (Note that a larger $V$ gives a even closer performance with a longer convergence time.) It can also be observed that the performance of Algorithm \ref{alg:with-csit} becomes worse as CSIT error gets larger.

\begin{figure}[htbp]
\centering
   \includegraphics[width=0.5\textwidth,height=0.55\textheight,keepaspectratio=true]{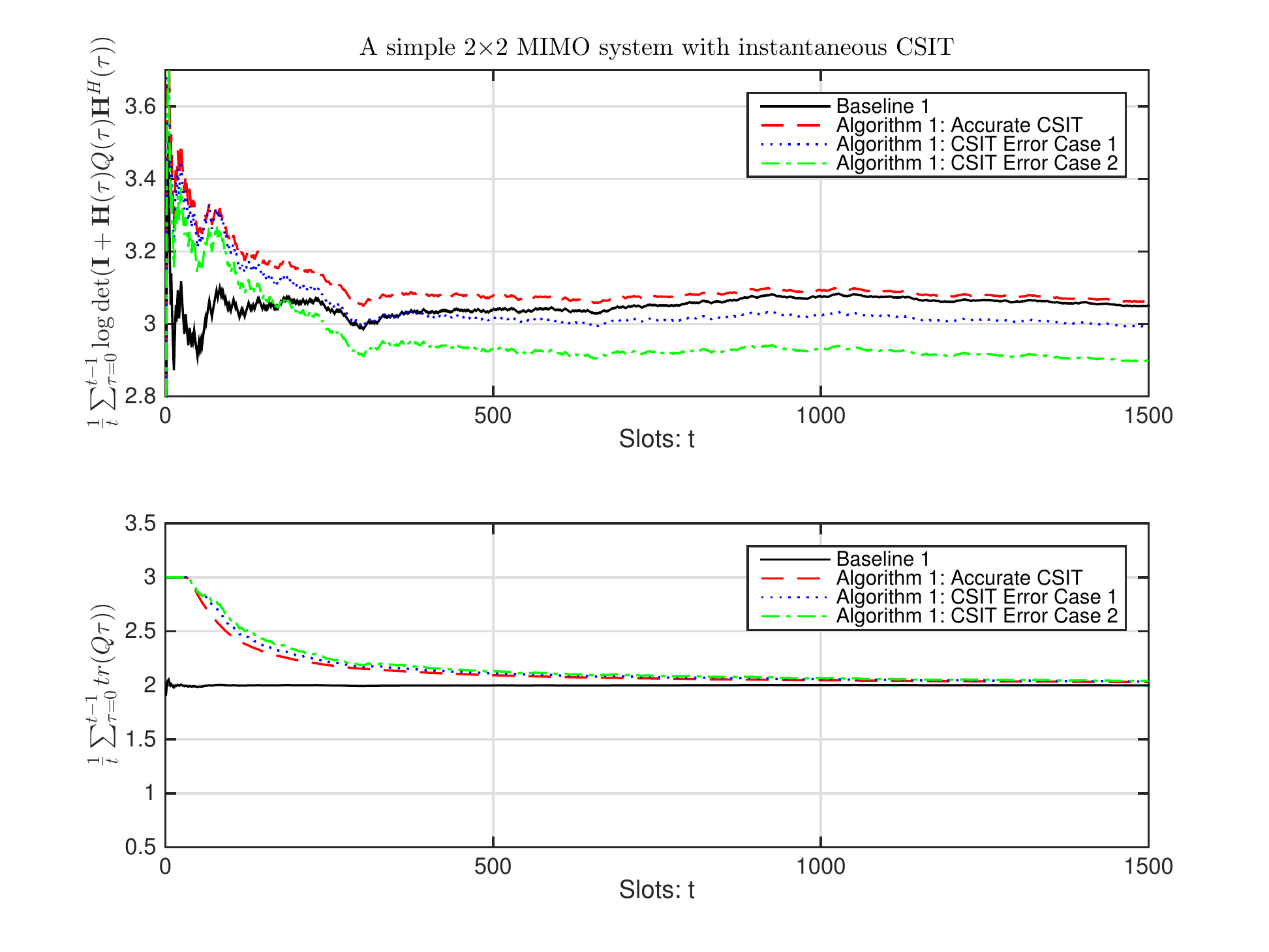} % requires the graphicx package
   \caption{A simple MIMO system with instantaneous CSIT.}
   \label{fig:withcsit-simple}
\end{figure}

In the delayed CSIT case, consider Baseline 2 where the optimal solution $\mathbf{Q}^\ast$ to problem \eqref{eq:stochastic-no-csit-obj}-\eqref{eq:stochastic-no-csit-set-con} is calculated by assuming the knowledge that $\mathbf{H}_1$ and $\mathbf{H}_2$ appear with equal probabilities; and $\mathbf{Q}(t) = \mathbf{Q}^\ast$ is used at each slot $t$. Figure \ref{fig:nocsit-simple} compares the performance of Algorithm \ref{alg:no-csit} (with $\gamma=0.01$) under various CSIT accuracy conditions and Baseline 2. Note that the average power is not drawn since the average power constraint is satisfied for all $t$ in all schemes. It can be seen that Algorithm \ref{alg:no-csit} has a performance close to that attained by the optimal solution to problem \eqref{eq:stochastic-no-csit-obj}-\eqref{eq:stochastic-no-csit-set-con} requiring channel distribution information. (Note that a smaller $\gamma$ gives a even closer performance with a longer convergence time.) It can also be observed that the performance of Algorithm \ref{alg:no-csit} becomes worse as CSIT error gets larger.
\begin{figure}[htbp]
\centering
   \includegraphics[width=0.5\textwidth,height=0.5\textheight,keepaspectratio=true]{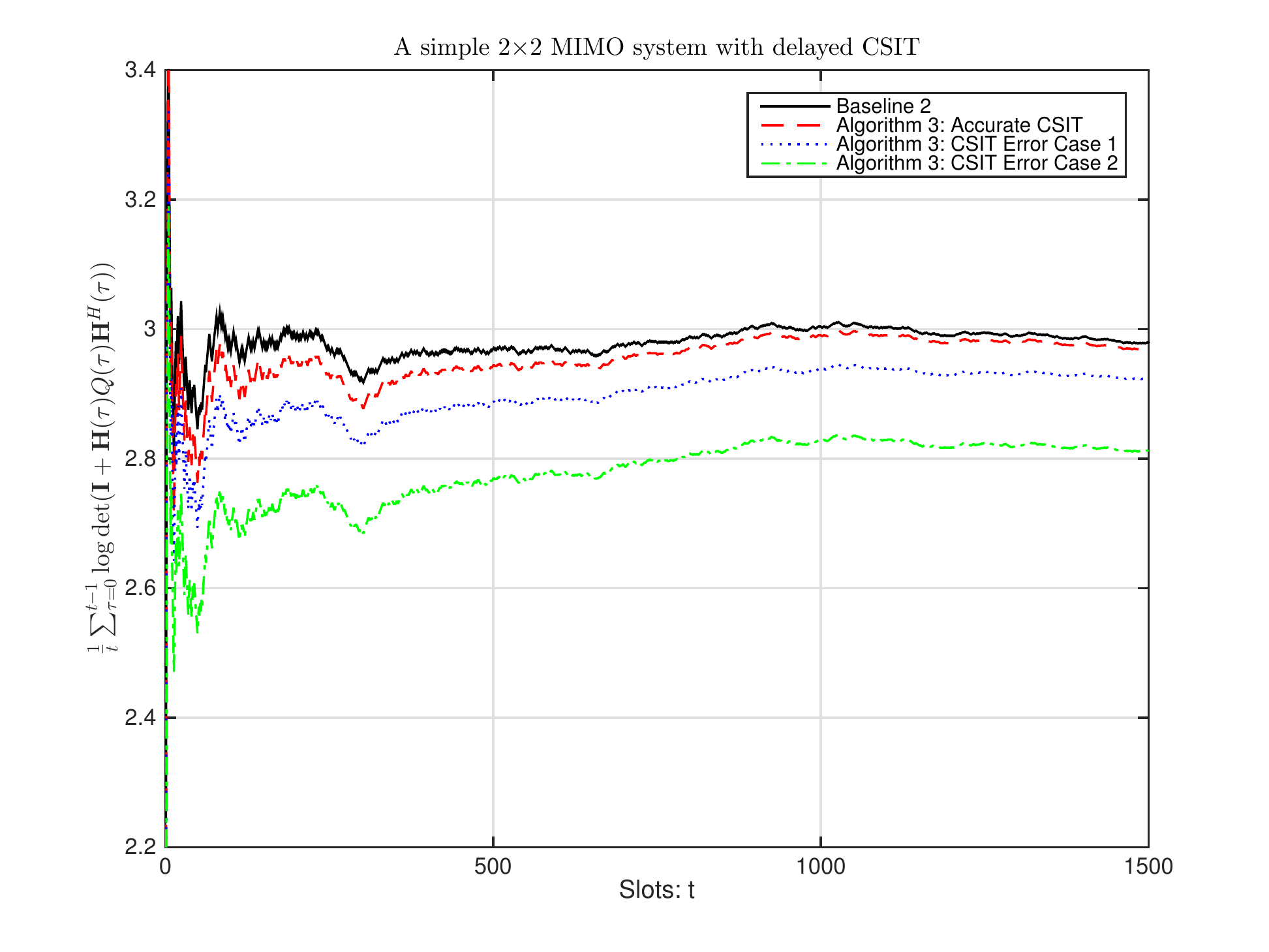} % requires the graphicx package
   \caption{A simple MIMO system with delayed CSIT.}
   \label{fig:nocsit-simple}
\end{figure}  
\subsection{A MIMO system with continuous channel realizations}

This section considers a $2\times 2$ MIMO system with continuous channel realizations. Each entry in $\mathbf{H}(t)$ is equal to $uv$ where $u$ is a complex number whose real part and complex part are standard normal and $v$ is uniform over $[0,0.5]$. In this case, even if the channel distribution information is perfectly known, problem \eqref{eq:stochastic-with-csit-obj}-\eqref{eq:stochastic-with-csit-set-con} and problem \eqref{eq:stochastic-no-csit-obj}-\eqref{eq:stochastic-no-csit-set-con} are infinite dimensional problems and are extremely hard to solve. In practice, to solve the stochastic optimization, people usually approximate the continuous distribution by a discrete distribution with a reasonable number of realizations and solve the approximate optimization that is a large scale deterministic optimization problem. (Baselines 3 and 4 considered below are essentially using this idea.) 

In the instantaneous CSIT case, consider Baseline 3 where we spend $100$ slots to obtain an empirical channel distribution by observing $100$ accurate channel realizations \footnote{By doing so, $100$ slots are wasted without sending any data. The $100$ slots are not counted in the simulation. If they are counted,  Algorithm \ref{alg:with-csit}'s performance advantage over Baseline 3 is even bigger. The delayed CSIT case is similar.}; obtain the optimal solution $\mathbf{Q}^{\ast}(\mathbf{H}), \mathbf{H}\in \mathcal{H}$ to problem \eqref{eq:stochastic-with-csit-obj}-\eqref{eq:stochastic-with-csit-set-con} using the empirical distribution; choose $\mathbf{Q}^\ast({\mathbf{H}})$ where $\mathbf{H} = \argmin_{\mathbf{H}\in \mathcal{H}} \Vert \mathbf{H}-\mathbf{H}(t)\Vert_F$ at each slot $t$.  Figure \ref{fig:withcsit-practical} compares the performance of Algorithm \ref{alg:with-csit} (with $V=100$) and Baseline 3; and shows that Algorithm \ref{alg:with-csit} has a better performance than Baseline 3.
\begin{figure}[htbp]
\centering
   \includegraphics[width=0.5\textwidth,height=0.5\textheight,keepaspectratio=true]{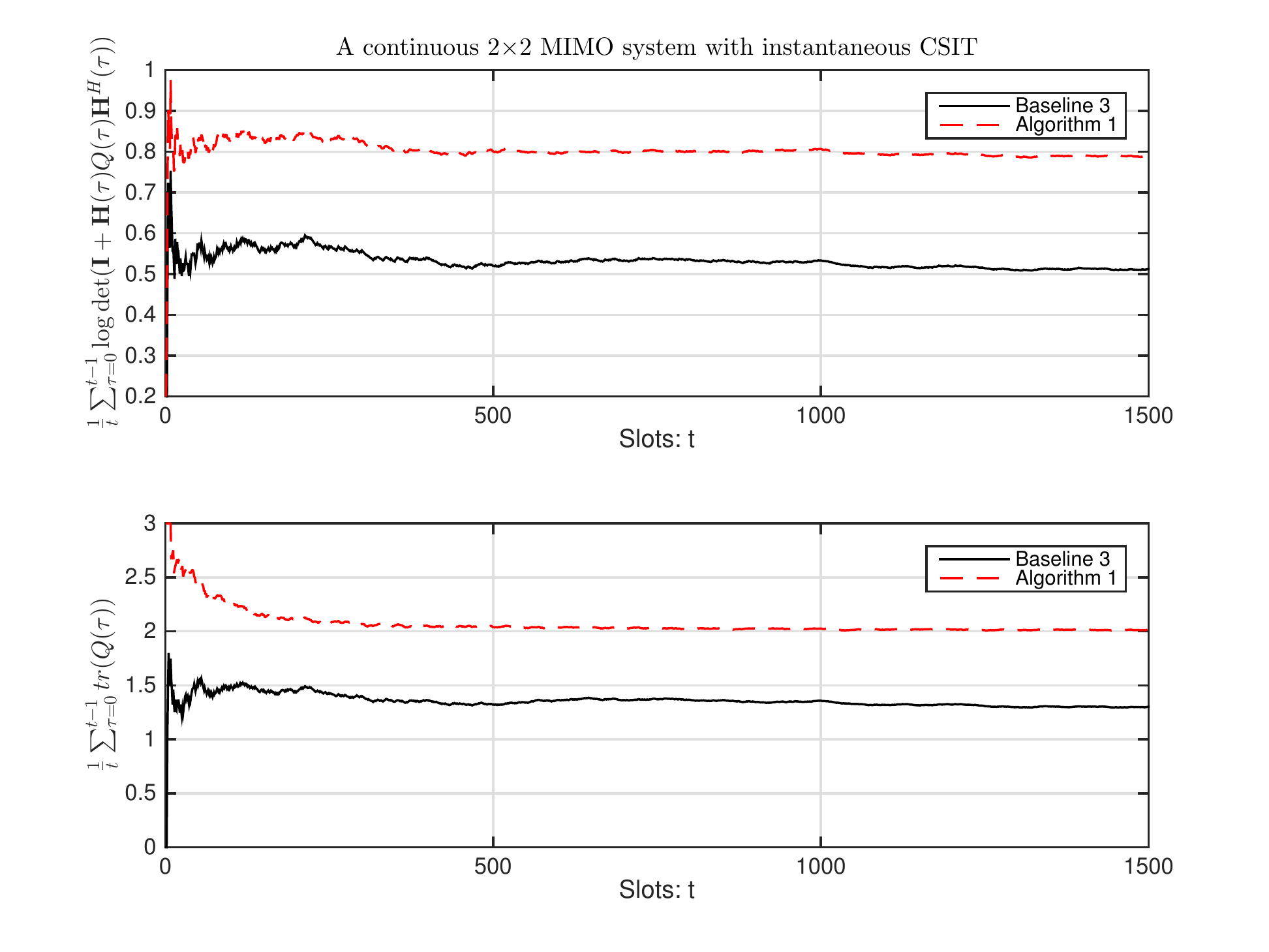} % requires the graphicx package
   \caption{A continuous MIMO system with instantaneous CSIT.}
   \label{fig:withcsit-practical}
\end{figure}

In the delayed CSIT case, consider Baseline 4 where we spend $100$ slots to obtain an empirical channel distribution by observing $100$ accurate channel realizations; obtain the optimal solution $\mathbf{Q}^{\ast}$ to problem \eqref{eq:stochastic-no-csit-obj}-\eqref{eq:stochastic-no-csit-set-con} using the empirical distribution; choose $\mathbf{Q}^\ast$ at each slot $t$.  Figure \ref{fig:nocsit-practical} compares the performance of Algorithm \ref{alg:no-csit} (with $\gamma=0.01$) and Baseline 4; and shows that Algorithm \ref{alg:no-csit} has a better performance than Baseline 4.

\begin{figure}[htbp]
\centering
   \includegraphics[width=0.5\textwidth,height=0.5\textheight,keepaspectratio=true]{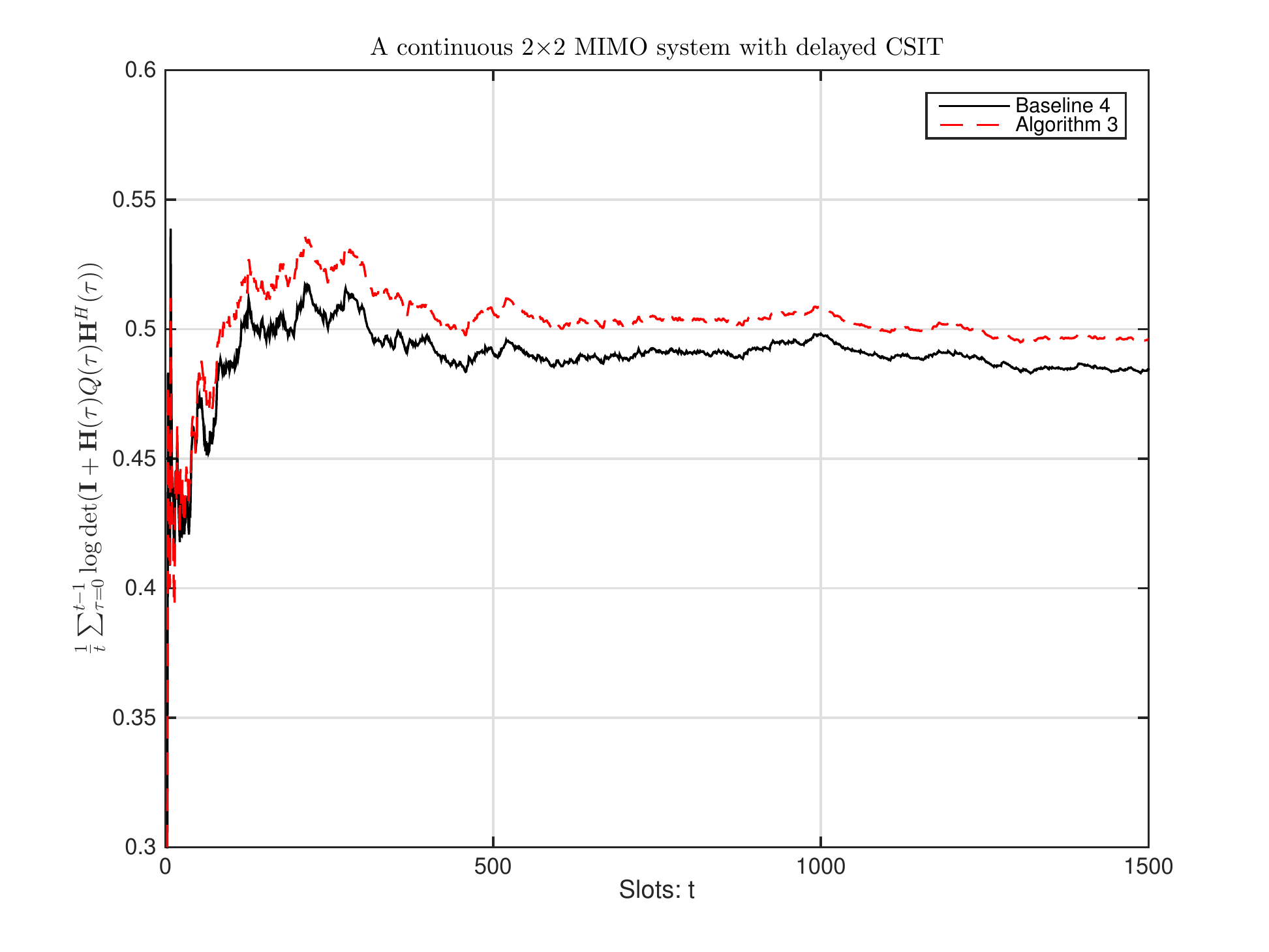} % requires the graphicx package
   \caption{A continuous MIMO system with delayed CSIT.}
   \label{fig:nocsit-practical}
\end{figure}

\section{Conclusion}
This paper considers dynamic transmit covariance design in point-to-point MIMO fading systems without CDIT.  Two different dynamic policies are proposed to deal with the cases of instantaneous CSIT and delayed CSIT, respectively.  In both cases, the proposed dynamic policies can achieve $O(\delta)$ sub-optimality, where $\delta$ is the inaccuracy measure of CSIT.

\appendices

\section{Linear algebra and matrix derivatives} \label{sec:linear-algebra}

\begin{Fact} [\cite{book_MatrixAnalysis}] \label{fact:1} 
For any $\mathbf{A}, \mathbf{B} \in \mathbb{C}^{m\times n}$ and $\mathbf{C}\in \mathbb{C}^{n\times k}$ we have: 
\begin{enumerate}
\item $\Vert \mathbf{A} \Vert_F = \Vert \mathbf{A}\herm \Vert_F = \Vert \mathbf{A}\tran\Vert_F= \Vert -\mathbf{A}\Vert_{F}$.
\item $\Vert \mathbf{A} + \mathbf{B}\Vert_{F} \leq \Vert \mathbf{A}\Vert_{F} + \Vert \mathbf{B}\Vert_{F}$.
\item $\Vert \mathbf{A} \mathbf{C}\Vert_{F} \leq \Vert \mathbf{A}\Vert_{F} \Vert \mathbf{C}\Vert_{F} $.
\item $\vert \text{tr}(\mathbf{A}\herm\mathbf{B})\vert \leq \Vert \mathbf{A}\Vert_{F} \Vert \mathbf{B}\Vert_{F}$.
\end{enumerate}
\end{Fact}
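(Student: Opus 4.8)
The plan is to reduce all four parts to the single observation that the Frobenius norm is the norm induced by the Hilbert--Schmidt inner product $\langle \mathbf{A}, \mathbf{B}\rangle = \text{tr}(\mathbf{A}\herm\mathbf{B}) = \sum_{i,j} \overline{a_{ij}}\, b_{ij}$ on $\mathbb{C}^{m\times n}$, equivalently the standard Euclidean inner product on $\mathbb{C}^{mn}$ obtained by stacking the entries of a matrix into a vector. Under this identification $\Vert \mathbf{A}\Vert_F^2 = \langle \mathbf{A}, \mathbf{A}\rangle = \sum_{i,j}|a_{ij}|^2$, which matches the definition in \eqref{eq:frobenius-def}. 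Once this is in place, parts (2) and (4) become instances of classical inner-product inequalities, while parts (1) and (3) follow from elementary entrywise manipulations.

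For part (1) I would argue purely from the entrywise formula $\Vert\mathbf{A}\Vert_F = \sqrt{\sum_{i,j}|a_{ij}|^2}$. Negation leaves each modulus unchanged, so $\Vert-\mathbf{A}\Vert_F = \Vert\mathbf{A}\Vert_F$. Transposition merely relabels the summation indices, and Hermitian transposition additionally replaces each entry by its conjugate, which preserves modulus; in both cases the multiset $\{|a_{ij}|^2\}$ is unchanged, so all four quantities coincide.

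For parts (4) and (2) I would first verify that $\langle\cdot,\cdot\rangle$ is a genuine inner product (conjugate symmetry, and positive definiteness since $\langle\mathbf{A},\mathbf{A}\rangle=\sum_{i,j}|a_{ij}|^2\geq 0$ with equality iff $\mathbf{A}=\mathbf{0}$). Part (4) is then exactly the Cauchy--Schwarz inequality $|\langle\mathbf{A},\mathbf{B}\rangle|\leq\Vert\mathbf{A}\Vert_F\Vert\mathbf{B}\Vert_F$ for this inner product. Part (2) follows by expanding
\[
\Vert\mathbf{A}+\mathbf{B}\Vert_F^2 = \Vert\mathbf{A}\Vert_F^2 + 2\,\mathrm{Re}\,\langle\mathbf{A},\mathbf{B}\rangle + \Vert\mathbf{B}\Vert_F^2
\]
and bounding $\mathrm{Re}\,\langle\mathbf{A},\mathbf{B}\rangle \leq |\langle\mathbf{A},\mathbf{B}\rangle| \leq \Vert\mathbf{A}\Vert_F\Vert\mathbf{B}\Vert_F$ via part (4), so that the right-hand side equals $(\Vert\mathbf{A}\Vert_F+\Vert\mathbf{B}\Vert_F)^2$; taking square roots finishes it.

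Part (3) needs its own short argument. Writing $(\mathbf{A}\mathbf{C})_{ik}=\sum_j a_{ij}c_{jk}$, I would apply the scalar Cauchy--Schwarz inequality to each entry to get $|(\mathbf{A}\mathbf{C})_{ik}|^2 \leq \big(\sum_j |a_{ij}|^2\big)\big(\sum_j |c_{jk}|^2\big)$, then sum over $i$ and $k$ and factor the resulting double sum as $\big(\sum_{i,j}|a_{ij}|^2\big)\big(\sum_{j,k}|c_{jk}|^2\big) = \Vert\mathbf{A}\Vert_F^2\Vert\mathbf{C}\Vert_F^2$. The statement is entirely standard and presents no real difficulty; the only place demanding care is the complex bookkeeping---keeping the conjugates in the correct positions so that $\langle\cdot,\cdot\rangle$ is conjugate-symmetric rather than symmetric, and retaining the real part (not the full value) of $\langle\mathbf{A},\mathbf{B}\rangle$ in the expansion used for part (2). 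These are the only spots where a naive transcription of the real-matrix proof could go astray.
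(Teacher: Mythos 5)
The paper offers no proof of this statement at all: Fact~\ref{fact:1} is quoted as a standard result with a citation to \cite{book_MatrixAnalysis}, so there is nothing internal to compare against. Your argument is a correct and complete proof of all four parts by the standard route---identifying $\Vert\cdot\Vert_F$ as the norm induced by the Hilbert--Schmidt inner product $\langle \mathbf{A},\mathbf{B}\rangle = \text{tr}(\mathbf{A}\herm\mathbf{B})$, deducing (4) from Cauchy--Schwarz and (2) from the usual expansion with the real part correctly retained, handling (1) entrywise, and proving (3) by applying scalar Cauchy--Schwarz to each entry of $\mathbf{A}\mathbf{C}$ and factoring the double sum. No gaps.
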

\begin{Fact}[\cite{book_MatrixAnalysis}] \label{fact:frobenius-trace-inequality}
For any $\mathbf{A} \in \mathbb{S}^{n}_{+}$ we have  $\Vert \mathbf{A}\Vert_{F} \leq \text{tr}(\mathbf{A})$. 
\end{Fact}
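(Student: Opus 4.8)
The plan is to diagonalize $\mathbf{A}$ and reduce the matrix inequality to an elementary scalar inequality about non-negative reals. Since $\mathbf{A}\in\mathbb{S}^{n}_{+}$ is positive semidefinite, it is Hermitian ($\mathbf{A}\herm = \mathbf{A}$) and admits a spectral decomposition $\mathbf{A} = \mathbf{U}\herm\boldsymbol{\Lambda}\mathbf{U}$ with $\mathbf{U}$ unitary and $\boldsymbol{\Lambda}$ diagonal with real eigenvalues $\lambda_1,\ldots,\lambda_n \geq 0$. I would first record this fact and use it to rewrite both sides of the claimed inequality in terms of the $\lambda_i$.

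Next I would compute each side. By the definition of the Frobenius norm in \eqref{eq:frobenius-def} and the Hermitian property,
\[
 \Vert \mathbf{A}\Vert_{F} = \sqrt{\text{tr}(\mathbf{A}\herm\mathbf{A})} = \sqrt{\text{tr}(\mathbf{A}^2)} = \sqrt{\sum_{i=1}^{n}\lambda_i^2},
\]
where the last equality uses the invariance of the trace under the unitary similarity $\mathbf{A}^2 = \mathbf{U}\herm\boldsymbol{\Lambda}^2\mathbf{U}$, so that $\text{tr}(\mathbf{A}^2) = \text{tr}(\boldsymbol{\Lambda}^2) = \sum_i \lambda_i^2$. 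Likewise $\text{tr}(\mathbf{A}) = \text{tr}(\boldsymbol{\Lambda}) = \sum_{i=1}^{n}\lambda_i$. Thus the target inequality $\Vert\mathbf{A}\Vert_F \leq \text{tr}(\mathbf{A})$ becomes the scalar statement $\sqrt{\sum_i \lambda_i^2} \leq \sum_i \lambda_i$.

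Finally I would dispatch this scalar inequality by squaring. Since all $\lambda_i \geq 0$, we have
\[
 \Big(\sum_{i=1}^{n}\lambda_i\Big)^2 = \sum_{i=1}^{n}\lambda_i^2 + \sum_{i\neq j}\lambda_i\lambda_j \geq \sum_{i=1}^{n}\lambda_i^2,
\]
where the cross terms $\sum_{i\neq j}\lambda_i\lambda_j$ are non-negative precisely because the eigenvalues are non-negative; taking square roots of both sides (both quantities being non-negative) gives the result. I do not anticipate a genuine obstacle here, since the argument is routine; the only point requiring a moment of care is the justification that $\text{tr}(\mathbf{A}\herm\mathbf{A}) = \sum_i \lambda_i^2$, which rests on the Hermitian property and trace invariance, and the observation that positive semidefiniteness is exactly what makes the discarded cross terms non-negative (so the inequality can fail for indefinite Hermitian matrices).
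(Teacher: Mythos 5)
Your proof is correct. The paper does not actually prove this fact---it is quoted directly from the cited matrix-analysis reference---so there is no in-paper argument to compare against; your route via the spectral decomposition, reducing $\Vert\mathbf{A}\Vert_F \leq \text{tr}(\mathbf{A})$ to $\sqrt{\sum_i \lambda_i^2} \leq \sum_i \lambda_i$ for non-negative eigenvalues, is the standard one, and you correctly identify positive semidefiniteness as the hypothesis that makes the cross terms non-negative (the inequality indeed fails for general Hermitian matrices, e.g.\ $\mathbf{A} = \mathrm{diag}(1,-1)$).
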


\begin{Fact}[\cite{Feiten07ISIT}]\label{fact:derivative-log-det}
The function $f:\mathbb{S}_{+}^{n}\rightarrow \mathbb{R}$ defined by $f (\mathbf{Q}) = \log\det(\mathbf{I} + \mathbf{H} \mathbf{Q} \mathbf{H}\herm)$ is concave and its gradient is given by $\nabla_{\mathbf{Q}} f(\mathbf{Q}) = \mathbf{H}\herm (\mathbf{I} + \mathbf{H}\mathbf{Q} \mathbf{H}\herm)^{-1} \mathbf{H}, \forall \mathbf{Q} \in \mathbb{S}^{n}_{+}$.
\end{Fact}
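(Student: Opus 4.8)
The plan is to decouple the two claims. For concavity, I would exploit the classical fact that $\log\det(\cdot)$ is concave on the cone of positive definite matrices together with the observation that the inner map $\mathbf{Q}\mapsto \mathbf{I}+\mathbf{H}\mathbf{Q}\mathbf{H}\herm$ is affine in $\mathbf{Q}$. Crucially, for every $\mathbf{Q}\in\mathbb{S}^{n}_{+}$ the matrix $\mathbf{I}+\mathbf{H}\mathbf{Q}\mathbf{H}\herm$ is positive definite (its eigenvalues are at least $1$, since $\mathbf{H}\mathbf{Q}\mathbf{H}\herm$ is positive semidefinite), so $f$ equals a concave function composed with an affine map whose image lies in the positive definite cone; concavity of $f$ then follows because affine precomposition preserves concavity. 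For the gradient, I would compute the directional derivative of $f$ along an arbitrary Hermitian direction via Jacobi's formula and read off the gradient using the cyclic property of the trace.

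To make the argument self-contained I would establish concavity of $\log\det$ itself by restricting to an arbitrary line in the positive definite cone. Fix $\mathbf{X}_{0}\succ 0$ and a Hermitian $\mathbf{V}$, and factor $\mathbf{X}_{0}+s\mathbf{V}=\mathbf{X}_{0}^{1/2}(\mathbf{I}+s\mathbf{W})\mathbf{X}_{0}^{1/2}$ with $\mathbf{W}=\mathbf{X}_{0}^{-1/2}\mathbf{V}\mathbf{X}_{0}^{-1/2}$ Hermitian. Then $\log\det(\mathbf{X}_{0}+s\mathbf{V})=\log\det(\mathbf{X}_{0})+\sum_{i}\log(1+s\lambda_{i})$, where $\lambda_{1},\dots,\lambda_{n}$ are the (real) eigenvalues of $\mathbf{W}$. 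Each summand is concave in $s$ because $\frac{d^{2}}{ds^{2}}\log(1+s\lambda_{i})=-\lambda_{i}^{2}/(1+s\lambda_{i})^{2}\leq 0$, so the one-dimensional restriction is concave; since the line was arbitrary, $\log\det$ is concave, and the composition argument above then yields concavity of $f$ on $\mathbb{S}^{n}_{+}$.

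For the gradient, I would apply Jacobi's formula $\frac{d}{ds}\log\det(\mathbf{A}(s))=\text{tr}\big(\mathbf{A}^{-1}(s)\mathbf{A}'(s)\big)$ to $\mathbf{A}(s)=\mathbf{I}+\mathbf{H}(\mathbf{Q}+s\mathbf{V})\mathbf{H}\herm$, where $\mathbf{V}$ is an arbitrary Hermitian perturbation. Evaluating at $s=0$ gives the directional derivative $\text{tr}\big((\mathbf{I}+\mathbf{H}\mathbf{Q}\mathbf{H}\herm)^{-1}\mathbf{H}\mathbf{V}\mathbf{H}\herm\big)$, and the cyclic property of the trace rewrites this as $\text{tr}\big(\mathbf{H}\herm(\mathbf{I}+\mathbf{H}\mathbf{Q}\mathbf{H}\herm)^{-1}\mathbf{H}\,\mathbf{V}\big)$. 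Since this equals $\langle \nabla_{\mathbf{Q}}f(\mathbf{Q}),\mathbf{V}\rangle$ for every Hermitian $\mathbf{V}$ under the trace inner product, I identify $\nabla_{\mathbf{Q}}f(\mathbf{Q})=\mathbf{H}\herm(\mathbf{I}+\mathbf{H}\mathbf{Q}\mathbf{H}\herm)^{-1}\mathbf{H}$.

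The main obstacle is a matter of care rather than depth: the domain $\mathbb{S}^{n}_{+}$ sits inside the real vector space of Hermitian matrices, so the gradient must be identified with respect to the inner product $\langle \mathbf{A},\mathbf{B}\rangle=\text{tr}(\mathbf{A}\herm\mathbf{B})$ restricted to that space. I would verify that the candidate gradient $\mathbf{H}\herm(\mathbf{I}+\mathbf{H}\mathbf{Q}\mathbf{H}\herm)^{-1}\mathbf{H}$ is itself Hermitian (indeed positive semidefinite), so that it is a legitimate element of this space and the identification from the directional-derivative computation is unambiguous. A second point worth checking is that $f$ is differentiable on an open neighborhood of $\mathbb{S}^{n}_{+}$: because $\mathbf{I}+\mathbf{H}\mathbf{Q}\mathbf{H}\herm$ remains positive definite even when $\mathbf{Q}$ lies on the boundary of the cone, no invertibility issue arises and the formula holds throughout.
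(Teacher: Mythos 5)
Your proposal is correct, and it is genuinely different from the paper's treatment for a simple reason: the paper never proves this statement. It is stated as a Fact imported verbatim from the cited reference \cite{Feiten07ISIT}, and the paper merely remarks that a general theory of derivatives for functions of complex matrix variables is available elsewhere; the cited line of work obtains the gradient through complex matrix differential calculus. Your argument replaces that citation with an elementary, self-contained proof: concavity via restriction to lines, using the factorization $\mathbf{X}_{0}+s\mathbf{V}=\mathbf{X}_{0}^{1/2}(\mathbf{I}+s\mathbf{W})\mathbf{X}_{0}^{1/2}$ and the real eigenvalues of the Hermitian $\mathbf{W}$, is the classical Boyd--Vandenberghe argument, and the gradient via Jacobi's formula plus cyclicity of the trace avoids Wirtinger-type calculus entirely by working in the real vector space of Hermitian matrices under $\langle \mathbf{A},\mathbf{B}\rangle=\text{tr}(\mathbf{A}\herm\mathbf{B})$. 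Your two flagged care points are exactly the ones that matter and are handled correctly: the candidate gradient $\mathbf{H}\herm(\mathbf{I}+\mathbf{H}\mathbf{Q}\mathbf{H}\herm)^{-1}\mathbf{H}$ is Hermitian (indeed positive semidefinite), so the identification from directional derivatives against all Hermitian $\mathbf{V}$ is unambiguous under the nondegenerate trace pairing; and since $\mathbf{I}+\mathbf{H}\mathbf{Q}\mathbf{H}\herm\succeq\mathbf{I}$ for every $\mathbf{Q}\in\mathbb{S}^{n}_{+}$, no invertibility issue arises on the boundary of the cone, so the formula holds on all of $\mathbb{S}^{n}_{+}$. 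This is also consonant with how the paper uses the fact downstream: its proof of Fact \ref{fact:convex-function-first-order-condition} performs the same line-restriction computation $g^{\prime}(t)=\text{tr}([\nabla_{\mathbf{Q}}f]\herm(\widehat{\mathbf{Q}}-\mathbf{Q}))$ under the same inner product, so your proof slots in cleanly. What the paper's citation buys is brevity and access to a more general differentiation machinery; what your route buys is self-containment at the cost of a few lines, with no gap that I can find.
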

The above fact is developed in \cite{Feiten07ISIT}. A general theory on developing derivatives for functions with complex matrix variables is available in \cite{book_ComplexMatrixDerivatives}. The next fact is the complex matrix version of the first order condition for concave functions of real number variables, i.e., $f(y) \leq f(x) + f^{\prime}(x) (y-x), \forall x,y\in \text{dom} f$ if $f$ is concave.  We also provide a brief proof for this fact.
\begin{Fact}\label{fact:convex-function-first-order-condition}
Let function $f(\mathbf{Q}): \mathbb{S}_{+}^{n} \rightarrow \mathbb{R}$ be a concave function and have gradient $\nabla_{\mathbf{Q}} f(\mathbf{Q}) \in \mathbb{S}^{n}$ at point $\mathbf{Q}$. Then,
$f(\widehat{\mathbf{Q}}) \leq f(\mathbf{Q}) + \text{tr}\big((\nabla_{\mathbf{Q}} f(\mathbf{Q}))\herm(\widehat{\mathbf{Q}} -\mathbf{Q})\big), \forall \widehat{\mathbf{Q}} \in \mathbb{S}^{n}_{+}$.
\end{Fact}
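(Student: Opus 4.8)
The plan is to reduce the matrix inequality to the familiar scalar first-order condition for concave functions by restricting $f$ to the line segment joining $\mathbf{Q}$ and $\widehat{\mathbf{Q}}$. First I would fix $\mathbf{Q}, \widehat{\mathbf{Q}} \in \mathbb{S}^{n}_{+}$ and set $\mathbf{V} = \widehat{\mathbf{Q}} - \mathbf{Q}$. Because $\mathbb{S}^{n}_{+}$ is a convex cone, the point $\mathbf{Q} + \lambda \mathbf{V} = (1-\lambda)\mathbf{Q} + \lambda \widehat{\mathbf{Q}}$ lies in $\mathbb{S}^{n}_{+}$ for every $\lambda \in [0,1]$, so the scalar function $g(\lambda) = f(\mathbf{Q} + \lambda \mathbf{V})$ is well defined on $[0,1]$. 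Since $f$ is concave on $\mathbb{S}^{n}_{+}$ and the restriction of a concave function to a line segment in its domain is concave, $g$ is concave on $[0,1]$.

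Next I would invoke the scalar first-order condition. Concavity of $g$ gives $g(\lambda) \geq (1-\lambda) g(0) + \lambda g(1)$ for $\lambda \in (0,1]$, which rearranges to $g(1) - g(0) \leq \frac{g(\lambda) - g(0)}{\lambda}$. Since the difference quotient on the right is nonincreasing in $\lambda$ for a concave function, letting $\lambda \downarrow 0$ shows the right-hand side increases to the one-sided derivative $g'(0)$, hence $g(1) \leq g(0) + g'(0)$. Substituting $g(0) = f(\mathbf{Q})$ and $g(1) = f(\widehat{\mathbf{Q}})$ turns this into the claimed inequality, provided $g'(0)$ is identified with the trace term.

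The remaining, and I expect the only delicate, step is to verify the chain-rule identity $g'(0) = \text{tr}\big((\nabla_{\mathbf{Q}} f(\mathbf{Q}))\herm \mathbf{V}\big)$ in the complex-matrix setting. By the definition of the gradient $\nabla_{\mathbf{Q}} f(\mathbf{Q})$ as the Hermitian matrix representing the directional derivative of $f$ at $\mathbf{Q}$ with respect to the real inner product $\langle \mathbf{A}, \mathbf{B}\rangle = \text{tr}(\mathbf{A}\herm \mathbf{B})$ on the space $\mathbb{S}^{n}$ of Hermitian matrices, the one-sided derivative of $f$ at $\mathbf{Q}$ along the Hermitian direction $\mathbf{V}$ is exactly $\text{tr}\big((\nabla_{\mathbf{Q}} f(\mathbf{Q}))\herm \mathbf{V}\big)$; because both $\nabla_{\mathbf{Q}} f(\mathbf{Q}) \in \mathbb{S}^{n}$ and $\mathbf{V}$ are Hermitian, this trace is real, consistent with $g'(0)$ being a real scalar. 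Combining this identification with the scalar inequality above completes the proof. The main obstacle is purely bookkeeping about the complex matrix derivative convention and the interchange of the limit with the difference quotient; no nontrivial estimate is required.
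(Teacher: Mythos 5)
Your proposal is correct and follows essentially the same route as the paper's proof: restrict $f$ to the segment $g(\lambda) = f(\mathbf{Q} + \lambda(\widehat{\mathbf{Q}}-\mathbf{Q}))$, apply the scalar first-order condition $g(1) \leq g(0) + g'(0)$, and identify $g'(0)$ with $\text{tr}\big((\nabla_{\mathbf{Q}} f(\mathbf{Q}))\herm(\widehat{\mathbf{Q}}-\mathbf{Q})\big)$ via the chain rule under the trace inner product. Your extra justification of the scalar inequality through monotone difference quotients is a minor elaboration of the same argument.
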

\begin{proof}
Recall that a function is concave if and only if it is concave when restricted to any line along its domain (see page 67 in \cite{book_ConvexOptimization}). For any $\mathbf{Q}, \widehat{\mathbf{Q}} \in \mathbb{S}^{n}_{+}$,  define $g(t) = f(\mathbf{Q} + t(\widehat{\mathbf{Q}} - \mathbf{Q}))$. Thus, $g(t)$ is concave over $[0,1]$; $g(0) = f(\mathbf{Q})$; and $g(1) = f(\widehat{\mathbf{Q}})$. Note that $g^{\prime}(t) = \text{tr}([\nabla_{\mathbf{Q}} f(\mathbf{Q} + t(\widehat{\mathbf{Q}} - \mathbf{Q}))]\herm (\widehat{\mathbf{Q}} - \mathbf{Q}))$ by the chain rule of derivatives when the inner product in complex matrix space $\mathbb{C}^{n\times n}$ is defined as $\langle \mathbf{A}, \mathbf{B} \rangle = \text{tr}(\mathbf{A}\herm\mathbf{B}), \forall \mathbf{A}, \mathbf{B} \in \mathbb{C}^{n\times n}$.  By the first-order condition of concave function $g(t)$, we have  $g(1) \leq g(0) + g^{\prime}(0) (1-0)$.  Note that $ g^{\prime}(0) = \text{tr}([\nabla_{\mathbf{Q}} f(\mathbf{Q})]\herm (\widehat{\mathbf{Q}} - \mathbf{Q}))$. Thus, we have $f(\widehat{\mathbf{Q}}) \leq f(\mathbf{Q}) + \text{tr}\big([\nabla_{\mathbf{Q}} f(\mathbf{Q})]\herm(\widehat{\mathbf{Q}} -\mathbf{Q})\big)$.
\end{proof}

\section{Proof of Lemma \ref{lm:with-csit-transmit-covaraince-update}} \label{app:pf-lm-with-csit-transmit-covaraince-update}

The proof method is an extension of Section 3.2 in \cite{Telatar99MIMOCapacity}, which gives the structure of the optimal transmit covariance in deterministic MIMO channels.

Note that $\log\det(\mathbf{I} + \mathbf{H} \mathbf{Q} \mathbf{H}\herm) \overset{(a)}{=} \log\det(\mathbf{I} + \mathbf{Q} \mathbf{H}\herm\mathbf{H}) 
\overset{(b)}{=} \log\det(\mathbf{I} + \mathbf{Q} \mathbf{U}\herm\boldsymbol{\Sigma} \mathbf{U})
\overset{(c)}{=} \log\det(\mathbf{I} + \boldsymbol{\Sigma}^{1/2}\mathbf{U}\mathbf{Q} \mathbf{U}\herm\boldsymbol{\Sigma}^{1/2})$, where (a) and (c) follows from the elementary identity $\det(\mathbf{I} + \mathbf{A}\mathbf{B}) = \det(\mathbf{I} + \mathbf{B}\mathbf{A}), \forall \mathbf{A}\mathbb{C}^{m\times n}$ and $\mathbf{B} \in \mathbb{C}^{n\times m}$; and (b) follows from the fact that $\mathbf{H}\herm\mathbf{H} = \mathbf{U}\herm\boldsymbol{\Sigma} \mathbf{U}$.  Define $\widetilde{\mathbf{Q}} = \mathbf{U} \mathbf{Q} \mathbf{U}\herm$, which is semidefinite positive if and only if $\mathbf{Q}$ is. Note that $\text{tr}(\widetilde{\mathbf{Q}}) = \text{tr}( \mathbf{U} \mathbf{Q} \mathbf{U}\herm) = \text{tr}(\mathbf{Q})$ by the fact that $\text{tr}(\mathbf{A}\mathbf{B}) = \text{tr}(\mathbf{B}\mathbf{A}), \forall \mathbf{A}\in \mathbb{C}^{m\times n}, \mathbf{B}\in \mathbb{C}^{n\times m}$. Thus, problem \eqref{eq:with-csit-primal-opt-obj}-\eqref{eq:with-csit-primal-opt-sdp} is equivalent to 
\begin{align}
\max_{\widetilde{\mathbf{Q}}} \quad & \log\det(\mathbf{I} + \mathbf{\Sigma}^{1/2} \widetilde{\mathbf{Q}} \mathbf{\Sigma}^{1/2}) - \frac{Z}{V} \text{tr}(\widetilde{\mathbf{Q}}) \label{eq:app-with-csit-primal-opt2-obj}\\
\text{s.t.} \quad  & \text{tr}(\widetilde{\mathbf{Q}}) \leq P  \label{eq:app-with-csit-primal-opt2-trace}\\
			 & \widetilde{\mathbf{Q}} \in \mathbb{S}^{N_{T}}_+  \label{eq:app-with-csit-primal-opt2-sdp}
\end{align}
\begin{Fact}[Hadamard's Inequality, Theorem 7.8.1 in \cite{book_MatrixAnalysis}]
For all $\mathbf{A}\in \mathbb{S}^{n}_{+}$, $\det(\mathbf{A}) \leq \prod_{i=1}^{n} A_{ii}$ with equality if $\mathbf{A}$ is diagonal.
\end{Fact}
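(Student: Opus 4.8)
The plan is to prove Hadamard's inequality by reducing a general positive semidefinite $\mathbf{A}$ to a \emph{correlation matrix} (one with unit diagonal) via diagonal scaling, and then bounding the determinant of that normalized matrix through the arithmetic--geometric mean (AM--GM) inequality applied to its eigenvalues. First I would dispose of the degenerate case: if some diagonal entry $A_{ii}=0$, then positive semidefiniteness forces the entire $i$-th row and column of $\mathbf{A}$ to vanish, so that $\det(\mathbf{A})=0=\prod_{j=1}^{n}A_{jj}$ and the inequality holds (indeed with equality). Hence I may assume $A_{ii}>0$ for every $i$.

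Next I would set $\mathbf{D}=\text{diag}(A_{11},\ldots,A_{nn})$ with strictly positive entries and define $\mathbf{R}=\mathbf{D}^{-1/2}\mathbf{A}\mathbf{D}^{-1/2}$. Since $\mathbf{R}$ is congruent to $\mathbf{A}$, it is positive semidefinite, and by construction every diagonal entry of $\mathbf{R}$ equals $1$, so $\text{tr}(\mathbf{R})=n$. Writing $\lambda_{1},\ldots,\lambda_{n}\ge 0$ for the eigenvalues of $\mathbf{R}$, I have $\sum_{i=1}^{n}\lambda_{i}=\text{tr}(\mathbf{R})=n$, and AM--GM gives
\[
\det(\mathbf{R})=\prod_{i=1}^{n}\lambda_{i}\le\Big(\frac{1}{n}\sum_{i=1}^{n}\lambda_{i}\Big)^{n}=1 .
\]
Multiplying the scaling back in yields $\det(\mathbf{A})=\det(\mathbf{D})\det(\mathbf{R})=\big(\prod_{i=1}^{n}A_{ii}\big)\det(\mathbf{R})\le\prod_{i=1}^{n}A_{ii}$, which is the claimed bound. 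The equality clause is then immediate and needs no AM--GM equality analysis: if $\mathbf{A}$ is diagonal, then $\det(\mathbf{A})=\prod_{i=1}^{n}A_{ii}$ by direct computation.

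I do not anticipate a serious obstacle, as the result is classical; the only points requiring care are (i) justifying the vanishing row and column in the degenerate case, which follows from the standard fact that every principal $2\times 2$ minor of a positive semidefinite matrix is nonnegative, so $A_{ii}=0$ forces $|A_{ij}|^{2}\le A_{ii}A_{jj}=0$ and hence $A_{ij}=0$ for all $j$, and (ii) confirming that $\mathbf{R}$ inherits positive semidefiniteness so that its eigenvalues are nonnegative and AM--GM applies. An alternative route would invoke the concavity of $\log\det$ (Fact~\ref{fact:derivative-log-det}) to argue that, among positive semidefinite matrices with a prescribed diagonal, the determinant is maximized by the diagonal matrix; but the AM--GM argument above is more elementary and entirely self-contained.
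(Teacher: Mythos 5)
Your proof is correct in every step. Note that the paper itself offers no proof of this Fact at all: it is quoted directly from Theorem 7.8.1 of the cited matrix-analysis text, so there is no internal argument to compare against, and your proposal is essentially the canonical textbook proof of that theorem (normalization to unit diagonal followed by AM--GM on the eigenvalues). The two delicate points are both handled properly: the degenerate case $A_{ii}=0$ is settled by the nonnegativity of the $2\times 2$ principal minor, giving $|A_{ij}|^{2}\le A_{ii}A_{jj}=0$, an argument that remains valid in the Hermitian setting relevant here (the paper's $\mathbb{S}^{n}_{+}$ consists of complex Hermitian positive semidefinite matrices, for which diagonal entries are real and nonnegative and $\det(\mathbf{A})\ge 0$); and the congruence $\mathbf{R}=\mathbf{D}^{-1/2}\mathbf{A}\mathbf{D}^{-1/2}$ indeed preserves positive semidefiniteness, so the eigenvalues of $\mathbf{R}$ are nonnegative reals summing to $\mathrm{tr}(\mathbf{R})=n$ and AM--GM applies, yielding $\det(\mathbf{A})=\det(\mathbf{D})\det(\mathbf{R})\le\prod_{i=1}^{n}A_{ii}$. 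You are also right that the equality clause as stated is only the trivial direction (equality \emph{if} diagonal), so no equality analysis of AM--GM is needed; this matches exactly how the Fact is invoked in the proof of Claim~\ref{claim:to-prove-thm-with-csit-primal-opt}, where equality for the diagonal comparison matrix is all that is used.
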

The next claim can be proven using Hadamard's inequality.
\begin{Claim}\label{claim:to-prove-thm-with-csit-primal-opt}
Problem \eqref{eq:app-with-csit-primal-opt2-obj}-\eqref{eq:app-with-csit-primal-opt2-sdp} has a diagonal optimal solution. 
\end{Claim}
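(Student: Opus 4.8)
The plan is to show that from any feasible $\widetilde{\mathbf{Q}}$ one can build a feasible \emph{diagonal} matrix whose objective value is at least as large; combined with existence of an optimizer this forces a diagonal optimal solution. First I would note that the feasible set $\{\widetilde{\mathbf{Q}} \in \mathbb{S}^{N_{T}}_+ : \text{tr}(\widetilde{\mathbf{Q}}) \leq P\}$ is compact and the objective in \eqref{eq:app-with-csit-primal-opt2-obj} is continuous, so an optimal solution $\widetilde{\mathbf{Q}}^{\ast}$ exists. The key structural observation is that the penalty term $-\frac{Z}{V}\text{tr}(\widetilde{\mathbf{Q}})$ depends on $\widetilde{\mathbf{Q}}$ only through its diagonal, so replacing $\widetilde{\mathbf{Q}}^{\ast}$ by its diagonal part leaves this term untouched; only the $\log\det$ term needs to be controlled.

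Concretely, let $\mathbf{D}$ be the diagonal matrix obtained by keeping only the diagonal entries of $\widetilde{\mathbf{Q}}^{\ast}$, i.e. $D_{ii} = \widetilde{Q}^{\ast}_{ii}$. Since the diagonal entries of any positive semidefinite matrix are non-negative, $\mathbf{D} \in \mathbb{S}^{N_{T}}_+$, and $\text{tr}(\mathbf{D}) = \text{tr}(\widetilde{\mathbf{Q}}^{\ast}) \leq P$, so $\mathbf{D}$ is feasible. Writing $\mathbf{A} = \mathbf{I} + \boldsymbol{\Sigma}^{1/2}\widetilde{\mathbf{Q}}^{\ast}\boldsymbol{\Sigma}^{1/2}$, which is positive definite, its $i$-th diagonal entry is $A_{ii} = 1 + \sigma_i \widetilde{Q}^{\ast}_{ii}$. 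Because $\mathbf{I} + \boldsymbol{\Sigma}^{1/2}\mathbf{D}\boldsymbol{\Sigma}^{1/2}$ is itself diagonal with exactly these same entries, its determinant equals $\prod_{i=1}^{N_T}(1 + \sigma_i \widetilde{Q}^{\ast}_{ii}) = \prod_{i=1}^{N_T} A_{ii}$.

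The decisive step is Hadamard's inequality, which gives $\det(\mathbf{A}) \leq \prod_{i=1}^{N_T} A_{ii} = \det(\mathbf{I} + \boldsymbol{\Sigma}^{1/2}\mathbf{D}\boldsymbol{\Sigma}^{1/2})$. Taking logarithms shows the $\log\det$ term for $\mathbf{D}$ is no smaller than that for $\widetilde{\mathbf{Q}}^{\ast}$, while the trace terms are identical; hence $\mathbf{D}$ attains an objective value at least as large as the optimum and is therefore itself optimal and diagonal. The main obstacle is really just recognizing the correct construction (diagonal truncation) and checking it preserves both positive semidefiniteness and the trace; once that is in place, Hadamard's inequality does all the analytic work, so the proof is short.
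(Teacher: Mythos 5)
Your proof is correct and follows essentially the same route as the paper: replace the optimizer by its diagonal truncation, note the trace (and hence the penalty term and feasibility) is preserved, and invoke Hadamard's inequality to show the $\log\det$ term can only increase. The only addition is your explicit compactness argument for existence of an optimizer, which the paper leaves implicit.
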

\begin{IEEEproof}
Suppose problem \eqref{eq:app-with-csit-primal-opt2-obj}-\eqref{eq:app-with-csit-primal-opt2-sdp} has a non-diagonal optimal solution given by matrix $\widetilde{\mathbf{Q}}$. Consider a diagonal matrix $\widehat{\mathbf{Q}}$ whose entries are identical to the diagonal entries of  $\widetilde{\mathbf{Q}}$. Note that $\text{tr}(\widehat{\mathbf{Q}}) = \text{tr}(\widetilde{\mathbf{Q}})$.  To show $\widehat{\mathbf{Q}}$ is a solution no worse than $\widetilde{\mathbf{Q}}$, it suffices to show that $\log\det(\mathbf{I} + \mathbf{\Sigma}^{1/2}\widehat{\mathbf{Q}} \mathbf{\Sigma}^{1/2}) \geq \log\det(\mathbf{I} + \mathbf{\Sigma}^{1/2}\widetilde{\mathbf{Q}} \mathbf{\Sigma}^{1/2})$. This is true becase $\det(\mathbf{I} +  \boldsymbol{\Sigma}^{1/2} \widehat{\mathbf{Q}} \boldsymbol{\Sigma}^{1/2}) = \prod_{i=1}^{N_{T}} (1+\widehat{Q}_{ii} \sigma_{i})
  = \prod_{i=1}^{N_{T}} (1+\widetilde{Q}_{ii}\sigma_{i})\geq \det(\mathbf{I} +  \boldsymbol{\Sigma}^{1/2} \widetilde{\mathbf{Q}} \boldsymbol{\Sigma}^{1/2})$, where the last inequality follows from Hadamard's inequality. Thus, $\widehat{\mathbf{Q}}$ is a solution no worse than $\widetilde{\mathbf{Q}}$ and hence optimal. 
\end{IEEEproof}
By Claim \ref{claim:to-prove-thm-with-csit-primal-opt}, we can consider $\widetilde{\mathbf{Q}} = \Theta= \text{diag}(\theta_{1}, \theta_{2}, \ldots, \theta_{N_{T}})$ and problem \eqref{eq:app-with-csit-primal-opt2-obj}-\eqref{eq:app-with-csit-primal-opt2-sdp} is equivalent to 
\begin{align}
\max_{} \quad & \sum_{i=1}^{N_{T}}\log(1+\theta_i\sigma_{i}) - \frac{Z}{V} \sum_{i=1}^{N_{T}} \theta_i \label{eq:app-with-csit-primal-opt3-obj}\\
\text{s.t.} \quad  & \sum_{i=1}^{N_{T}} \theta_i \leq P  \label{eq:app-with-csit-primal-opt3-trace}\\
			 & \theta_i\geq 0, \forall i\in\{1,2,\ldots, N_{T}\}  \label{eq:app-with-csit-primal-opt3-sdp}
\end{align}
Note that problem \eqref{eq:app-with-csit-primal-opt3-obj}-\eqref{eq:app-with-csit-primal-opt3-sdp} satisfies Slater's condition. So the optimal solution to problem \eqref{eq:app-with-csit-primal-opt3-obj}-\eqref{eq:app-with-csit-primal-opt3-sdp} is characterized by KKT conditions \cite{book_ConvexOptimization}. The remaining part is similar to the derivation of the water-filling solution of power allocation in parallel channels, e.g., the proof of Example 5.2 in \cite{book_ConvexOptimization}. Introducing Lagrange multipliers $\mu\in \mathbb{R}^{+}$ for inequality constraint $\sum_{i=1}^{N_{T}} \theta_i \leq P$ and $\boldsymbol{\nu} = [\nu_1, \ldots, \nu_{N_{T}}]^T\in \mathbb{R}^+$ for inequality constraints $\theta_i\geq 0, i\in\{1,2,\ldots,N_T\}$. Let $\boldsymbol{\theta}^\ast = [\theta_{1}^{\ast}, \ldots, \theta_{N_{T}}^{\ast}]^{T}$ and $(\mu^\ast, \boldsymbol{\nu}^\ast)$ be any primal and dual optimal points with zero duality gap. By the KKT conditions, we have $-\frac{\sigma_{i}}{1+ \theta_i^\ast \sigma_{i}} + Z/V + \mu^{\ast} - \nu_{i}^{\ast}= 0,\forall i\in\{1,2,\ldots,N_{T}\}; 
\sum_{i=1}^{N_{T}} \theta_i^\ast \leq  P;
\mu^{\ast} \geq 0;
\mu^{\ast} \big[\sum_{i=1}^{N_{T}} \theta_i^\ast -  P\big] = 0; 
\theta_i^\ast \geq 0, \forall i\in\{1,2,\ldots,N_{T}\};
\nu_i^\ast \geq 0, \forall i\in\{1,2,\ldots,N_{T}\};
\nu_i^\ast \theta_i^\ast =0, \forall i\in\{1,2,\ldots,N_{T}\}$.

Eliminating $\nu_i^\ast, \forall i\in\{1,2,\ldots,N_{T}\}$ in all equations yields $\mu^{\ast} + Z/V \geq \frac{\sigma_{i}}{1 + \theta_i^{\ast}\sigma_{i}}, \forall i\in\{1,2,\ldots,N_{T}\}; 
\sum_{i=1}^{N_T} \theta_i^\ast \leq  P;
\mu^{\ast} \geq 0;
\mu^{\ast} \big[\sum_{i=1}^{N_{T}} \theta_i^\ast -  P\big] = 0;
\theta_i^\ast \geq 0,\forall i\in\{1,2,\ldots,N_{T}\};
(\mu^{\ast} + Z/V -\frac{\sigma_{i}}{1 + \theta_i^{\ast}\sigma_{i}} ) \theta_i^\ast =0,\forall i\in\{1,2,\ldots,N_{T}\}$.

For all $i\in\{1,2,\ldots,N_{T}\}$, we consider $\mu^\ast + Z/V < \sigma_i$ and $\mu^\ast + Z/V \geq \sigma_i$ separately:
\begin{enumerate}
\item If $\mu^\ast + Z/V< \sigma_i$, then $\mu^{\ast} + Z/V \geq \frac{\sigma_{i}}{1 + \theta_i^{\ast}\sigma_{i}}$ holds only when $\theta_i^\ast >0$, which by $(\mu^{\ast} + Z/V -\frac{\sigma_{i}}{1 + \theta_i^{\ast}\sigma_{i}} ) \theta_i^\ast$ implies that $\mu^{\ast} + Z/V -\frac{\sigma_{i}}{1 + \theta_i^{\ast}\sigma_{i}}=0$, i.e., $\theta_i^{\ast} = \frac{1}{\mu^{\ast} + Z/V} - \frac{1}{\sigma_{i}}$.
\item If $\mu^\ast + Z/V \geq \sigma_i$, then $\theta_i^\ast >0$ is impossible, because $\theta_i^\ast >0$ implies that $\mu^{\ast} + Z/V -\frac{\sigma_{i}}{1 + \theta_i^{\ast}\sigma_{i}} >0$, which together with $\theta_i^\ast >0$ contradict the slackness condition $(\mu^{\ast} + Z/V -\frac{\sigma_{i}}{1 + \theta_i^{\ast}\sigma_{i}} ) \theta_i^\ast =0$. Thus, if $\mu^\ast + Z/V \geq \sigma_i$, we must have $\theta_i^\ast = 0$.
\end{enumerate}
Summarizing both cases, we have $\theta_i^\ast = \max\big[0,\frac{1}{\mu^{\ast} + Z/V} - \frac{1}{\sigma_{i}}\big], \forall i\in\{1,2,\ldots,N_{T}\}$, where $\mu^{\ast}$ is chosen such that $\sum_{i=1}^{n} \theta_i^\ast \leq  P$, $\mu^{\ast} \geq 0$ and $\mu^{\ast} \big[\sum_{i=1}^{N_{T}} \theta_i^\ast -  P\big] = 0$.  

To find such $\mu^{\ast}$, we first check if $\mu^{\ast} =0$. If $\mu^{\ast} =0$ is true, the slackness condition $\mu^{\ast} \big[\sum_{i=1}^{N_{T}} \theta_i^\ast -  P\big] = 0$ is guaranteed to hold and we need to further require $\sum_{i=1}^{N_{T}} \theta_i^{\ast} = \sum_{i=1}^{N_{T}}\max\big[0,\frac{1}{\mu^{\ast} + Z/V} - \frac{1}{\sigma_{i}}\big]\leq P$. Thus $\mu^{\ast} =0$ if and only if $\sum_{i=1}^{N_{T}}  \max\big[0,\frac{1}{Z/V} - \frac{1}{\sigma_{i}}\big] \leq P$. Thus, Algorithm \ref{alg:with-csit-primal-opt-exact-alg} checks if $\sum_{i=1}^{N_{T}}  \max\big[0,\frac{1}{Z/V} - \frac{1}{\sigma_{i}}\big]\leq P$ holds at the first step and if this is true, then we conclude $\mu^{\ast} = 0$ and we are done! 

Otherwise, we know $\mu^{\ast} >0$.  By the slackness condition $\mu^{\ast} \big[\sum_{i=1}^{N_{T}} \theta_{i}^\ast -  P\big] = 0$, we must have $\sum_{i=1}^{N_{T}} \theta_{i}^\ast = \sum_{i=1}^{N_{T}}  \max\big[0,\frac{1}{\mu^{\ast} + Z/V} - \frac{1}{\sigma_{i}}\big] =P$. To find $\mu^{\ast}>0$ such that  $\sum_{i=1}^{N_{T}}  \max\big[0,\frac{1}{\mu^{\ast} + Z/V} - \frac{1}{\sigma_{i}}\big]=P$, we could apply a bisection search by noting that all $\theta_{i}^{\ast}$ are decreasing with respect to $\mu^{\ast}$. 

Another algorithm of finding $\mu^{\ast}$ is inspired by the observation that if $\sigma_{j} \geq \sigma_{k}, \forall j,k\in\{1,2,\ldots, N_{T}\}$, then $\theta_{j}^{\ast} \geq \theta_{k}^{\ast}$. Thus, we first sort all $\sigma_{i}$ in a decreasing order, say $\pi$ is the permutation such that $\sigma_{\pi(1)} \geq \sigma_{\pi(2)} \geq \cdots \geq \sigma_{\pi(N_{T})}$; and then sequentially check if $i\in\{1,2,\ldots,N_{T}\}$ is the index such that $\sigma_{\pi(i)} - \mu^{\ast} \geq 0$ and $\sigma_{\pi(i+1)} - \mu^{\ast} \leq 0$. To check this, we first assume $i$ is indeed such an index and solve the equation $\sum_{j=1}^{i} \big[ \frac{1}{\mu^{\ast} + Z/V} - \frac{1}{\sigma_{\pi(j)}} \big] = P$ to obtain $\mu^{\ast}$; (Note that in Algorithm \ref{alg:with-csit-primal-opt-exact-alg}, to avoid recalculating the partial sum $\sum_{j=1}^{i} \frac{1}{\sigma_{\pi(j)}}$ for each $i$, we introduce the parameter $S_i = \sum_{j=1}^{i} \frac{1}{\sigma_{\pi(j)}}$ and update $S_{i}$ incrementally. By doing this, the complexity of each iteration in the loop is only $O(1)$.) then verify the assumption by checking if $\frac{1}{\mu^{\ast} + Z/V} - \frac{1}{\sigma_{\pi(i)}}  \geq 0$ and $\frac{1}{\mu^{\ast} + Z/V} - \frac{1}{\sigma_{\pi(i+1)}}  \leq 0$. This algorithm is described in Algorithm \ref{alg:with-csit-primal-opt-exact-alg}. 

\section{Proof of Lemma \ref{lm:with-csit-dpp-bound}} \label{app:pf-lm-with-csit-dpp-bound}

\begin{Fact}\label{fact:identiy-plus-sdp-inverse-frobenius-bound}
For all $\mathbf{X}\in \mathbb{S}^{n}_{+}$, we have $\Vert (\mathbf{I} + \mathbf{X})^{-1}\Vert_{F} \leq \sqrt{n}$.
\end{Fact}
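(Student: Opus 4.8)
The plan is to reduce everything to the eigenvalues of $\mathbf{X}$ via the spectral theorem and then bound the resulting scalar sum. Since $\mathbf{X}\in\mathbb{S}^{n}_{+}$ is Hermitian and positive semidefinite, I would first write its spectral decomposition $\mathbf{X}=\mathbf{V}\herm\boldsymbol{\Lambda}\mathbf{V}$, where $\mathbf{V}$ is unitary and $\boldsymbol{\Lambda}=\text{diag}(\lambda_1,\ldots,\lambda_n)$ with $\lambda_i\geq 0$ for all $i$. Then $\mathbf{I}+\mathbf{X}=\mathbf{V}\herm(\mathbf{I}+\boldsymbol{\Lambda})\mathbf{V}$ is invertible (its eigenvalues $1+\lambda_i\geq 1$ are strictly positive), and $(\mathbf{I}+\mathbf{X})^{-1}=\mathbf{V}\herm(\mathbf{I}+\boldsymbol{\Lambda})^{-1}\mathbf{V}$.

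Next I would compute the squared Frobenius norm directly from its definition $\Vert\mathbf{A}\Vert_{F}^{2}=\text{tr}(\mathbf{A}\herm\mathbf{A})$. Since $(\mathbf{I}+\mathbf{X})^{-1}$ is Hermitian, this gives
\begin{equation*}
\Vert(\mathbf{I}+\mathbf{X})^{-1}\Vert_{F}^{2}=\text{tr}\big((\mathbf{I}+\mathbf{X})^{-2}\big)=\text{tr}\big((\mathbf{I}+\boldsymbol{\Lambda})^{-2}\big)=\sum_{i=1}^{n}\frac{1}{(1+\lambda_i)^{2}},
\end{equation*}
where the second equality uses the cyclic invariance of the trace together with $\mathbf{V}\mathbf{V}\herm=\mathbf{I}$. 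Because each $\lambda_i\geq 0$, we have $(1+\lambda_i)^{2}\geq 1$, hence every summand satisfies $\tfrac{1}{(1+\lambda_i)^{2}}\leq 1$, and the sum is at most $n$. Taking square roots yields $\Vert(\mathbf{I}+\mathbf{X})^{-1}\Vert_{F}\leq\sqrt{n}$.

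There is no genuine obstacle here: the entire argument rests only on positive semidefiniteness forcing the eigenvalues of $\mathbf{I}+\mathbf{X}$ to be at least $1$, so that the inverse has eigenvalues in $(0,1]$. The only point worth stating carefully is that the Frobenius norm of a Hermitian matrix equals the square root of the sum of squares of its eigenvalues (equivalently, that it is unitarily invariant), which is what lets me replace $\mathbf{X}$ by $\boldsymbol{\Lambda}$ throughout.
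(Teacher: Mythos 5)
Your proof is correct and follows essentially the same route as the paper: diagonalize $\mathbf{X}$ by a unitary (the paper writes this as an SVD, which coincides with the spectral decomposition for a positive semidefinite Hermitian matrix), observe that $(\mathbf{I}+\mathbf{X})^{-1}$ has eigenvalues $\tfrac{1}{1+\lambda_i}\in(0,1]$, and bound $\Vert(\mathbf{I}+\mathbf{X})^{-1}\Vert_F^2=\sum_{i=1}^n\tfrac{1}{(1+\lambda_i)^2}\leq n$. No gaps.
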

\begin{IEEEproof}
Since $\mathbf{X}\in \mathbb{S}^{n}_{+}$, matrix $\mathbf{X}$ has SVD $\mathbf{X} = \mathbf{U}\herm\mathbf{\Sigma}\mathbf{U}$, where $\mathbf{U}$ is unitary and $\mathbf{\Sigma}$ is diagonal with non-negative entries $\sigma_{1},\ldots, \sigma_{n}$. Then $\mathbf{Y} = (\mathbf{I} + \mathbf{X})^{-1} = \mathbf{U}\herm \text{diag}(\frac{1}{1+\sigma_{1}}, \ldots, \frac{1}{1+\sigma_{n}})\mathbf{U}$ is Hermitian. Thus, $\Vert (\mathbf{I} + \mathbf{X})^{-1} \Vert_{F} = \sqrt{\text{tr}(\mathbf{Y}^{2})} = \sqrt{
\sum_{i=1}^{n}(\frac{1}{1+\sigma_{i}})^{2}} \leq \sqrt{n}$.
\end{IEEEproof}

\begin{Fact}\label{fact:HH-difference-bound}
For any $\mathbf{H}, \widetilde{\mathbf{H}}\in \mathbb{C}^{N_R\times N_T}$ with $\Vert \mathbf{H}\Vert_F\leq B$ and $\Vert \mathbf{H} - \widetilde{\mathbf{H}}\Vert_F \leq \delta$, we have $\Vert \mathbf{H}\herm \mathbf{H} - \widetilde{\mathbf{H}}\herm \widetilde{\mathbf{H}}\Vert_F\leq (2B+\delta)\delta$.
\end{Fact}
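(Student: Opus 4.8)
The plan is to prove the bound by a standard cross-term (telescoping) decomposition of the difference $\mathbf{H}\herm\mathbf{H} - \widetilde{\mathbf{H}}\herm\widetilde{\mathbf{H}}$, followed by the triangle inequality and submultiplicativity of the Frobenius norm recorded in Fact \ref{fact:1}. First I would insert and subtract the cross term $\mathbf{H}\herm\widetilde{\mathbf{H}}$ to write
\[
\mathbf{H}\herm\mathbf{H} - \widetilde{\mathbf{H}}\herm\widetilde{\mathbf{H}} = \mathbf{H}\herm(\mathbf{H} - \widetilde{\mathbf{H}}) + (\mathbf{H} - \widetilde{\mathbf{H}})\herm\widetilde{\mathbf{H}},
\]
which isolates the perturbation $\mathbf{H} - \widetilde{\mathbf{H}}$ as a factor in each summand.

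Next I would apply the triangle inequality (Fact \ref{fact:1}, part 2) to split the two summands, and then submultiplicativity (Fact \ref{fact:1}, part 3) to bound each product, giving $\Vert\mathbf{H}\herm(\mathbf{H}-\widetilde{\mathbf{H}})\Vert_F \le \Vert\mathbf{H}\herm\Vert_F\,\Vert\mathbf{H}-\widetilde{\mathbf{H}}\Vert_F$ and similarly $\Vert(\mathbf{H}-\widetilde{\mathbf{H}})\herm\widetilde{\mathbf{H}}\Vert_F \le \Vert(\mathbf{H}-\widetilde{\mathbf{H}})\herm\Vert_F\,\Vert\widetilde{\mathbf{H}}\Vert_F$. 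Using the Hermitian-transpose invariance of the Frobenius norm (Fact \ref{fact:1}, part 1), I have $\Vert\mathbf{H}\herm\Vert_F = \Vert\mathbf{H}\Vert_F \le B$ and $\Vert(\mathbf{H}-\widetilde{\mathbf{H}})\herm\Vert_F = \Vert\mathbf{H}-\widetilde{\mathbf{H}}\Vert_F \le \delta$.

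The only remaining quantity is $\Vert\widetilde{\mathbf{H}}\Vert_F$, which I would control by one more triangle inequality, $\Vert\widetilde{\mathbf{H}}\Vert_F \le \Vert\mathbf{H}\Vert_F + \Vert\widetilde{\mathbf{H}}-\mathbf{H}\Vert_F \le B + \delta$. Combining these estimates yields $\Vert\mathbf{H}\herm\mathbf{H} - \widetilde{\mathbf{H}}\herm\widetilde{\mathbf{H}}\Vert_F \le B\delta + \delta(B+\delta) = (2B+\delta)\delta$, as claimed. There is no substantive obstacle here; the only real ``choice'' is selecting the cross term so that the perturbation $\mathbf{H}-\widetilde{\mathbf{H}}$ appears linearly in both pieces, after which the bounds on $\Vert\mathbf{H}\Vert_F$ and $\Vert\widetilde{\mathbf{H}}\Vert_F$ follow immediately and the factorization $(2B+\delta)\delta$ assembles itself.
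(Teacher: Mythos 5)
Your proposal is correct and follows essentially the same route as the paper's proof: the same cross-term insertion of $\mathbf{H}\herm\widetilde{\mathbf{H}}$, the same applications of the triangle inequality and submultiplicativity from Fact \ref{fact:1}, and the same bound $\Vert\widetilde{\mathbf{H}}\Vert_F \leq B+\delta$, arriving at $2B\delta + \delta^2 = (2B+\delta)\delta$. No gaps.
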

\begin{IEEEproof}
\begin{align*}
&\Vert \mathbf{H}\herm \mathbf{H} - \widetilde{\mathbf{H}}\herm \widetilde{\mathbf{H}}\Vert_F \\
\overset{(a)}{\leq}& \Vert \mathbf{H}\herm \mathbf{H} - \mathbf{H}\herm \widetilde{\mathbf{H}}\Vert_F +   \Vert \mathbf{H}\herm \widetilde{\mathbf{H}} - \widetilde{\mathbf{H}}\herm \widetilde{\mathbf{H}}\Vert_F\\ \overset{(b)}{\leq} & \Vert \mathbf{H}\herm\Vert_F \Vert \mathbf{H} - \widetilde{\mathbf{H}}\Vert_F + \Vert\mathbf{H}\herm - \widetilde{\mathbf{H}}\herm\Vert_F \Vert \widetilde{\mathbf{H}}\Vert_F\\
\overset{(c)}{\leq} & \Vert \mathbf{H}\herm\Vert_F \Vert \mathbf{H} - \widetilde{\mathbf{H}}\Vert_F + \Vert\mathbf{H}\herm - \widetilde{\mathbf{H}}\herm\Vert_F  \big(\Vert \widetilde{\mathbf{H}} - \mathbf{H}\Vert_F + \Vert \mathbf{H}\Vert_F\big) \\
\leq & 2B\delta + \delta^2
\end{align*}
where (a) and (c) follow from part (2) of Fact \ref{fact:1}; and (b) follows from part (3) of Fact \ref{fact:1}.
\end{IEEEproof}

Fix $Z(t)$ and $V$. Define $\phi(\mathbf{Q}, \mathbf{H}) = V \log \det(\mathbf{I} + \mathbf{H}\mathbf{Q} \mathbf{H}\herm) - Z(t) \text{tr}(\mathbf{Q})$ and $\psi(\mathbf{L},\mathbf{T}) = V \log \det(\mathbf{I} + \mathbf{L} \mathbf{T}\mathbf{L}\herm) - Z(t) \text{tr}(\mathbf{L}\herm\mathbf{L})$.  

\begin{Fact} \label{fact:concave-wrt-channel}
Let $\mathbf{Q}\in \mathbb{S}^{N_T}_+$ have Cholesky decomposition $\mathbf{Q} = \mathbf{L}\herm\mathbf{L}$. Then, $\phi(\mathbf{Q}, \mathbf{H})  = V \log \det(\mathbf{I} + \mathbf{L} \mathbf{T}\mathbf{L}\herm) - Z(t) \text{tr}(\mathbf{L}\herm\mathbf{L}) = \psi(\mathbf{L},\mathbf{T})$ with $\mathbf{T} = \mathbf{H}\herm \mathbf{H}$. Moreover, if $\mathbf{L}$ is fixed, then $\psi(\mathbf{L},\mathbf{T})$ is concave with respect to $\mathbf{T}$ and has gradient $\nabla_\mathbf{T} \psi(\mathbf{L},\mathbf{T}) = V \mathbf{L}^H (\mathbf{I} + \mathbf{L}\mathbf{T}\mathbf{L}^H)^{-1} \mathbf{L}$.
\end{Fact}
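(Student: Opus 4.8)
The plan is to establish the claimed identity first and then obtain concavity and the gradient by a direct appeal to Fact \ref{fact:derivative-log-det} under a suitable relabeling of variables. For the identity, I would substitute the Cholesky factorization $\mathbf{Q} = \mathbf{L}\herm\mathbf{L}$ into the log-det term, writing $\mathbf{H}\mathbf{Q}\mathbf{H}\herm = (\mathbf{H}\mathbf{L}\herm)(\mathbf{L}\mathbf{H}\herm)$, and then apply the elementary determinant identity $\det(\mathbf{I} + \mathbf{A}\mathbf{B}) = \det(\mathbf{I} + \mathbf{B}\mathbf{A})$ already used in Appendix \ref{app:pf-lm-with-csit-transmit-covaraince-update} with $\mathbf{A} = \mathbf{H}\mathbf{L}\herm$ and $\mathbf{B} = \mathbf{L}\mathbf{H}\herm$. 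This swaps $\mathbf{A}\mathbf{B} = \mathbf{H}\mathbf{L}\herm\mathbf{L}\mathbf{H}\herm$ for $\mathbf{B}\mathbf{A} = \mathbf{L}\mathbf{H}\herm\mathbf{H}\mathbf{L}\herm = \mathbf{L}\mathbf{T}\mathbf{L}\herm$ inside the determinant, so $\log\det(\mathbf{I} + \mathbf{H}\mathbf{Q}\mathbf{H}\herm) = \log\det(\mathbf{I} + \mathbf{L}\mathbf{T}\mathbf{L}\herm)$. Since the trace term satisfies $\text{tr}(\mathbf{Q}) = \text{tr}(\mathbf{L}\herm\mathbf{L})$ by construction, this immediately yields $\phi(\mathbf{Q},\mathbf{H}) = \psi(\mathbf{L},\mathbf{T})$ with $\mathbf{T} = \mathbf{H}\herm\mathbf{H}$.

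For the concavity and gradient claims, the key observation is that when $\mathbf{L}$ is held fixed the second term $-Z(t)\text{tr}(\mathbf{L}\herm\mathbf{L})$ is a constant in $\mathbf{T}$, hence affects neither concavity nor the gradient, so it suffices to analyze $\log\det(\mathbf{I} + \mathbf{L}\mathbf{T}\mathbf{L}\herm)$ as a function of $\mathbf{T} \in \mathbb{S}^{N_T}_+$. I would then invoke Fact \ref{fact:derivative-log-det} under the relabeling $\mathbf{H} \mapsto \mathbf{L}$ and $\mathbf{Q} \mapsto \mathbf{T}$: that fact asserts precisely that $\mathbf{T} \mapsto \log\det(\mathbf{I} + \mathbf{L}\mathbf{T}\mathbf{L}\herm)$ is concave over the positive semidefinite cone with gradient $\mathbf{L}\herm(\mathbf{I} + \mathbf{L}\mathbf{T}\mathbf{L}\herm)^{-1}\mathbf{L}$. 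Multiplying by the positive constant $V$ preserves concavity and scales the gradient, while adding the $\mathbf{T}$-independent constant changes neither, giving $\nabla_\mathbf{T}\psi(\mathbf{L},\mathbf{T}) = V\mathbf{L}\herm(\mathbf{I} + \mathbf{L}\mathbf{T}\mathbf{L}\herm)^{-1}\mathbf{L}$.

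The only point requiring care, which I would flag as the main conceptual step rather than a genuine obstacle, is the role reversal relative to how Fact \ref{fact:derivative-log-det} is originally stated: there $\mathbf{H}$ plays the fixed ``channel'' role and $\mathbf{Q}$ is the variable, whereas here the Cholesky factor $\mathbf{L}$ is fixed and the Gram matrix $\mathbf{T}$ is the variable. One should confirm the dimensions are consistent for the substitution: since $\mathbf{Q} \in \mathbb{S}^{N_T}_+$ its factor $\mathbf{L}$ is $N_T \times N_T$ and $\mathbf{T} = \mathbf{H}\herm\mathbf{H} \in \mathbb{S}^{N_T}_+$, so $\mathbf{L}\mathbf{T}\mathbf{L}\herm$ is $N_T \times N_T$ and the identity matrix appearing after the determinant swap is $\mathbf{I}_{N_T}$ rather than the original $\mathbf{I}_{N_R}$. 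Once this relabeling is justified, the remainder is immediate, and no additional estimates or limiting arguments are needed.
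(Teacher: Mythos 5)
Your proposal is correct and follows essentially the same route as the paper's proof: the same Cholesky substitution combined with the identity $\det(\mathbf{I}+\mathbf{A}\mathbf{B})=\det(\mathbf{I}+\mathbf{B}\mathbf{A})$ to rewrite the log-det term, and the same invocation of Fact~\ref{fact:derivative-log-det} with the roles of the fixed and variable matrices swapped ($\mathbf{H}\mapsto\mathbf{L}$, $\mathbf{Q}\mapsto\mathbf{T}$) after discarding the $\mathbf{T}$-independent trace term. Your extra remark about the identity matrix changing size from $\mathbf{I}_{N_R}$ to $\mathbf{I}_{N_T}$ is a valid point of care that the paper leaves implicit.
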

\begin{IEEEproof}
Note that 
\begin{align*}
 &V \log \det(\mathbf{I} + \mathbf{H}\mathbf{Q} \mathbf{H}\herm) - Z(t) \text{tr}(\mathbf{Q})\\
=&V \log \det(\mathbf{I} + \mathbf{H}\mathbf{L}\herm\mathbf{L} \mathbf{H}\herm) - Z(t) \text{tr}(\mathbf{L}\herm\mathbf{L})\\
 \overset{(a)}{=} &V \log \det(\mathbf{I} + \mathbf{L} \mathbf{H}\herm\mathbf{H}\mathbf{L}\herm) - Z(t) \text{tr}(\mathbf{L}\herm\mathbf{L})\\
 \overset{(b)}{=} &V \log \det(\mathbf{I} + \mathbf{L} \mathbf{T}\mathbf{L}\herm) - Z(t) \text{tr}(\mathbf{L}\herm\mathbf{L})\\
 =&\psi(\mathbf{L},\mathbf{T})
\end{align*}
where (a) follows from the elementary identity $\det(\mathbf{I} + \mathbf{A}\mathbf{B}) = \det(\mathbf{I} + \mathbf{B}\mathbf{A})$ for any $\mathbf{A}\in \mathbb{C}^{m\times n}$ and $\mathbf{B}\in \mathbb{C}^{n\times m}$; and (b) follows from the definition $\mathbf{T} = \mathbf{H}\herm \mathbf{H}$.

Note that if $\mathbf{L}$ is fixed, then $Z(t) \text{tr}(\mathbf{L}\herm\mathbf{L})$ is a constant. It follows from Fact \ref{fact:derivative-log-det} that $\psi(\mathbf{L},\mathbf{T})$ is concave with respect to $\mathbf{T}$ and has gradient $\nabla_\mathbf{T} \psi(\mathbf{L},\mathbf{T}) = V \mathbf{L}^H (\mathbf{I} + \mathbf{L}\mathbf{T}\mathbf{L}^H)^{-1} \mathbf{L}$. 
\end{IEEEproof}

Let $\mathbf{Q}^\ast(\mathbf{H})$ be an optimal solution to problem \eqref{eq:stochastic-with-csit-obj}-\eqref{eq:stochastic-with-csit-set-con}.  Note that $\mathbf{Q}^\ast(\mathbf{H})$ is a mapping from channel states to transmit covariances and $R^{\text{opt}} = \mathbb{E}[\log\det(\mathbf{I} + \mathbf{H}\mathbf{Q}^\ast(\mathbf{H})\mathbf{H}\herm)]$.  To simplify notation, we denote $\mathbf{Q}^\ast(t) = \mathbf{Q}^\ast(\mathbf{H}(t))$, i.e. the transmit covariance at slot $t$ selected  according to $\mathbf{Q}^\ast(\mathbf{H})$. The next lemma relates the performance of Algorithm \ref{alg:with-csit} and $\mathbf{Q}^{\ast}$ at each slot $t$. 

\begin{Lem}\label{lm:with-csit-relate-omega-only}
Let $\mathbf{Q}(t)$ be yielded by Algorithm \ref{alg:with-csit}. At each slot $t$, we have  $V\log\det(\mathbf{I} + \mathbf{H}(t)\mathbf{Q}(t)\mathbf{H}\herm(t)) - Z(t) \text{tr}(\mathbf{Q}(t)) 
\geq V\log\det(\mathbf{I} + \mathbf{H}(t)\mathbf{Q}^\ast(t)\mathbf{H}\herm(t)) - Z(t) \text{tr}(\mathbf{Q}^\ast(t)) - 2VP\sqrt{N_T} (2B+\delta) \delta$.
\end{Lem}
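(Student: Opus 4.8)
The plan is to work with the per-slot objective $\phi(\mathbf{Q},\mathbf{H}) = V\log\det(\mathbf{I}+\mathbf{H}\mathbf{Q}\mathbf{H}\herm)-Z(t)\text{tr}(\mathbf{Q})$ defined above, and to exploit that Algorithm~\ref{alg:with-csit} picks $\mathbf{Q}(t)$ as the maximizer of $\phi(\cdot,\widetilde{\mathbf{H}}(t))$ over $\mathcal{Q}$. First I would invoke this optimality: since $\mathbf{Q}^\ast(t)\in\mathcal{Q}$ by constraint~\eqref{eq:stochastic-with-csit-set-con}, we have $\phi(\mathbf{Q}(t),\widetilde{\mathbf{H}}(t))\geq\phi(\mathbf{Q}^\ast(t),\widetilde{\mathbf{H}}(t))$. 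The only real difficulty is that the claim compares the two policies at the \emph{true} channel $\mathbf{H}(t)$, not at $\widetilde{\mathbf{H}}(t)$, so the inaccurate-channel optimality must be transferred back to $\mathbf{H}(t)$ at a controlled cost.

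To transfer it I would use the exact identity
\begin{align*}
\phi(\mathbf{Q}(t),\mathbf{H}(t)) ={}& \phi(\mathbf{Q}^\ast(t),\mathbf{H}(t)) + \big[\phi(\mathbf{Q}(t),\widetilde{\mathbf{H}}(t))-\phi(\mathbf{Q}^\ast(t),\widetilde{\mathbf{H}}(t))\big] \\
&+ \big[\phi(\mathbf{Q}(t),\mathbf{H}(t))-\phi(\mathbf{Q}(t),\widetilde{\mathbf{H}}(t))\big] \\
&- \big[\phi(\mathbf{Q}^\ast(t),\mathbf{H}(t))-\phi(\mathbf{Q}^\ast(t),\widetilde{\mathbf{H}}(t))\big],
\end{align*}
which is verified by cancellation. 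The bracket on the first line is nonnegative by the optimality step and may be dropped. It then remains to control the two ``channel-perturbation'' terms, each of the form $\phi(\mathbf{Q},\mathbf{H}(t))-\phi(\mathbf{Q},\widetilde{\mathbf{H}}(t))$ for a fixed $\mathbf{Q}\in\mathcal{Q}$ (namely $\mathbf{Q}=\mathbf{Q}(t)$ and $\mathbf{Q}=\mathbf{Q}^\ast(t)$): bounding the absolute value of each by $VP\sqrt{N_T}(2B+\delta)\delta$ and summing yields the stated constant $2VP\sqrt{N_T}(2B+\delta)\delta$.

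The key tool for a single perturbation term is Fact~\ref{fact:concave-wrt-channel}. Writing the Cholesky factorization $\mathbf{Q}=\mathbf{L}\herm\mathbf{L}$ and setting $\mathbf{T}=\mathbf{H}\herm(t)\mathbf{H}(t)$, $\widetilde{\mathbf{T}}=\widetilde{\mathbf{H}}\herm(t)\widetilde{\mathbf{H}}(t)$, that fact rewrites the perturbation as $\psi(\mathbf{L},\mathbf{T})-\psi(\mathbf{L},\widetilde{\mathbf{T}})$, where $\psi(\mathbf{L},\cdot)$ is concave with gradient $V\mathbf{L}\herm(\mathbf{I}+\mathbf{L}\mathbf{T}\mathbf{L}\herm)^{-1}\mathbf{L}$. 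Applying the first-order concavity condition (Fact~\ref{fact:convex-function-first-order-condition}) at each of the two endpoints, together with Fact~\ref{fact:1}(4), gives $|\psi(\mathbf{L},\mathbf{T})-\psi(\mathbf{L},\widetilde{\mathbf{T}})|\leq\|\nabla_\mathbf{T}\psi\|_F\,\|\mathbf{T}-\widetilde{\mathbf{T}}\|_F$ with the gradient evaluated at an endpoint.

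The main obstacle is the uniform gradient bound, which I expect to be the crux. Here I would observe that $\mathbf{L}\mathbf{T}\mathbf{L}\herm\in\mathbb{S}^{N_T}_+$, so Fact~\ref{fact:identiy-plus-sdp-inverse-frobenius-bound} gives $\|(\mathbf{I}+\mathbf{L}\mathbf{T}\mathbf{L}\herm)^{-1}\|_F\leq\sqrt{N_T}$, while the short-term constraint $\text{tr}(\mathbf{Q})\leq P$ gives $\|\mathbf{L}\|_F^2=\text{tr}(\mathbf{L}\herm\mathbf{L})\leq P$; submultiplicativity (Fact~\ref{fact:1}(3)) then yields $\|\nabla_\mathbf{T}\psi\|_F\leq V\sqrt{P}\cdot\sqrt{N_T}\cdot\sqrt{P}=VP\sqrt{N_T}$ for \emph{any} positive semidefinite channel argument, so the same bound applies at both $\mathbf{T}$ and $\widetilde{\mathbf{T}}$. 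Combining this with the channel-deviation estimate $\|\mathbf{T}-\widetilde{\mathbf{T}}\|_F\leq(2B+\delta)\delta$ from Fact~\ref{fact:HH-difference-bound} bounds each perturbation term by $VP\sqrt{N_T}(2B+\delta)\delta$, and assembling the three pieces in the identity above completes the proof.
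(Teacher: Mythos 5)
Your proposal is correct and follows essentially the same route as the paper's proof: the exact three-term identity you write is just the paper's chain of inequalities (optimality at $\widetilde{\mathbf{H}}(t)$, sandwiched between two channel-perturbation steps) rearranged, and the perturbation bound via Fact~\ref{fact:concave-wrt-channel}, the first-order concavity condition, Fact~\ref{fact:identiy-plus-sdp-inverse-frobenius-bound}, and Fact~\ref{fact:HH-difference-bound} is exactly the paper's Steps 1 and 3, with your uniform gradient bound $VP\sqrt{N_T}$ correctly covering both evaluation points ($\mathbf{T}(t)$ and $\widetilde{\mathbf{T}}(t)$) and both covariances since $\text{tr}(\mathbf{Q}(t)),\text{tr}(\mathbf{Q}^\ast(t))\leq P$. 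No gaps.
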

\begin{IEEEproof}
Fix $t>0$. Let $\widetilde{\mathbf{H}}(t)\in \mathbb{C}^{N_R\times N_T}$ be the observed (inaccurate) CSIT satisfying $\Vert \mathbf{H}(t) - \widetilde{\mathbf{H}}(t)\Vert_F \leq \delta$. The main proof of this lemma can be decomposed into 3 steps:

{\bf $\bullet$ Step 1:} Show that $\phi(\mathbf{Q}(t), \mathbf{H}(t)) \geq \phi(\mathbf{Q}(t), \widetilde{\mathbf{H}}(t)) - VP\sqrt{N_T} (2B+\delta) \delta$. Let $\mathbf{Q}(t) = \mathbf{L}\herm(t)\mathbf{L}(t)$ be an Cholesky decomposition. Define $\mathbf{T}(t) = \mathbf{H}\herm(t) \mathbf{H}(t)$ and $\widetilde{\mathbf{T}}(t) = \widetilde{\mathbf{H}}\herm (t) \widetilde{\mathbf{H}}(t)$.  By Fact \ref{fact:concave-wrt-channel}, we have $\psi(\mathbf{L}(t),\mathbf{T}(t)) = \phi(\mathbf{Q}(t), \mathbf{H}(t))$ and $\psi(\mathbf{L}(t),\widetilde{\mathbf{T}}(t)) = \phi(\mathbf{Q}(t), \widetilde{\mathbf{H}}(t))$; and $\psi$ is concave with respect to $\mathbf{T}$.  By Fact \ref{fact:convex-function-first-order-condition}, we have 
\begin{align*}
&\psi(\mathbf{L}(t),\mathbf{T}(t)) \\
\geq &\psi(\mathbf{L}(t),\widetilde{\mathbf{T}}(t)) - \text{tr} \big( [\nabla_\mathbf{T} \psi(\mathbf{L}(t),\mathbf{T}(t)) ]\herm (\widetilde{\mathbf{T}}(t)-\mathbf{T}(t)\big)\\
\overset{(a)}{\geq} &  \psi(\mathbf{L}(t),\widetilde{\mathbf{T}}(t))  - \Vert \nabla_\mathbf{T} \psi(\mathbf{L}(t),\mathbf{T}(t))\Vert_F \Vert \widetilde{\mathbf{T}}(t)-\mathbf{T}(t)\Vert_F\\
\overset{(b)}{\geq} &  \psi(\mathbf{L}(t),\widetilde{\mathbf{T}}(t)) \\ &- V \Vert \mathbf{L}\herm (t)(\mathbf{I} + \mathbf{L}(t)\mathbf{T}(t)\mathbf{L}\herm(t))^{-1} \mathbf{L}(t)\Vert_F (2B+\delta)\delta\\
\overset{(c)}{\geq} &\psi(\mathbf{L}(t),\widetilde{\mathbf{T}}(t))  - V P\sqrt{N_T}(2B+\delta)\delta
\end{align*}
where (a) follows from part (4) in Fact \ref{fact:1}; (b) follows from $\nabla_\mathbf{T} \psi(\mathbf{L}(t),\mathbf{T}(t))= V\mathbf{L}\herm (t)(\mathbf{I} + \mathbf{L}(t)\mathbf{T}(t)\mathbf{L}\herm(t))^{-1} \mathbf{L}(t)$ by Fact \ref{fact:concave-wrt-channel} and $\Vert \widetilde{\mathbf{T}}(t)-\mathbf{T}(t)\Vert_F \leq \delta(2B+\delta)$ which is further implied by Fact \ref{fact:HH-difference-bound}; and (c) follows from $ \Vert \mathbf{L}\herm (t)(\mathbf{I} + \mathbf{L}(t)\mathbf{T}(t)\mathbf{L}\herm (t))^{-1} \mathbf{L}(t)\Vert_F\leq \Vert \mathbf{L}\herm(t)\Vert_F^2 \Vert (\mathbf{I} + \mathbf{L}(t)\mathbf{T}(t)\mathbf{L}\herm(t))^{-1} \Vert_F \leq P \sqrt{N_T}$ where the first inequality follows from Fact \ref{fact:1} and the second inequality follows from $\Vert \mathbf{L}(t)\Vert_F = \sqrt{\text{tr}(\mathbf{L}\herm(t) \mathbf{L}(t))} =\sqrt{\text{tr}(\mathbf{Q}(t))} \leq \sqrt{P}$ and Fact \ref{fact:identiy-plus-sdp-inverse-frobenius-bound}.

{\bf $\bullet$ Step 2:} Show that $ \phi(\mathbf{Q}(t), \widetilde{\mathbf{H}}(t)) \geq \phi(\mathbf{Q}^\ast(t), \widetilde{\mathbf{H}}(t))$.  This step simply follows from the fact that Algorithm \ref{alg:with-csit} choses $\mathbf{Q}(t)$ to maximize $\phi(\mathbf{Q}, \widetilde{\mathbf{H}}(t)) = V\log\det(\mathbf{I} + \widetilde{\mathbf{H}}(t) \mathbf{Q} \widetilde{\mathbf{H}}\herm(t)) - Z(t) \text{tr}(\mathbf{Q})$ and hence $\mathbf{Q}(t)$ should be no worse than $\mathbf{Q}^\ast(t)$.

{\bf $\bullet$ Step 3:} Show that $\phi(\mathbf{Q}^\ast(t), \widetilde{\mathbf{H}}(t)) \geq \phi(\mathbf{Q}^\ast(t), \mathbf{H}(t)) - VP\sqrt{N_T} (2B+\delta) \delta$. This step is similar to step 1.  Let $\mathbf{Q}^\ast(t) = \mathbf{M}\herm(t)\mathbf{M}(t)$ be an Cholesky decomposition. Define $\mathbf{T}(t) = \mathbf{H}\herm(t) \mathbf{H}(t)$ and $\widetilde{\mathbf{T}}(t) = \widetilde{\mathbf{H}}\herm (t) \widetilde{\mathbf{H}}(t)$.  By Fact \ref{fact:concave-wrt-channel}, we have $\psi(\mathbf{M}(t),\mathbf{T}(t)) = \phi(\mathbf{Q}^\ast(t), \mathbf{H}(t))$ and $\psi(\mathbf{M}(t), \widetilde{\mathbf{T}}(t)) = \phi(\mathbf{Q}^\ast(t), \widetilde{\mathbf{H}}(t))$; and $\psi$ is concave with respect to $\mathbf{T}$.    By Fact \ref{fact:convex-function-first-order-condition}, we have 
\begin{align*}
&\psi( \mathbf{M}(t), \widetilde{\mathbf{T}}(t)) \\
\geq &\psi( \mathbf{M}(t),\mathbf{T}(t)) - \text{tr} \big( [\nabla_\mathbf{T} \psi(\mathbf{M}(t)),\widetilde{\mathbf{T}}(t) ]\herm [\mathbf{T}(t) -\widetilde{\mathbf{T}}(t)]\big)\\
\overset{(a)}{\geq} & \psi( \mathbf{M}(t),\mathbf{T}(t))  - \Vert \nabla_\mathbf{T} \psi(\mathbf{M}(t),\widetilde{\mathbf{T}}(t))\Vert_F \Vert\mathbf{T}(t) -\widetilde{\mathbf{T}}(t)\Vert_F\\
\overset{(b)}{\geq} &  \psi( \mathbf{M}(t),\mathbf{T}(t)) \\ &- V\Vert \mathbf{M}\herm (t)(\mathbf{I} + \mathbf{M}(t)\widetilde{\mathbf{T}}(t)\mathbf{M}\herm(t))^{-1} \mathbf{M}(t)\Vert_F (2B+\delta)\delta\\
\overset{(c)}{\geq} &\psi( \mathbf{M}(t),\mathbf{T}(t)) - V P\sqrt{N_T}(2B+\delta)\delta
\end{align*}
where (a) follows from part (4) in Fact \ref{fact:1}; (b) follows from $\nabla_\mathbf{T} \psi( \mathbf{M}(t), \widetilde{\mathbf{T}}(t))= V\mathbf{M}\herm (t)(\mathbf{I} + \mathbf{M}(t)\widetilde{\mathbf{T}}(t)\mathbf{M}\herm(t))^{-1} \mathbf{M}(t)$ by Fact \ref{fact:concave-wrt-channel} and $\Vert\mathbf{T}(t) - \widetilde{\mathbf{T}}(t)\Vert_F \leq \delta(2B+\delta)$ which is further implied by Fact \ref{fact:HH-difference-bound}; and (c) follows from $ \Vert \mathbf{M}\herm (t)(\mathbf{I} + \mathbf{M}(t)\widetilde{\mathbf{T}}(t)\mathbf{L}\herm (t))^{-1} \mathbf{M}(t)\Vert_F\leq \Vert \mathbf{M}\herm(t)\Vert_F^2 \Vert (\mathbf{I} + \mathbf{M}(t)\widetilde{\mathbf{T}}(t)\mathbf{M}\herm(t))^{-1} \Vert_F \leq P \sqrt{N_T}$ where the first inequality follows from Fact \ref{fact:1} and the second inequality follows from $\Vert \mathbf{M}(t)\Vert_F = \sqrt{\text{tr}(\mathbf{M}\herm(t) \mathbf{M}(t))} =\sqrt{\text{tr}(\mathbf{Q}^\ast(t))} \leq \sqrt{P}$ and Fact \ref{fact:identiy-plus-sdp-inverse-frobenius-bound}.

Combining the above steps yields $\phi(\mathbf{Q}(t), \mathbf{H}(t)) \geq \phi(\mathbf{Q}^\ast(t), \mathbf{H}(t)) - 2VP\sqrt{N_T} (2B+\delta) \delta$.
\end{IEEEproof}
\begin{Lem}
At each time $t\in\{0,1,2,\ldots\}$, we have 
\begin{align}
-\Delta(t) \geq -Z(t) \big(\text{tr}(\mathbf{Q}(t)) - \bar{P}\big) - \frac{1}{2}\max\{\bar{P}^2, (P-\bar{P})^2\}. \label{eq:drift-bound}
\end{align}
\end{Lem}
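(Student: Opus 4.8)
The plan is to unwind the definition $\Delta(t) = \frac{1}{2}Z^2(t+1) - \frac{1}{2}Z^2(t)$, bound it above by a per-slot quantity, and then negate. The first step is to dispose of the nonlinearity from the $\max[0,\cdot]$ in the queue update. Writing $a(t) = \text{tr}(\mathbf{Q}(t)) - \bar{P}$, the update reads $Z(t+1) = \max[0,\, Z(t) + a(t)]$, and I would invoke the elementary squaring inequality $Z^2(t+1) \leq (Z(t) + a(t))^2$. This follows by a case split: when $Z(t) + a(t) \geq 0$ the two sides are equal, and when $Z(t) + a(t) < 0$ the left side vanishes while the right side is nonnegative. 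This removes the projection cleanly.

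Expanding the square then gives
\[
\Delta(t) \leq Z(t)\, a(t) + \frac{1}{2} a^2(t) = Z(t)\big(\text{tr}(\mathbf{Q}(t)) - \bar{P}\big) + \frac{1}{2}\big(\text{tr}(\mathbf{Q}(t)) - \bar{P}\big)^2 .
\]
The remaining step is to bound the quadratic term by a constant independent of the control action. Since $\mathbf{Q}(t) \in \mathcal{Q}$ is positive semidefinite with $\text{tr}(\mathbf{Q}(t)) \leq P$, the trace lies in $[0, P]$, so $a(t) \in [-\bar{P},\, P - \bar{P}]$. For any real number in this interval the square is at most $\max\{\bar{P}^2, (P-\bar{P})^2\}$, hence $\frac{1}{2} a^2(t) \leq \frac{1}{2}\max\{\bar{P}^2, (P-\bar{P})^2\}$.

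Combining the two bounds yields $\Delta(t) \leq Z(t)\big(\text{tr}(\mathbf{Q}(t)) - \bar{P}\big) + \frac{1}{2}\max\{\bar{P}^2, (P-\bar{P})^2\}$, and multiplying through by $-1$ gives the claimed inequality. I do not expect any serious obstacle here, as this is the standard quadratic drift bound underlying the drift-plus-penalty framework. The only point demanding slight care is the squaring step, since a naive application that ignores the projection would overstate $Z(t+1)$; the case split above handles it. The constant $\max\{\bar{P}^2, (P-\bar{P})^2\}$ is precisely what the short-term constraint $\text{tr}(\mathbf{Q}(t)) \leq P$ together with the nonnegativity of the trace of a positive semidefinite matrix produce.
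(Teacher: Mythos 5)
Your proposal is correct and follows essentially the same route as the paper's own proof: square the queue update after dropping the $\max[0,\cdot]$ via $Z^2(t+1)\leq (Z(t)+\text{tr}(\mathbf{Q}(t))-\bar{P})^2$, expand, and bound the quadratic term using $0\leq \text{tr}(\mathbf{Q}(t))\leq P$. No issues.
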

\begin{IEEEproof}
Fix $t\in\{0,1,2,\ldots\}$. Note that $Z(t+1) = \max\{0, Z(t) + \text{tr}(\mathbf{Q}(t)) - \bar{P}\}$ implies that 
\begin{align*}
Z^2(t+1) \leq &(Z(t) + \text{tr}(\mathbf{Q}(t)) - \bar{P})^2 \\
\leq& Z^2(t) + 2Z(t)\big(\text{tr}(\mathbf{Q}(t)) - \bar{P}\big) + (\text{tr}(\mathbf{Q}(t)) - \bar{P})^2\\
\overset{(a)}{\leq} &Z^2(t) + 2Z(t)\big(\text{tr}(\mathbf{Q}(t) - \bar{P}\big) + \max\{\bar{P}^2, (P-\bar{P})^2\}
\end{align*}
where (a) follows from $\vert \text{tr}(\mathbf{Q}(t)) - \bar{P}\vert \leq  \max\{\bar{P}, P-\bar{P}\}$, which further follows from $0\leq \text{tr}(\mathbf{Q}(t)) \leq P$. Rearranging terms and dividing by factor $2$ yields the desired result.
\end{IEEEproof}

Now, we are ready to present the main proof of Lemma \ref{lm:with-csit-dpp-bound}. Adding $V\log\det(\mathbf{I} + \mathbf{H}(t)\mathbf{Q}(t)\mathbf{H}\herm(t))$ to both sides in \eqref{eq:drift-bound} yields
\begin{align*}
&-\Delta(t) + V\log\det(\mathbf{I} + \mathbf{H}(t)\mathbf{Q}(t)\mathbf{H}\herm(t)) \\
\geq& V\log\det(\mathbf{I} + \mathbf{H}(t)\mathbf{Q}(t)\mathbf{H}\herm(t)) -Z(t) \big(\text{tr}(\mathbf{Q}(t)) - \bar{P}\big) \\ &- \frac{1}{2}\max\{\bar{P}^2, (P-\bar{P})^2\}\\
\overset{(a)}{\geq} &V\log\det(\mathbf{I} + \mathbf{H}(t)\mathbf{Q}^\ast(t)\mathbf{H}\herm(t)) - Z(t) \text{tr}(\mathbf{Q}^\ast(t) - \bar{P}) \\&- \frac{1}{2}\max\{\bar{P}^2, (P-\bar{P})^2\} - 2VP\sqrt{N_T} (2B+\delta) \delta
\end{align*}
where (a) follows from Lemma \ref{lm:with-csit-relate-omega-only}.
 
Taking expectations on both sides yields
\begin{align*}
&- \mathbb{E}[\Delta(t)] +V \mathbb{E}[R(t)] \\
\geq& V R^{\text{opt}} - \mathbb{E}[Z(t) (\text{tr}(\mathbf{Q}^\ast(t)) - \bar{P})] - \frac{1}{2}\max\{\bar{P}^2, (P-\bar{P})^2\} \\ &- 2VP\sqrt{N_T} (2B+\delta)\delta\\
\overset{(a)}{=}&  V R^{\text{opt}} - \mathbb{E} [\mathbb{E}[Z(t) (\text{tr}(\mathbf{Q}^\ast(t)) - \bar{P}) |Z(t)]] \\&- \frac{1}{2}\max\{\bar{P}^2, (P-\bar{P})^2\} - 2VP\sqrt{N_T} (2B+\delta)\delta\\
\overset{(b)}{\geq}& V R^{\text{opt}} - \frac{1}{2}\max\{\bar{P}^2, (P-\bar{P})^2\} - 2VP\sqrt{N_T} (2B+\delta)\delta
\end{align*}
where (a) follows by noting that $\mathbb{E}[Z(t) (\text{tr}(\mathbf{Q}^\ast(t)) - \bar{P}) |Z(t)]$ is the expectation conditional on $Z(t)$ and the iterated law of expectations; and (b) follows from $\mathbb{E}[Z(t) \text{tr}(\mathbf{Q}^\ast(t) - \bar{P}) |Z(t)] = Z(t)\mathbb{E}[\text{tr}(\mathbf{Q}^\ast(t)) - \bar{P}] \leq 0$, where the identity follows because $\mathbf{Q}^\ast(t)$ only depends on $\mathbf{H}(t)$ and is independent of $Z(t)$, and the inequality follows because $Z(t)\geq 0$ and $\mathbb{E}[\text{tr}(\mathbf{Q}^\ast(t)) - \bar{P}]\leq 0,\forall t$.

Rearranging terms and dividing both sides by $V$ yields $-\frac{1}{V}\mathbb{E}[\Delta(t)] +\mathbb{E}[R(t)] \geq R^{\text{opt}} - \frac{\max\{\bar{P}^2, (P-\bar{P})^2\}}{2V} - 2P\sqrt{N_T} (2B+\delta)\delta$.

\section{Proof of Lemma \ref{lm:no-csit-transmit-covaraince-update}} \label{app:pf-lm-no-csit-transmit-covaraince-update}
A problem similar to problem \eqref{eq:no-csit-primal-opt-obj}-\eqref{eq:no-csit-primal-opt-sdp} (with inequality constraint \eqref{eq:no-csit-primal-opt-trace} replaced by the equality constraint $\text{tr}(\mathbf{Q}) = \bar{P}$) is considered in Lemma 14 in \cite{Palomar09TSP}.  The problem in \cite{Palomar09TSP} is different from \eqref{eq:no-csit-primal-opt-obj}-\eqref{eq:no-csit-primal-opt-sdp} since inequality constraint \eqref{eq:no-csit-primal-opt-trace} is not necessarily tight at the optimal solution to \eqref{eq:no-csit-primal-opt-obj}-\eqref{eq:no-csit-primal-opt-sdp}. However, the proof flow of the current lemma is similar to \cite{Palomar09TSP}. We shall first reduce problem \eqref{eq:no-csit-primal-opt-obj}-\eqref{eq:no-csit-primal-opt-sdp} to a simpler convex program with a real vector variable by characterizing the structure of its optimal solution. After that, we can derive an (almost) closed-form solution to the simpler convex program by studying its KKT conditions. The details of the proof are as follows:

\begin{Claim}
If $\widehat{\boldsymbol{\Theta}}$ is an optimal solution to the following convex program:
\begin{align}
\min \quad & \frac{1}{2} \Vert \boldsymbol{\Theta} - \boldsymbol{\Sigma} \Vert^2_F \label{eq:app-no-csit-primal-opt1-obj}\\
\text{s.t.} \quad  & \text{tr}(\boldsymbol{\Theta}) \leq \bar{P} \label{eq:app-no-csit-primal-opt1-trace}\\
			 & \boldsymbol{\Theta} \in \mathbb{S}^{N_{T}}_+ \label{eq:app-no-csit-primal-opt1-sdp}
\end{align}
then $\widehat{\mathbf{Q}} = \mathbf{U}\herm\widehat{\boldsymbol{\Theta}}\mathbf{U}$ is an optimal solution to problem \eqref{eq:no-csit-primal-opt-obj}-\eqref{eq:no-csit-primal-opt-sdp}.
\end{Claim}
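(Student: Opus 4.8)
The plan is to prove the Claim purely by exploiting the invariance of both the objective and the constraints under the unitary congruence $\mathbf{Q}\mapsto \mathbf{U}\mathbf{Q}\mathbf{U}\herm$, where $\mathbf{U}$ is the unitary factor in $\mathbf{X} = \mathbf{U}\herm\boldsymbol{\Sigma}\mathbf{U}$. First I would introduce the change of variables $\boldsymbol{\Theta} = \mathbf{U}\mathbf{Q}\mathbf{U}\herm$, equivalently $\mathbf{Q} = \mathbf{U}\herm\boldsymbol{\Theta}\mathbf{U}$, and observe that since $\mathbf{U}$ is unitary we have $\mathbf{U}\herm=\mathbf{U}^{-1}$, so this congruence is also a similarity transformation and therefore preserves eigenvalues. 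Consequently $\mathbf{Q}$ and $\boldsymbol{\Theta}$ have identical eigenvalues, so $\mathbf{Q}\in\mathbb{S}^{N_T}_+$ if and only if $\boldsymbol{\Theta}\in\mathbb{S}^{N_T}_+$, and the map is a bijection of $\mathbb{S}^{N_T}_+$ onto itself with inverse $\boldsymbol{\Theta}\mapsto\mathbf{U}\herm\boldsymbol{\Theta}\mathbf{U}$.

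Next I would verify the two invariances that make the two problems coincide. For the linear constraint, the cyclic property of the trace together with $\mathbf{U}\mathbf{U}\herm=\mathbf{I}$ gives $\text{tr}(\mathbf{Q}) = \text{tr}(\mathbf{U}\herm\boldsymbol{\Theta}\mathbf{U}) = \text{tr}(\boldsymbol{\Theta})$, so $\text{tr}(\mathbf{Q})\le\bar P$ is equivalent to $\text{tr}(\boldsymbol{\Theta})\le\bar P$. For the objective, substituting $\mathbf{X}=\mathbf{U}\herm\boldsymbol{\Sigma}\mathbf{U}$ yields $\mathbf{Q}-\mathbf{X} = \mathbf{U}\herm(\boldsymbol{\Theta}-\boldsymbol{\Sigma})\mathbf{U}$, and using the trace form of the Frobenius norm from \eqref{eq:frobenius-def} one finds $\Vert\mathbf{Q}-\mathbf{X}\Vert_F^2 = \text{tr}\big(\mathbf{U}\herm(\boldsymbol{\Theta}-\boldsymbol{\Sigma})\herm\mathbf{U}\mathbf{U}\herm(\boldsymbol{\Theta}-\boldsymbol{\Sigma})\mathbf{U}\big) = \text{tr}\big((\boldsymbol{\Theta}-\boldsymbol{\Sigma})\herm(\boldsymbol{\Theta}-\boldsymbol{\Sigma})\big) = \Vert\boldsymbol{\Theta}-\boldsymbol{\Sigma}\Vert_F^2$, again by the cyclic property and $\mathbf{U}\mathbf{U}\herm=\mathbf{I}$.

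Combining these facts shows that the feasible sets of \eqref{eq:no-csit-primal-opt-obj}-\eqref{eq:no-csit-primal-opt-sdp} and \eqref{eq:app-no-csit-primal-opt1-obj}-\eqref{eq:app-no-csit-primal-opt1-sdp} are in one-to-one correspondence under the congruence, with identical objective values at corresponding points. Hence the two problems share the same optimal value, and optimizers map to optimizers: if $\widehat{\boldsymbol{\Theta}}$ is optimal for \eqref{eq:app-no-csit-primal-opt1-obj}-\eqref{eq:app-no-csit-primal-opt1-sdp}, then $\widehat{\mathbf{Q}}=\mathbf{U}\herm\widehat{\boldsymbol{\Theta}}\mathbf{U}$ is feasible for \eqref{eq:no-csit-primal-opt-obj}-\eqref{eq:no-csit-primal-opt-sdp} with equal objective, and any strictly better feasible $\mathbf{Q}$ would map under $\boldsymbol{\Theta}=\mathbf{U}\mathbf{Q}\mathbf{U}\herm$ to a feasible point beating $\widehat{\boldsymbol{\Theta}}$, a contradiction. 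This is exactly the assertion of the Claim. I do not anticipate a genuine obstacle; the only point requiring care is that both invariances rest on $\mathbf{U}$ being \emph{unitary} rather than merely invertible — the trace identity and the Frobenius identity both use $\mathbf{U}\mathbf{U}\herm=\mathbf{I}$ to cancel the inner factors — so I would flag explicitly where unitarity is invoked.
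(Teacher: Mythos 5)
Your proposal is correct and follows essentially the same route as the paper's own proof: both rest on the unitary invariance of the Frobenius norm, the preservation of trace and positive semidefiniteness under the congruence $\mathbf{Q}\mapsto\mathbf{U}\mathbf{Q}\mathbf{U}\herm$, and a final contradiction step mapping a hypothetically better feasible point of one problem to a better feasible point of the other. Your framing as an explicit bijection between the two feasible sets is a slightly cleaner packaging of the same argument, but the mathematical content is identical.
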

\begin{IEEEproof}
This claim can be proven by contradiction. Let $\widehat{\boldsymbol{\Theta}}$ be an optimal solution to convex program \eqref{eq:app-no-csit-primal-opt1-obj}-\eqref{eq:app-no-csit-primal-opt1-sdp} and define $\widehat{\mathbf{Q}} = \mathbf{U}\herm\widehat{\boldsymbol{\Theta}} \mathbf{U}$. Assume that there exists $\widetilde{\mathbf{Q}}\in \mathbb{S}^{N_{T}}_{+}$ such that $\widetilde{\mathbf{Q}}\neq \widehat{\mathbf{Q}}$ and is a solution to problem \eqref{eq:no-csit-primal-opt-obj}-\eqref{eq:no-csit-primal-opt-sdp} that is strictly better than $\widehat{\mathbf{Q}}$. Consider $\widetilde{\boldsymbol{\Theta}} = \mathbf{U} \widetilde{\mathbf{Q}} \mathbf{U}\herm$ and reach a contradiction by showing $\widetilde{\boldsymbol{\Theta}}$ is strictly better than $\widehat{\boldsymbol{\Theta}}$ as follows:

Note that $\text{tr}(\widetilde{\boldsymbol{\Theta}}) = \text{tr}( \mathbf{U} \widetilde{\mathbf{Q}} \mathbf{U}\herm) = \text{tr}(\widetilde{\mathbf{Q}}) \leq \bar{P}$, where the last inequality follows from the assumption that $\widetilde{\mathbf{Q}}$ is solution to problem  \eqref{eq:no-csit-primal-opt-obj}-\eqref{eq:no-csit-primal-opt-sdp}. Also note that $\widetilde{\boldsymbol{\Theta}} \in \mathbb{S}^{N_{T}}_{+}$ since $\widetilde{\mathbf{Q}}\in \mathbb{S}^{N_{T}}_{+}$. Thus, $\widetilde{\boldsymbol{\Theta}}$ is feasible to problem  \eqref{eq:app-no-csit-primal-opt1-obj}-\eqref{eq:app-no-csit-primal-opt1-sdp}.

Note that  $\Vert \widetilde{\boldsymbol{\Theta}} - \boldsymbol{\Sigma}\Vert_F \overset{(a)}{=} \Vert \mathbf{U}\herm \widetilde{\boldsymbol{\Theta}} \mathbf{U} - \mathbf{U}\herm  \boldsymbol{\Sigma} \mathbf{U} \Vert_F 
\overset{(b)}{=} \Vert\widetilde{\mathbf{Q}}  - \mathbf{X}\Vert_F 
\overset{(c)}{<} \Vert\widehat{\mathbf{Q}}  - \mathbf{X}\Vert_F 
\overset{(d)}{=}  \Vert \mathbf{U}\widehat{\mathbf{Q}}\mathbf{U}\herm  - \mathbf{U}\mathbf{X} \mathbf{U}\herm\Vert_F
\overset{(e)}{=}  \Vert\widehat{\boldsymbol{\Theta}} - \boldsymbol{\Sigma} \Vert_{F}$,  where (a) and (d) follow from the fact that Frobenius norm is unitary invariant\footnote{That is $\Vert \mathbf{A} \mathbf{U}\Vert_{F} = \Vert \mathbf{A}\Vert_{F}$ for all $\mathbf{A}\in \mathbb{C}^{n\times n}$ and all unitary matrix $\mathbf{U}$. }; (b) follows from the fact that $\widetilde{\boldsymbol{\Theta}} = \mathbf{U} \widetilde{\mathbf{Q}} \mathbf{U}\herm$ and $\mathbf{X} = \mathbf{U}\herm\boldsymbol{\Sigma}\mathbf{U}$; (c) follows from the fact that $\widetilde{\mathbf{Q}}$ is strictly better than $\widehat{\mathbf{Q}}$; and (e) follows from the fact that $\widehat{\mathbf{Q}} = \mathbf{U}\herm\widehat{\boldsymbol{\Theta}}\mathbf{U}$ and $\mathbf{X} = \mathbf{U}\herm\boldsymbol{\Sigma}\mathbf{U}$. Thus, $\widetilde{\boldsymbol{\Theta}}$ is strictly better than $\widehat{\boldsymbol{\Theta}}$. A contradiction!
\end{IEEEproof}

\begin{Claim}
The optimal solution to problem \eqref{eq:app-no-csit-primal-opt1-obj}-\eqref{eq:app-no-csit-primal-opt1-sdp} must be a diagonal matrix. 
\end{Claim}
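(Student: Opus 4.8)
The plan is to exploit the diagonal structure of $\boldsymbol{\Sigma}$, which makes the Frobenius objective decouple into a diagonal part and an off-diagonal part, while both constraints --- the trace bound and positive semidefiniteness --- are either unaffected by, or only helped by, discarding the off-diagonal entries of $\boldsymbol{\Theta}$. First I would expand the objective using $\Sigma_{ij}=0$ for $i\neq j$ and $\Sigma_{ii}=\sigma_i$:
\[
\Vert \boldsymbol{\Theta} - \boldsymbol{\Sigma}\Vert_F^2 = \sum_{i=1}^{N_T} \vert \Theta_{ii} - \sigma_i\vert^2 \;+\; \sum_{i\neq j} \vert \Theta_{ij}\vert^2 .
\]
This exhibits the off-diagonal entries as contributing only the nonnegative penalty $\sum_{i\neq j}\vert\Theta_{ij}\vert^2$, so they can never reduce the objective.

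Next I would take an arbitrary feasible $\boldsymbol{\Theta}$ and form its diagonal truncation $\boldsymbol{\Theta}^{\mathrm{d}}$, the diagonal matrix retaining only the diagonal entries of $\boldsymbol{\Theta}$. Then $\text{tr}(\boldsymbol{\Theta}^{\mathrm{d}}) = \text{tr}(\boldsymbol{\Theta}) \leq \bar{P}$, so the trace constraint is preserved. The crux is that $\boldsymbol{\Theta}^{\mathrm{d}}$ remains positive semidefinite: each diagonal entry satisfies $\Theta_{ii} = \mathbf{e}_i\herm \boldsymbol{\Theta}\, \mathbf{e}_i \geq 0$ because $\boldsymbol{\Theta}\in\mathbb{S}^{N_T}_+$, and a diagonal matrix with nonnegative entries is positive semidefinite. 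Hence $\boldsymbol{\Theta}^{\mathrm{d}}$ is feasible, and by the displayed expansion $\Vert \boldsymbol{\Theta}^{\mathrm{d}} - \boldsymbol{\Sigma}\Vert_F^2 = \sum_i \vert \Theta_{ii} - \sigma_i\vert^2 \leq \Vert \boldsymbol{\Theta} - \boldsymbol{\Sigma}\Vert_F^2$, with equality if and only if every off-diagonal entry of $\boldsymbol{\Theta}$ vanishes.

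Finally I would conclude by contradiction: if an optimal $\boldsymbol{\Theta}^\ast$ had a nonzero off-diagonal entry, its truncation would be feasible with strictly smaller objective, contradicting optimality; therefore every optimizer must be diagonal. (Equivalently, since the objective is strictly convex over a convex feasible set the minimizer is unique, and the truncation argument produces a diagonal one.) The only step requiring genuine care is the positive semidefiniteness of $\boldsymbol{\Theta}^{\mathrm{d}}$, which rests on the standard fact that the diagonal of a positive semidefinite matrix is nonnegative; the rest follows directly from the diagonal form of $\boldsymbol{\Sigma}$.
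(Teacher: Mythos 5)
Your proposal is correct and follows essentially the same route as the paper's proof: truncate to the diagonal, observe that the trace is preserved and positive semidefiniteness survives because the diagonal of a PSD matrix is nonnegative, and use the diagonality of $\boldsymbol{\Sigma}$ to see that dropping off-diagonal entries strictly decreases the Frobenius objective. Your explicit decomposition of $\Vert \boldsymbol{\Theta}-\boldsymbol{\Sigma}\Vert_F^2$ into diagonal and off-diagonal parts just makes the strict inequality the paper asserts more transparent.
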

\begin{IEEEproof}
This claim can be proven by contradiction. Assume that problem \eqref{eq:app-no-csit-primal-opt1-obj}-\eqref{eq:app-no-csit-primal-opt1-sdp} has an optimal solution $\widetilde{\boldsymbol{\Theta}}$ that is not diagonal.  Since $\widetilde{\boldsymbol{\Theta}}$ is positive semidefinite, all the diagonal entries of $\widetilde{\boldsymbol{\Theta}}$ are non-negative. Define $\widehat{\boldsymbol{\Theta}}$ as a diagonal matrix whose the $i$-th diagonal entry is equal to the $i$-th diagonal entry of $\widetilde{\boldsymbol{\Theta}}$ for all $i\in\{1,2,\ldots,N_{T}\}$. Note that $\text{tr}(\widehat{\boldsymbol{\Theta}}) = \text{tr}(\widetilde{\boldsymbol{\Theta}}) \leq \bar{P}$ and $\widehat{\boldsymbol{\Theta}} \in \mathbb{S}^{n}_+$. Thus, $\widehat{\boldsymbol{\Theta}}$ is feasible to problem \eqref{eq:app-no-csit-primal-opt1-obj}-\eqref{eq:app-no-csit-primal-opt1-sdp}. Note that $\Vert \widehat{\boldsymbol{\Theta}} - \boldsymbol{\Sigma}\Vert_F < \Vert \widetilde{\boldsymbol{\Theta}} - \boldsymbol{\Sigma}\Vert_F $ since $\boldsymbol{\Sigma}$ is diagonal. Thus,  $\widehat{\boldsymbol{\Theta}}$ is a solution strictly better than $\widetilde{\boldsymbol{\Theta}}$. A contradiction!  So the optimal solution to problem \eqref{eq:app-no-csit-primal-opt1-obj}-\eqref{eq:app-no-csit-primal-opt1-sdp} must be a diagonal matrix.
\end{IEEEproof}

By the above two claims, it suffices to assume that the optimal solution to problem \eqref{eq:no-csit-primal-opt-obj}-\eqref{eq:no-csit-primal-opt-sdp} has the structure $\hat{\mathbf{Q}} = \mathbf{U}\herm\boldsymbol{\Theta}\mathbf{U}$, where $\mathbf{\Theta}$ is a diagonal with non-negative entries $\theta_{1}, \ldots, \theta_{N_{T}}$. To solve problem \eqref{eq:no-csit-primal-opt-obj}-\eqref{eq:no-csit-primal-opt-sdp}, it suffices to consider the following convex program.
\begin{align}
\min \quad & \frac{1}{2} \sum_{i=1}^{N_{T}} (\theta_i - \sigma_i)^2  \label{eq:app-no-csit-primal-opt2-obj} \\
\text{s.t.} \quad  & \sum_{i=1}^{N_{T}} \theta_i \leq \bar{P} \label{eq:app-no-csit-primal-opt2-trace} \\
			 & \theta_i \geq 0, \forall i\in\{1,2,\ldots,N_{T}\} \label{eq:app-no-csit-primal-opt2-sdp}
\end{align}

Note that problem \eqref{eq:app-no-csit-primal-opt2-obj}-\eqref{eq:app-no-csit-primal-opt2-sdp} satisfies Slater's condition. So the optimal solution to problem \eqref{eq:app-no-csit-primal-opt2-obj}-\eqref{eq:app-no-csit-primal-opt2-sdp} is characterized by KKT conditions \cite{book_ConvexOptimization}.
Introducing Lagrange multipliers $\mu\in \mathbb{R}_{+}$ for inequality constraint $\sum_{i=1}^{N_{T}} \theta_i \leq \bar{P}$ and $\boldsymbol{\nu} = [\nu_1, \ldots, \nu_{N_{T}}]^T\in \mathbb{R}_+^{N_T}$ for inequality constraints $\theta_i\geq 0, i\in\{1,2,\ldots,n\}$. Let $\boldsymbol{\theta}^\ast = [\theta_{1}^{\ast}, \ldots, \theta_{N_{T}}^{\ast}]^{T}$ and $(\mu^\ast, \boldsymbol{\nu}^\ast)$ be any primal and dual pair with the zero duality gap. By KKT conditions, we have $\theta_i^\ast - \sigma_i + \mu^\ast - \nu_i^\ast = 0,\forall i\in\{1,2,\ldots, N_{T}\}; 
\sum_{i=1}^{N_{T}} \theta_{i}^\ast \leq  \bar{P};
\mu^{\ast} \geq 0;
\mu^{\ast} \big[\sum_{i=1}^{N_{T}} \theta_{i}^\ast -  \bar{P}\big] = 0;
\theta_i^\ast \geq 0, \forall i\in\{1,2,\ldots, N_{T}\};
\nu_i^\ast \geq 0,  \forall i\in\{1,2,\ldots, N_{T}\};
\nu_i^\ast \theta_i^\ast =0,  \forall i\in\{1,2,\ldots, N_{T}\}$.

Eliminating $\nu_i^\ast, \forall i\in\{1,2,\ldots, N_T\}$ in all equations yields
$\mu^\ast \geq \sigma_i - \theta_i^\ast, i\in\{1,2,\ldots, N_{T}\};
\sum_{i=1}^{N_{T}} \theta_{i}^\ast \leq  \bar{P};
\mu^{\ast} \geq 0;
\mu^{\ast} \big[\sum_{i=1}^{N_{T}} \theta_{i}^\ast -  \bar{P}\big] = 0;
\theta_i^\ast \geq 0, \forall i\in\{1,2,\ldots, N_{T}\};
(\theta_i^\ast - \sigma_i + \mu^\ast) \theta_i^\ast =0, \forall i\in\{1,2,\ldots, N_{T}\}$.

For all $ i\in\{1,2,\ldots, N_{T}\}$, we consider $\mu^\ast < \sigma_i$ and $\mu^\ast \geq \sigma_i$ separately:
\begin{enumerate}
\item If $\mu^\ast < \sigma_i$ , then $\mu^\ast \geq \sigma_i - \theta_i^\ast$ holds only when $\theta_i^\ast >0$, which by $(\theta_i^\ast - \sigma_i + \mu^\ast) \theta_i^\ast =0$ implies that $\theta_i^\ast = \sigma_i - \mu^\ast$.
\item If $\mu^\ast \geq \sigma_i$, then $\theta_i^\ast >0$ is impossible, because $\theta_i^\ast >0$ implies that $\theta_i^\ast -\sigma_i +\mu^\ast >0$, which together with $\theta_i^\ast >0$ contradicts the slackness condition $(\theta_i^\ast - \sigma_i + \mu^\ast) \theta_i^\ast =0$. Thus, if $\mu^\ast \geq \sigma_i$, we must have $\theta_i^\ast = 0$.
\end{enumerate}
Summarizing both cases, we have $\theta_i^\ast = \max[0,\sigma_i - \mu^\ast], \forall i\in\{1,2,\ldots,N_{T}\}$, where $\mu^{\ast}$ is chosen such that $\sum_{i=1}^{N_{T}} \theta_{i}^\ast \leq  \bar{P}$, $\mu^{\ast} \geq 0$ and $\mu^{\ast} \big[\sum_{i=1}^{N_{T}} \theta_{i}^\ast -  \bar{P}\big] = 0$.

To find such $\mu^{\ast}$, we first check if $\mu^{\ast} =0$. If $\mu^{\ast} =0$ is true, the slackness condition $\mu^{\ast} \big[\sum_{i=1}^{N_{T}} \theta_{i}^\ast -  \bar{P}\big]$ is guaranteed to hold and we need to further require $\sum_{i=1}^{N_{T}} \theta_{i}^{\ast} = \sum_{i=1}^{N_{T}} \max[0,\sigma_{i}] \leq \bar{P}$. Thus $\mu^{\ast} =0$ if and only if $\sum_{i=1}^{n} \max [0,\sigma_{i}] \leq \bar{P}$. Thus, Algorithm \ref{alg:no-csit-primal-opt-exact-alg} check if $\sum_{i=1}^{N_{T}} \max[0,\sigma_{i}] \leq \bar{P}$ holds at the first step and if this is true, then we conclude $\mu^{\ast} = 0$ and we are done!

Otherwise, we know $\mu^{\ast} > 0$. By the slackness condition $\mu^{\ast} \big[\sum_{i=1}^{N_{T}} \theta_{i}^\ast -  \bar{P}\big]=0$, we must have $\sum_{i=1}^{N_{T}} \theta_{i}^\ast = \sum_{i=1}^{N_{T}} \max[0,\sigma_{i} - \mu^{\ast}] =\bar{P}$. To find $\mu^{\ast}>0$ such that  $\sum_{i=1}^{N_T} \max[0,\sigma_{i} - \mu^{\ast}] =\bar{P}$, we could apply a bisection search by noting that all $\theta_{i}^{\ast}$ are decreasing with respect to $\mu^{\ast}$.

Another algorithm of finding $\mu^{\ast}$ is inspired by the observation that if $\sigma_{j} \geq \sigma_{k}, \forall j,k\in\{1,2,\ldots, N_T\}$, then $\theta_{j}^{\ast} \geq \theta_{k}^{\ast}$. Thus, we first sort all $\sigma_{i}$ in a decreasing order, say $\pi$ is the permutation such that $\sigma_{\pi(1)} \geq \sigma_{\pi(2)} \geq \cdots \geq \sigma_{\pi(N_T)}$; and then sequentially check if $i\in\{1,2,\ldots,N_T\}$ is the index such that $\sigma_{\pi(i)} - \mu^{\ast} \geq 0$ and $\sigma_{\pi(i+1)} - \mu^{\ast} < 0$. To check this, we first assume $i$ is indeed such an index and solve the equation $\sum_{j=1}^{i} \big[ \sigma_{\pi(j)} - \mu^{\ast} \big] = \bar{P}$ to obtain $\mu^{\ast}$; (Note that in Algorithm \ref{alg:no-csit-primal-opt-exact-alg}, to avoid recalculating the partial sum $\sum_{j=1}^{i} \sigma_{\pi(j)}$ for each $i$, we introduce the parameter $S_i =\sum_{j=1}^{i} \sigma_{\pi(j)}$ and update $S_{i}$ incrementally. By doing this, the complexity of each iteration in the loop is only $O(1)$.) then verify the assumption by checking if $\mu^{\ast}\geq 0$, $\sigma_{\pi(i)} - \mu^{\ast} \geq 0$ and $\sigma_{\pi(i+1)} - \mu^{\ast} \leq 0$. The algorithm is described in Algorithm \ref{alg:no-csit-primal-opt-exact-alg} and has complexity $O(N_{T}\log(N_{T}))$. The overall complexity is dominated by the step of sorting all $\sigma_{i}$.

\section{Proof of Lemma \ref{lm:gradient-approx-bound}} \label{app:pf-lm-gradient-approx-bound}

\subsection{Proof of part 1:}
The boundedness of $\mathbf{D}(t-1)$ can be shown as follows. $\Vert \mathbf{D}(t-1)\Vert_F 
=\Vert  \mathbf{H}\herm(t-1) (\mathbf{I}_{N_{R}} + \mathbf{H}(t-1)\mathbf{Q}(t-1) \mathbf{H}\herm(t-1))^{-1}  \mathbf{H}(t-1)\Vert_F
\overset{(a)}{\leq}\Vert \mathbf{H}(t-1)\Vert_F^2\cdot \Vert(\mathbf{I}_{N_{R}} + \mathbf{H}(t-1)\mathbf{Q}(t-1) \mathbf{H}\herm(t-1))^{-1} \Vert_F   
\overset{(b)}{\leq} \sqrt{N_{R}} B^2 $,  where (a) follows from Fact \ref{fact:1} and (b) follows from $\Vert \mathbf{H}(t-1)\Vert_{F}\leq B$ and Fact \ref{fact:identiy-plus-sdp-inverse-frobenius-bound}.

\subsection{Proof of part 2:}

To simplify the notation, this part uses $\mathbf{H}$, $\widetilde{\mathbf{H}}$ and $\mathbf{Q}$ to represent $\mathbf{H}(t-1)$, $\widetilde{\mathbf{H}}(t-1)$ and $\mathbf{Q}(t-1)$, respectively.  
 
Note that 
\begin{align}
&\Vert  \mathbf{D}(t-1) - \widetilde{\mathbf{D}}(t-1) \Vert_{F} \nonumber\\
=& \Vert \mathbf{H}\herm \big(\mathbf{I}_{N_{R}} + \mathbf{H}\mathbf{Q} \mathbf{H}\herm\big)^{-1}\mathbf{H} -  \widetilde{\mathbf{H}}\herm \big(\mathbf{I}_{N_{R}} +  \widetilde{\mathbf{H}}\mathbf{Q} \widetilde{\mathbf{H}}\herm\big)^{-1}\widetilde{\mathbf{H}}\Vert_{F} \nonumber\\
\leq &\Vert \mathbf{H}\herm \big(\mathbf{I}_{N_{R}} + \mathbf{H}\mathbf{Q} \mathbf{H}\herm\big)^{-1}\mathbf{H} - \widetilde{\mathbf{H}}\herm \big(\mathbf{I}_{N_{R}} + \mathbf{H}\mathbf{Q} \mathbf{H}\herm\big)^{-1}\mathbf{H} \Vert_{F} \nonumber\\
& +  \Vert\widetilde{\mathbf{H}}\herm \big(\mathbf{I}_{N_{R}} + \mathbf{H}\mathbf{Q} \mathbf{H}\herm\big)^{-1}\mathbf{H} - \widetilde{\mathbf{H}}\herm \big(\mathbf{I}_{N_{R}} + \mathbf{H}\mathbf{Q} \mathbf{H}\herm\big)^{-1} \widetilde{\mathbf{H}}\Vert_{F} \nonumber \\ &+ \Vert \widetilde{\mathbf{H}}\herm \big(\mathbf{I}_{N_{R}} + \mathbf{H}\mathbf{Q} \mathbf{H}\herm\big)^{-1} \widetilde{\mathbf{H}} - \widetilde{\mathbf{H}}\herm \big(\mathbf{I}_{N_{R}} +  \widetilde{\mathbf{H}}\mathbf{Q} \widetilde{\mathbf{H}}\herm\big)^{-1}\widetilde{\mathbf{H}} \Vert_{F} \nonumber \\
\leq&  \Vert\big(\mathbf{I}_{N_{R}} + \mathbf{H}\mathbf{Q} \mathbf{H}\herm\big)^{-1} \Vert_{F} \Vert \mathbf{H}\Vert_{F}  \Vert \mathbf{H} - \widetilde{\mathbf{H}}\Vert_{F} \nonumber \\&+ \Vert\big(\mathbf{I}_{N_{R}} + \mathbf{H}\mathbf{Q} \mathbf{H}\herm\big)^{-1} \Vert_{F}  \Vert \widetilde{\mathbf{H}}\Vert_{F} \cdot \Vert \mathbf{H} - \widetilde{\mathbf{H}}\Vert_{F}  \nonumber\\
 &+ \Vert \widetilde{\mathbf{H}}\Vert_{F}^{2}  \Vert \big(\mathbf{I}_{N_{R}} + \mathbf{H}\mathbf{Q} \mathbf{H}\herm\big)^{-1} - \big(\mathbf{I}_{N_{R}} +  \widetilde{\mathbf{H}}\mathbf{Q} \widetilde{\mathbf{H}}\herm\big)^{-1}\Vert_{F}  \label{eq:pf-gradient-bound-part2-eq1}
\end{align}
where both inequalities follow from Fact \ref{fact:1}.

Since $\Vert \mathbf{H}\Vert_{F} \leq B$ and $\Vert \widetilde{\mathbf{H}} - \mathbf{H}\Vert \leq \delta$, by Fact \ref{fact:1}, we have $\Vert \widetilde{\mathbf{H}} \Vert_{F} \leq B + \delta$. By Fact \ref{fact:identiy-plus-sdp-inverse-frobenius-bound}, we have $\Vert \big(\mathbf{I}_{N_{R}} + \mathbf{H}\mathbf{Q} \mathbf{H}\herm\big)^{-1} \Vert_{F} \leq \sqrt{N_{R}}$. The following lemma from \cite{Palomar09TSP} will be useful to bound $\Vert \big(\mathbf{I}_{N_{R}} + \mathbf{H}\mathbf{Q} \mathbf{H}\herm\big)^{-1} - \big(\mathbf{I}_{N_{R}} +  \widetilde{\mathbf{H}}\mathbf{Q} \widetilde{\mathbf{H}}\herm\big)^{-1}\Vert_{F}$ from above.

\begin{Lem}[Lemma 6 in \cite{Palomar09TSP}] \label{lm:mean-value-theorem-matrix-functions}
Let $\mathbf{F}: \mathcal{D}\subseteq \mathbb{C}^{m\times n} \rightarrow \mathbb{C}^{p\times q}$ be a complex matrix-valued function defined on a convex set $\mathcal{D}$, assumed to be continuous on $\mathcal{D}$ and differentiable on the interior of $\mathcal{D}$, with Jacobian matrix\footnote{The Jacobian matrix is defined as the matrix $\mathbf{D}_{\mathbf{X}} \mathbf{F}(\mathbf{X})$ such that $d\text{vec}(\mathbf{F}(\mathbf{x})) = \mathbf{D}_{\mathbf{X}} \mathbf{F}(\mathbf{X})\,d\text{vec}(\mathbf{X})$. Note that the size of $\mathbf{D}_{\mathbf{X}} \mathbf{F}(\mathbf{X}) $ is $pq \times mn$.} $\mathbf{D}_{\mathbf{X}} \mathbf{F}(\mathbf{X})$. Then, for any given $\mathbf{X}, \mathbf{Y}\in \mathcal{D}$, there exists some $t\in (0,1)$ such that  $\Vert F(\mathbf{Y}) - F(\mathbf{X}) \Vert_{F} \leq \Vert\mathbf{D}_{\mathbf{X}} \mathbf{F}(t\mathbf{Y} + (1-t)\mathbf{X}) \text{vec}(\mathbf{Y}-\mathbf{X})\Vert_{2} \leq \Vert \mathbf{D}_{\mathbf{X}} \mathbf{F}(t\mathbf{Y} + (1-t)\mathbf{X})\Vert_{2,\text{mat}} \Vert \mathbf{Y} - \mathbf{X}\Vert_{F}$, where $\Vert \mathbf{A}\Vert_{2, \text{mat}} $ denotes the spectral norm of $\mathbf{A}$, i.e., the largest singular value of $\mathbf{A}$.
\end{Lem}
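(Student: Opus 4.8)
The plan is to reduce the matrix statement to the classical mean value inequality for vector-valued maps by vectorizing. Writing $\mathbf{x} = \mathrm{vec}(\mathbf{X})$ and $\mathbf{f}(\mathbf{x}) = \mathrm{vec}(\mathbf{F}(\mathbf{X}))$, I would use the two isometry identities $\Vert \mathbf{F}(\mathbf{Y}) - \mathbf{F}(\mathbf{X})\Vert_{F} = \Vert \mathbf{f}(\mathbf{y}) - \mathbf{f}(\mathbf{x})\Vert_{2}$ and $\Vert \mathbf{Y} - \mathbf{X}\Vert_{F} = \Vert \mathbf{y} - \mathbf{x}\Vert_{2}$, together with the fact that the Jacobian $\mathbf{D}_{\mathbf{X}}\mathbf{F}(\mathbf{X})$ is by the footnote's definition exactly the Jacobian of $\mathbf{f}$, i.e. the linear map realizing $d\,\mathrm{vec}(\mathbf{F}) = \mathbf{D}_{\mathbf{X}}\mathbf{F}(\mathbf{X})\,d\,\mathrm{vec}(\mathbf{X})$. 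Convexity of $\mathcal{D}$ guarantees that the segment $\{\mathbf{X} + t(\mathbf{Y}-\mathbf{X}) : t\in[0,1]\}$ stays in $\mathcal{D}$, so the construction below is well defined and $\mathbf{f}$ is continuous on the segment and differentiable on its relative interior.

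The core step is the standard scalar-projection trick that repairs the failure of an exact mean value equality for vector-valued functions. I would set $\mathbf{u} = \mathbf{f}(\mathbf{y}) - \mathbf{f}(\mathbf{x})$ and define the real scalar function $\varphi(t) = \mathrm{Re}\big(\mathbf{u}\herm\, \mathbf{f}(\mathbf{x} + t(\mathbf{y}-\mathbf{x}))\big)$ on $[0,1]$, using the real inner product $\langle \mathbf{a}, \mathbf{b}\rangle = \mathrm{Re}(\mathbf{a}\herm \mathbf{b})$ so that $\varphi$ is genuinely real-valued. Then $\varphi$ is continuous on $[0,1]$ and differentiable on $(0,1)$, so the ordinary one-variable mean value theorem supplies some $t\in(0,1)$ with $\varphi(1)-\varphi(0) = \varphi'(t)$. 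Here $\varphi(1)-\varphi(0) = \langle \mathbf{u}, \mathbf{u}\rangle = \Vert \mathbf{u}\Vert_{2}^{2}$, while the chain rule gives $\varphi'(t) = \big\langle \mathbf{u},\, \mathbf{D}_{\mathbf{X}}\mathbf{F}(\mathbf{X} + t(\mathbf{Y}-\mathbf{X}))\,\mathrm{vec}(\mathbf{Y}-\mathbf{X})\big\rangle$. Applying Cauchy--Schwarz to the right side yields $\Vert \mathbf{u}\Vert_{2}^{2} \leq \Vert \mathbf{u}\Vert_{2}\,\Vert \mathbf{D}_{\mathbf{X}}\mathbf{F}(\mathbf{X} + t(\mathbf{Y}-\mathbf{X}))\,\mathrm{vec}(\mathbf{Y}-\mathbf{X})\Vert_{2}$, and cancelling one factor of $\Vert \mathbf{u}\Vert_{2}$ (the case $\mathbf{u}=\mathbf{0}$ being trivial) gives exactly the first claimed inequality after translating back through the vectorization identities.

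The second inequality is then immediate from the definition of the spectral (operator) norm: for any matrix $\mathbf{A}$ and vector $\mathbf{v}$ one has $\Vert \mathbf{A}\mathbf{v}\Vert_{2} \leq \Vert \mathbf{A}\Vert_{2,\mathrm{mat}}\,\Vert \mathbf{v}\Vert_{2}$. Applying this with $\mathbf{A} = \mathbf{D}_{\mathbf{X}}\mathbf{F}(\mathbf{X} + t(\mathbf{Y}-\mathbf{X}))$ and $\mathbf{v} = \mathrm{vec}(\mathbf{Y}-\mathbf{X})$, and using $\Vert \mathrm{vec}(\mathbf{Y}-\mathbf{X})\Vert_{2} = \Vert \mathbf{Y}-\mathbf{X}\Vert_{F}$ once more, closes the chain of inequalities.

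I expect the main obstacle to be the complex-to-real bookkeeping rather than anything conceptually deep, since the projection argument is really a statement about real-differentiable maps between Euclidean spaces. To apply it cleanly I would fix the identification $\mathbb{C}^{k}\cong\mathbb{R}^{2k}$ and let $\varphi'(t)$ be computed from the genuine real directional derivative along the segment; the chain-rule identity is then unambiguous even though the functions of interest, such as $\mathbf{H}\mapsto(\mathbf{I}+\mathbf{H}\mathbf{Q}\mathbf{H}\herm)^{-1}$, are not holomorphic. The single point needing care is that for a non-holomorphic $\mathbf{F}$ the directional derivative is only real-linear in the increment, so the factor $\Vert \mathbf{D}_{\mathbf{X}}\mathbf{F}\Vert_{2,\mathrm{mat}}$ in the statement must be read consistently with the Jacobian convention of the footnote (equivalently, as the operator norm of the underlying real Jacobian). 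Once that convention is pinned down, both inequalities follow verbatim from the scalar mean value theorem, Cauchy--Schwarz, and the operator-norm bound, and the matrix formulation contributes only notation.
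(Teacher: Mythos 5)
The paper does not actually prove this statement: it is imported verbatim as Lemma~6 of \cite{Palomar09TSP} and used as a black box, so there is no in-paper proof to compare against. Your argument is the standard and correct way to establish it: vectorize so that the Frobenius norm becomes the Euclidean norm and the footnote's Jacobian becomes the ordinary Jacobian of $\mathbf{f}=\mathrm{vec}\circ\mathbf{F}\circ\mathrm{vec}^{-1}$, project onto the fixed direction $\mathbf{u}=\mathbf{f}(\mathbf{y})-\mathbf{f}(\mathbf{x})$ via the real inner product $\mathrm{Re}(\mathbf{u}\herm\,\cdot)$, apply the one-variable mean value theorem and Cauchy--Schwarz, and finish with the definition of the spectral norm. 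Your remark that for non-holomorphic $\mathbf{F}$ the differential is only real-linear in the increment, so that $\mathbf{D}_{\mathbf{X}}\mathbf{F}$ and its spectral norm must be read consistently with the footnote's convention (equivalently, as the underlying real Jacobian), is exactly the right caveat; in the paper's only application, $\mathbf{F}(\mathbf{X})=(\mathbf{I}+\mathbf{X})^{-1}$ is a rational, hence holomorphic, function of the entries and the issue disappears.

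One small wrinkle, inherited from the statement rather than introduced by you: the hypotheses grant differentiability only on the interior of $\mathcal{D}$, so the scalar mean value theorem applies to $\varphi$ on $(0,1)$ only when the open segment from $\mathbf{X}$ to $\mathbf{Y}$ lies in that interior. For endpoints on a flat portion of the boundary of a convex set (e.g., two singular matrices in $\mathbb{S}^{n}_{+}$) this can fail as literally stated. It is harmless in the paper because $(\mathbf{I}+\mathbf{X})^{-1}$ extends smoothly to an open neighborhood of $\mathbb{S}^{n}_{+}$, but it is worth flagging if the lemma is ever needed at this level of generality.
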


Lemma \ref{lm:mean-value-theorem-matrix-functions} is essentially a mean value theorem for complex matrix valued functions.  The next corollary is the complex matrix version of elementary inequality $\vert\frac{1}{1+x} - \frac{1}{1+y}\vert \leq \vert x-y \vert, \forall x,y\geq 0$ and follows directly from Lemma \ref{lm:mean-value-theorem-matrix-functions}.

\begin{Cor}\label{cor:bound-from-mean-value-theoerm}
Consider $\mathbf{F}: \mathbb{S}_{+}^{n} \rightarrow \mathbb{S}_{+}^{n}$ defined via $\mathbf{F}(\mathbf{X}) = (\mathbf{I}_{n} + \mathbf{X})^{-1}$. Then, $\Vert F(\mathbf{Y}) - F(\mathbf{X})\Vert_{F} \leq n \Vert \mathbf{Y} - \mathbf{X}\Vert_{F}, \forall \mathbf{X}, \mathbf{Y} \in \mathbb{S}_{+}^{n}$.
\end{Cor}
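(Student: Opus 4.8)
The plan is to invoke the matrix-valued mean value theorem of Lemma \ref{lm:mean-value-theorem-matrix-functions} and thereby reduce the claim to a bound on the spectral norm of the Jacobian of $\mathbf{F}$. Since $\mathbb{S}_{+}^{n}$ is convex, for any $\mathbf{X}, \mathbf{Y} \in \mathbb{S}_{+}^{n}$ the intermediate point $\mathbf{Z} = t\mathbf{Y} + (1-t)\mathbf{X}$ with $t \in (0,1)$ also lies in $\mathbb{S}_{+}^{n}$, so $\mathbf{I}_{n} + \mathbf{Z}$ is Hermitian positive definite and $\mathbf{F}$ is differentiable there. Lemma \ref{lm:mean-value-theorem-matrix-functions} then gives $\Vert \mathbf{F}(\mathbf{Y}) - \mathbf{F}(\mathbf{X})\Vert_{F} \leq \Vert \mathbf{D}_{\mathbf{X}}\mathbf{F}(\mathbf{Z})\Vert_{2,\text{mat}} \, \Vert \mathbf{Y} - \mathbf{X}\Vert_{F}$, so it remains only to show $\Vert \mathbf{D}_{\mathbf{X}}\mathbf{F}(\mathbf{Z})\Vert_{2,\text{mat}} \leq n$.

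The next step is to compute the Jacobian explicitly. Using the standard differential of the matrix inverse, $d\mathbf{F} = -(\mathbf{I}_{n} + \mathbf{X})^{-1}(d\mathbf{X})(\mathbf{I}_{n} + \mathbf{X})^{-1}$, and vectorizing via the identity $\text{vec}(\mathbf{A}\mathbf{B}\mathbf{C}) = (\mathbf{C}\tran \otimes \mathbf{A})\,\text{vec}(\mathbf{B})$, one obtains $\mathbf{D}_{\mathbf{X}}\mathbf{F}(\mathbf{X}) = -\big((\mathbf{I}_{n} + \mathbf{X})^{-1}\big)\tran \otimes (\mathbf{I}_{n} + \mathbf{X})^{-1}$.

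To finish, I would bound the spectral norm of this Kronecker product by its Frobenius norm (the spectral norm never exceeds the Frobenius norm), use the multiplicativity $\Vert \mathbf{A} \otimes \mathbf{B}\Vert_{F} = \Vert \mathbf{A}\Vert_{F}\Vert \mathbf{B}\Vert_{F}$, and invoke Fact \ref{fact:identiy-plus-sdp-inverse-frobenius-bound} on each factor, noting that $\Vert \big((\mathbf{I}_{n}+\mathbf{Z})^{-1}\big)\tran\Vert_{F} = \Vert (\mathbf{I}_{n}+\mathbf{Z})^{-1}\Vert_{F} \leq \sqrt{n}$ by part (1) of Fact \ref{fact:1}. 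This yields $\Vert \mathbf{D}_{\mathbf{X}}\mathbf{F}(\mathbf{Z})\Vert_{2,\text{mat}} \leq \Vert (\mathbf{I}_{n}+\mathbf{Z})^{-1}\Vert_{F}^{2} \leq n$, and combining with the mean value inequality gives the claim.

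The steps are all routine once the Jacobian is in hand, so the only real subtlety is getting that Jacobian correct: applying the vectorization identity so that the transpose lands on the correct factor, and confirming that the intermediate point $\mathbf{Z}$ remains in the positive semidefinite cone so that Lemma \ref{lm:mean-value-theorem-matrix-functions} applies. I also note that this route, bounding the spectral norm by the Frobenius norm and reusing Fact \ref{fact:identiy-plus-sdp-inverse-frobenius-bound}, is precisely what produces the constant $n$; a sharper argument using $\Vert (\mathbf{I}_{n}+\mathbf{Z})^{-1}\Vert_{2,\text{mat}} \leq 1$ directly would give the tighter Lipschitz constant $1$, but the looser bound $n$ already suffices for Lemma \ref{lm:gradient-approx-bound}.
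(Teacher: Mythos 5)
Your proof is correct and follows essentially the same route as the paper's: the same differential $d(\mathbf{I}+\mathbf{X})^{-1} = -(\mathbf{I}+\mathbf{X})^{-1}(d\mathbf{X})(\mathbf{I}+\mathbf{X})^{-1}$, the same vectorization identity giving the Kronecker-product Jacobian, the same chain of bounds (spectral norm by Frobenius norm, multiplicativity over the Kronecker product, Fact \ref{fact:identiy-plus-sdp-inverse-frobenius-bound}), and the same appeal to Lemma \ref{lm:mean-value-theorem-matrix-functions}. Your closing observation that the sharper Lipschitz constant $1$ is available via $\Vert(\mathbf{I}_n+\mathbf{Z})^{-1}\Vert_{2,\text{mat}}\leq 1$ is accurate but, as you note, unnecessary for the downstream use in Lemma \ref{lm:gradient-approx-bound}.
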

\begin{proof}
By \cite{Hjorungnes07TSP,Palomar09TSP}, $d \mathbf{X}^{-1} = - \mathbf{X}^{-1} (d \mathbf{X}) \mathbf{X}^{-1}$. Thus, $d (\mathbf{I} + \mathbf{X})^{-1} = - (\mathbf{I} + \mathbf{X})^{-1} (d \mathbf{X})  (\mathbf{I} + \mathbf{X})^{-1}$. By identity $\text{vec}(\mathbf{A}\mathbf{B}\mathbf{C}) = (\mathbf{C}^{T}\otimes \mathbf{A}) \text{vec}(\mathbf{B})$, where $\otimes$ denotes the Kronecker product, we have $d \text{vec}(\mathbf{F}(\mathbf{X})) = - \big( ((\mathbf{I} + \mathbf{X})^{-1})^{T} \otimes (\mathbf{I} + \mathbf{X})^{-1}\big)\, d\text{vec}(\mathbf{X})$. Thus, $\mathbf{D}_{\mathbf{X}} \mathbf{F}(\mathbf{X}) = - ((\mathbf{I} + \mathbf{X})^{-1})^{T} \otimes (\mathbf{I} + \mathbf{X})^{-1}$.  Note that for all $\mathbf{X}\in \mathbb{S}^{n}_{+}$, $\Vert - ((\mathbf{I} + \mathbf{X})^{-1})^{T} \otimes (\mathbf{I} + \mathbf{X})^{-1}\Vert_{2,\text{mat}} 
\leq  \Vert ((\mathbf{I} + \mathbf{X})^{-1})^{T} \otimes (\mathbf{I} + \mathbf{X})^{-1}\Vert_{F} 
\overset{(a)}{=}  \Vert ((\mathbf{I} + \mathbf{X})^{-1})^{T}\Vert_{F} \cdot \Vert (\mathbf{I} + \mathbf{X})^{-1} \Vert_{F} 
=\Vert (\mathbf{I} + \mathbf{X})^{-1} \Vert_{F} ^{2}
\overset{(b)}{\leq} n$, where (a) follows from the fact that $\Vert \mathbf{A} \otimes \mathbf{B}\Vert_{F} = \Vert \mathbf{A}\Vert_{F} \cdot \Vert \mathbf{B}\Vert_{F}, \forall \mathbf{A}\in \mathbb{C}^{m\times n}, \mathbf{B}\in \mathbb{C}^{n\times l}$ (see Exercise 28, page 253 in \cite{book_TopicsMatrixAnalysis}); and (b) follows Fact \ref{fact:identiy-plus-sdp-inverse-frobenius-bound}.  Applying Lemma \ref{lm:mean-value-theorem-matrix-functions} yields $\Vert F(\mathbf{Y}) - F(\mathbf{X})\Vert_{F} \leq n \Vert \mathbf{Y} - \mathbf{X}\Vert_{F}, \forall \mathbf{X}, \mathbf{Y} \in \mathbb{S}_{+}^{n}$.
\end{proof}

Applying the above corollary yields 
\begin{align*}
&\Vert \big(\mathbf{I}_{N_{R}} + \mathbf{H}\mathbf{Q} \mathbf{H}\herm\big)^{-1} - \big(\mathbf{I}_{N_{R}} +  \widetilde{\mathbf{H}}\mathbf{Q} \widetilde{\mathbf{H}}\herm\big)^{-1}\Vert_{F} \\
\overset{(a)}{\leq}& N_{R} \Vert \mathbf{H}\mathbf{Q} \mathbf{H}\herm - \widetilde{\mathbf{H}}\mathbf{Q} \widetilde{\mathbf{H}}\herm\Vert_{F}\\ 
= & N_{R} \Vert \mathbf{H}\mathbf{Q} \mathbf{H}\herm -\widetilde{\mathbf{H}}\mathbf{Q} \mathbf{H}\herm + \widetilde{\mathbf{H}}\mathbf{Q} \mathbf{H}\herm - \widetilde{\mathbf{H}}\mathbf{Q} \widetilde{\mathbf{H}}\herm\Vert_{F}\\
\overset{(b)}{\leq}& N_{R} \big( \Vert \mathbf{H}\mathbf{Q} \mathbf{H}\herm -\widetilde{\mathbf{H}}\mathbf{Q} \mathbf{H}\herm\Vert_{F} + \Vert \widetilde{\mathbf{H}}\mathbf{Q} \mathbf{H}\herm - \widetilde{\mathbf{H}}\mathbf{Q} \widetilde{\mathbf{H}}\herm\Vert_{F}\big)\\ 
\overset{(c)}{\leq}&  N_{R} \big( \Vert \mathbf{Q}\Vert_{F} \Vert \mathbf{H}\herm\Vert_{F}  \Vert \mathbf{H} - \widetilde{\mathbf{H}}\Vert_{F} + \Vert \widetilde{\mathbf{H}}\Vert_{F}  \Vert \mathbf{Q}\Vert_{F} \Vert \mathbf{H}\herm- \widetilde{\mathbf{H}}\herm\Vert_{F}  \big)\\ 
\overset{(d)}{\leq}& N_{R} \bar{P} ( 2B + \delta)\delta
\end{align*}
where (a) follows from Corollary \ref{cor:bound-from-mean-value-theoerm}; (b) and (c) follows from Fact \ref{fact:1}; and (d) follows from the fact that $\Vert \mathbf{H}\Vert_{F} \leq B$ and $\Vert \widetilde{\mathbf{H}} - \mathbf{H}\Vert_{F} \leq \delta$, $\Vert \widetilde{\mathbf{H}} \Vert_{F} \leq B + \delta$, and the fact that $\Vert \mathbf{Q}\Vert_{F}\leq \text{tr}(\mathbf{Q}) \leq \bar{P}$, which is implied by Fact \ref{fact:frobenius-trace-inequality} and $\mathbf{Q}\in \widetilde{\mathcal{Q}}$.  

Plugging equations $\Vert \widetilde{\mathbf{H}} \Vert_{F} \leq B + \delta$, $\Vert \big(\mathbf{I}_{N_{R}} + \mathbf{H}\mathbf{Q} \mathbf{H}\herm\big)^{-1} \Vert_{F} \leq \sqrt{N_{R}}$ and $\Vert \big(\mathbf{I}_{N_{R}} + \mathbf{H}\mathbf{Q} \mathbf{H}\herm\big)^{-1} - \big(\mathbf{I}_{N_{R}} +  \widetilde{\mathbf{H}}\mathbf{Q} \widetilde{\mathbf{H}}\herm\big)^{-1}\Vert_{F}\leq N_{R} \bar{P} ( 2B + \delta)\delta$ into equation \eqref{eq:pf-gradient-bound-part2-eq1} yields $\Vert\mathbf{D}(t-1) -  \widetilde{\mathbf{D}}(t-1) \Vert_{F} \leq \sqrt{N_{R}} B \delta +  \sqrt{N_{R}}(B + \delta) \delta + (B + \delta)^{2} N_{R} \bar{P} ( 2B + \delta)\delta  
= \big(\sqrt{N_{R}} B +  \sqrt{N_{R}} (B + \delta)  + (B + \delta)^{2} N_{R} \bar{P} ( 2B + \delta)  \big)\delta$.

\subsection{Proof of part 3:}

This part follows from $\Vert \widetilde{\mathbf{D}}(t-1)\Vert_{F} \leq \Vert \widetilde{\mathbf{D}}(t-1) - \mathbf{D}(t-1)\Vert_{F} + \Vert \mathbf{D}(t-1)\Vert_{F}$.

\bibliographystyle{IEEEtran}
%%% argument is your BibTeX string definitions and bibliography database(s)
\bibliography{IEEEfull,mybibfile}

\end{document}